\def\llncs{0}
\def\fullpage{1}
\def\anonymous{0}
\def\authnote{0}
\def\notxfont{0}
\def\submission{0}
\def\cameraready{0}
\def\pure{1}
\def\llncs{1}
\definecolor{darkblue}{rgb}{0,0,0.6}
\definecolor{darkgreen}{rgb}{0,0.5,0}
\definecolor{maroon}{rgb}{0.5,0.1,0.1}
\definecolor{dpurple}{rgb}{0.2,0,0.65}
\DeclareMathAlphabet{\mathpzc}{OT1}{pzc}{m}{it}
\newtheoremstyle{thicktheorem}%
{\topsep}
{\topsep}
{\itshape}{}%
{\bfseries}%
{.}
{ }%
{\thmname{#1}\thmnumber{ #2}%
		\thmnote{ (#3)}%
}
\newtheoremstyle{remark}
{\topsep}
{\topsep}
	{}
	{}
	{}
	{.}
	{ }
	{\textit{\thmname{#1}}\thmnumber{ #2}
			\thmnote{ (#3)}%
	}
	\theoremstyle{thicktheorem}
	\newtheorem{theorem}{Theorem}[section]
	\newtheorem{lemma}[theorem]{Lemma}
	\newtheorem{definition}[theorem]{Definition}
	\theoremstyle{remark}
	\newtheorem{remark}[theorem]{Remark}
\Crefname{MyClaim}{Claim}{Claims}
	\crefname{theorem}{Theorem}{Theorems}
	\crefname{assumption}{Assumption}{Assumptions}
	\crefname{construction}{Construction}{Constructions}
	\crefname{corollary}{Corollary}{Corollaries}
	\crefname{conjecture}{Conjecture}{Conjectures}
	\crefname{definition}{Definition}{Definitions}
	\crefname{exmaple}{Example}{Examples}
	\crefname{experiment}{Experiment}{Experiments}
	\crefname{counterexample}{Counterexample}{Counterexamples}
	\crefname{lemma}{Lemma}{Lemmata}
	\crefname{observation}{Observation}{Observations}
	\crefname{proposition}{Proposition}{Propositions}
	\crefname{remark}{Remark}{Remarks}
	\crefname{claim}{Claim}{Claims}
	\crefname{fact}{Fact}{Facts}
	\crefname{note}{Note}{Notes}
 \crefname{appendix}{App.}{Appendices}
 \crefname{section}{Sec.}{Sections}
\renewcommand*{\backref}[1]{}
	\renewcommand*{\backref}[1]{(Cited on page~#1.)}
\newcommand{\mor}[1]{}
\newcommand{\ryo}[1]{}
\newcommand{\takashi}[1]{}
\newcommand{\fuyuki}[1]{}
\newcommand{\mor}[1]{$\ll$\textsf{\color{red} Tomoyuki: { #1}}$\gg$}
\newcommand{\takashi}[1]{$\ll$\textsf{\color{orange} Takashi: { #1}}$\gg$}
\newcommand{\ryo}[1]{$\ll$\textsf{\color{darkgreen} Ryo: { #1}}$\gg$}
\newcommand{\fuyuki}[1]{$\ll$\textsf{\color{darkblue} Fuyuki: { #1}}$\gg$}
\newcommand{\calY}{\mathcal{Y}}
\newcommand{\calX}{\mathcal{X}}
\newcommand{\la}{\leftarrow}
\newcommand{\seteq}{\coloneqq}
\newcommand{\concat}{\|}
\newcommand{\abs}[1]{|#1|}
\newcommand{\cA}{\mathcal{A}}
\newcommand{\cB}{\mathcal{B}}
\newcommand{\cC}{\mathcal{C}}
\newcommand{\cX}{\mathcal{X}}
\newcommand{\cY}{\mathcal{Y}}
\def\makeuppercase#1{
\expandafter\newcommand\csname tl#1\endcsname{\widetilde{#1}}
}
\def\makelowercase#1{
\expandafter\newcommand\csname tl#1\endcsname{\widetilde{#1}}
}
\newcommand{\regC}{\mathbf{C}}
\newcommand{\regE}{\mathbf{E}}
\newcommand{\regR}{\mathbf{R}}
\newcommand{\regD}{\mathbf{D}}
\newcommand{\regB}{\mathbf{B}}
\newcommand{\regA}{\mathbf{A}}
\newcommand{\secp}{\lambda}
\newcommand{\sig}{\sigma}
\newcommand{\A}{\entity{A}}
\newcommand{\B}{\entity{B}}
\newcommand{\CCA}{\textrm{CCA}}
\newcommand{\hybi}[1]{\mathsf{Hyb}_{#1}}
\newcommand*{\sk}{\keys{sk}}
\newcommand*{\pk}{\keys{pk}}
\newcommand*{\rk}{\keys{rk}}
\newcommand{\ct}{\keys{ct}}
\newcommand*{\vk}{\keys{vk}}
\newcommand*{\msg}{\keys{msg}}
\newcommand*{\keys}[1]{\mathsf{#1}}
\newcommand*{\algo}[1]{\ensuremath{\mathsf{#1}}}
\newcommand*{\entity}[1]{\mathcal{#1}}
\newenvironment{boxfig}[2]{\begin{figure}[#1]\fbox{\begin{minipage}{0.97\linewidth}
                        \vspace{0.2em}
                        \makebox[0.025\linewidth]{}
                        \begin{minipage}{0.95\linewidth}
            {{
                        #2 }}
                        \end{minipage}
                        \vspace{0.2em}
                        \end{minipage}}}{\end{figure}}
\newcommand{\pprotocol}[4]{
\begin{boxfig}{h}{\footnotesize 
\centering{\textbf{#1}}
    #4
\vspace{0.2em} } \caption{\label{#3} #2}
\end{boxfig}
}
\newcommand{\protocol}[4]{
\pprotocol{#1}{#2}{#3}{#4} }
\newcommand{\bit}{\{0,1\}}
\newcommand{\Gen}{\algo{Gen}}
\newcommand{\SKGen}{\algo{SKGen}}
\newcommand{\PKGen}{\algo{PKGen}}
\newcommand{\Enc}{\algo{Enc}}
\newcommand{\rEnc}{\algo{rEnc}}
\newcommand{\Dec}{\algo{Dec}}
\newcommand{\Sign}{\algo{Sign}}
\newcommand{\sign}{\algo{Sign}}
\newcommand{\Ver}{\algo{Ver}}
\newcommand{\st}{\algo{st}}
\newcommand{\PRF}{\algo{PRF}}
\newcommand\SKE{\algo{SKE}}
\newcommand{\ske}{\mathsf{ske}}
\newcommand{\QPKE}{\algo{QPKE}}
\newcommand\SIG{\algo{SIG}}
\newcommand{\negl}{{\mathsf{negl}}}
\newcommand{\poly}{{\mathrm{poly}}}
\DeclareRobustCommand
\title{
Quantum Public-Key Encryption 
with
Tamper-Resilient 
Public Keys
from One-Way Functions}
\author{\empty}\institute{\empty}
\author{}
\author{
Fuyuki Kitagawa\inst{2,3} \and
	Tomoyuki Morimae\inst{1} \and Ryo Nishimaki\inst{2,3}\and Takashi Yamakawa\inst{1,2,3}
}
\institute{
Yukawa Institute for Theoretical Physics, Kyoto University, Kyoto, Japan \and NTT Social Informatics Laboratories, Tokyo, Japan \and NTT Research Center for Theoretical Quantum Information, Atsugi, Japan
}
\author[1,3]{Fuyuki Kitagawa}
\author[2]{Tomoyuki Morimae}
\author[1,3]{Ryo Nishimaki}
\author[1,2,3]{Takashi Yamakawa}
\affil[1]{{\small NTT Social Informatics Laboratories, Tokyo, Japan}\authorcr{\small 
\{fuyuki.kitagawa,ryo.nishimaki,takashi.yamakawa\}@ntt.com}}
\affil[2]{{\small Yukawa Institute for Theoretical Physics, Kyoto University, Kyoto, Japan}\authorcr{\small tomoyuki.morimae@yukawa.kyoto-u.ac.jp}}
\affil[3]{{\small NTT Research Center for Theoretical Quantum Information, Atsugi, Japan}}
\date{}
\begin{document}

\maketitle

\begin{abstract}
We construct quantum public-key encryption from one-way functions.
In our construction, public keys are quantum, but ciphertexts are classical.
Quantum public-key encryption from one-way functions (or weaker primitives such as
pseudorandom function-like states) are also proposed in some recent works [Morimae-Yamakawa, eprint:2022/1336; Coladangelo, eprint:2023/282; Barooti-Grilo-Malavolta-Sattath-Vu-Walter, TCC 2023]. However, they have a huge drawback: they are secure only when quantum public keys can be transmitted to the sender (who runs the encryption algorithm) without being tampered with by the adversary, which seems to require unsatisfactory physical setup assumptions such as secure quantum channels.
Our construction is free from such a drawback: it guarantees the secrecy of the encrypted messages even if we assume only unauthenticated quantum channels. Thus, the encryption is done with adversarially tampered quantum public keys.
Our construction is the first quantum public-key encryption that achieves the goal of classical public-key encryption, namely, to establish secure communication over insecure channels, based only on one-way functions.  
Moreover, we show a generic compiler to upgrade security against chosen plaintext attacks (CPA security) into security against chosen ciphertext attacks (CCA security) only using one-way functions.
As a result, we obtain CCA secure quantum public-key encryption based only on one-way functions.
\end{abstract}

\if0

\ifnum\submission=1
\else
\clearpage
\newpage
\setcounter{tocdepth}{2}
\tableofcontents
\fi

\fi


\section{Introduction}
\label{sec:introduction}

\subsection{Background}
Quantum physics provides several advantages in cryptography.
For instance, statistically-secure key exchange, which is impossible in classical cryptography, becomes possible
if quantum states are transmitted~\cite{BB84}. 
Additionally, oblivious transfers and multiparty computations are possible only from one-way functions (OWFs) in the quantum world~\cite{C:BCKM21b,EC:GLSV21}. Those cryptographic primitives are believed to require stronger structured assumptions in classical cryptography~\cite{STOC:ImpRud89,FOCS:GKMRV00}.
Furthermore, it has been shown that several cryptographic tasks, such as (non-interactive) commitments, digital signatures, secret-key encryption, quantum money, and multiparty computations, are possible based on new primitives such as pseudorandom states generators, pseudorandom function-like states generators, one-way states generators, and EFI, which seem to be weaker than OWFs~\cite{C:JiLiuSon18,Kre21,C:MorYam22,C:AnaQiaYue22,ITCS:BraCanQia23,TCC:AGQY22,EPRINT:CaoXue22b,EPRINT:MorYam22c,STOC:KQST23}. 

\paragraph{\bf Quantum public key encryption from OWFs.}
Despite these developments, it is still an open problem whether public-key encryption (PKE) is possible with 
only OWFs (or the above weaker primitives) in the quantum world. 
PKE from OWFs is impossible (in a black-box way) in the classical cryptography~\cite{C:ImpRud88}. However, it could be possible if quantum states are transmitted or local operations are quantum.
In fact, some recent works~\cite{EPRINT:MorYam22c,cryptoeprint:2023/282,TCC:BGHMSVW23} independently constructed quantum PKE (QPKE) with quantum public keys based on OWFs or pseudorandom function-like states generators.
However, the constructions proposed in those works have a huge drawback as explained below, and thus we still do not have a satisfactory solution to the problem of ``QPKE from OWFs''.

\paragraph{\bf How to certify quantum public keys?}
When we study public key cryptographic primitives, we have to care about how to certify the public keys, that is, how a sender (who encrypts messages) can check if a given public key is a valid public key under which the secrecy of the encrypted messages is guaranteed.
When the public keys are classical strings, we can easily certify them using digital signature schemes.
However, in the case where the public keys are quantum states, we cannot use digital signature schemes to achieve this goal in general\footnote{
There is a general impossibility result for signing quantum states~\cite{signingquantumstates}.
}, and it is unclear how to certify them.

As stated above, some recent works~\cite{EPRINT:MorYam22c,cryptoeprint:2023/282,TCC:BGHMSVW23} realized QPKE with quantum public keys from OWFs or even weaker assumptions. However, those works did not tackle this quantum public key certification problem very much.
In fact, as far as we understand, to use the primitives proposed in those works meaningfully, we need to use secure quantum channels to transmit the quantum public keys so that a sender can use an intact quantum public key.
This is a huge drawback since the goal of PKE is to transmit a message \emph{without assuming secure channels}.
If the sender can establish a secure channel to obtain the quantum public key, the sender could use it to transmit the message in the first place, and there is no advantage to using the PKE scheme.

\paragraph{\bf QPKE with tamper-resilient quantum public keys.}
One of our goals in this work is to solve this issue and develop a more reasonable notion of QPKE with quantum public keys.
Especially, we consider the setting with the following two natural conditions.
First, we assume that every quantum state (that is, quantum public keys in this work) is sent via an unauthenticated channel, and thus it can be tampered with by an adversary.
If we do not assume secure quantum channels, we have to take such a tampering attack into account since authentication generally requires secrecy for quantum channels~\cite{FOCS:BCGST02}.
Second, we assume that every classical string is sent via an authenticated channel.
This is the same assumption in classical PKE and can be achieved using digital signatures.
Note that the security of the constructions proposed in the above works~\cite{EPRINT:MorYam22c,cryptoeprint:2023/282,TCC:BGHMSVW23} is broken in this natural setting.
In this work, we tackle whether we can realize QPKE with quantum public keys that provides a security guarantee in this natural setting, especially from OWFs.

%

\subsection{Our Results}\label{sec:our_results}
We affirmatively answer the above question. We realize the first QPKE scheme based only on OWFs that achieves the goal of classical PKE, which is to establish secure communication over insecure channels.
We define the notions of QPKE that can be used in the above setting where unauthenticated quantum channels and classical authenticated channels are available.
Then, we propose constructions satisfying the definitions from OWFs.
Below, we state each result in detail.

\paragraph{\bf Definitional work.}
We redefine the syntax of QPKE.
The difference from the previous definitions is that the key generation algorithm outputs a classical verification key together with the secret key.
Also, this verification key is given to the encryption algorithm with a quantum public key and a message so that the encryption algorithm can check the validity of the given quantum public key.
We require ciphertexts to be classical.\footnote{We could also consider QPKE schemes with quantum ciphertexts if we only consider IND-pkT-CPA security. However, it is unclear how we should define IND-pkT-CCA security for such schemes because the decryption oracle cannot check if a given ciphertext is equivalent to the challenge ciphertext. Thus, we focus on schemes with classical ciphertexts in this paper.}
We require a QPKE scheme to satisfy the following two basic security notions.
\begin{itemize}
\item \textbf{Indistinguishability against public key tempering chosen plaintext attacks (IND-pkT-CPA security).}
Roughly speaking, it guarantees that indistinguishability holds even if messages are encrypted by a public key tampered with by an adversary.
More specifically, it guarantees that no efficient adversary can guess the challenge bit $b$ with a probability significantly better than random guessing given $\Enc(\vk,\pk^\prime,\msg_b)$, where $\vk$ is the correct verification key and $(\pk^\prime,\msg_0,\msg_1)$ are generated by the adversary who is given the verification key $\vk$ and multiple copies of the correctly generated quantum public keys.\footnote{The tampered quantum public key $\pk'$ can
be entangled with the adversary's internal state.}
IND-pkT-CPA security captures the setting where the classical verification key is sent via a classical authenticated channel. Thus, everyone can obtain the correct verification key. However, a quantum public key is sent via an unauthenticated quantum channel and thus can be tampered with by an adversary.
\item \textbf{Decryption error detectability.}  
In our setting, an adversary may try to cause a decryption error by tampering with the quantum public key.
To address this issue, we introduce a security notion that we call \emph{decryption error detectability}.
It roughly guarantees that a legitimate receiver of a ciphertext can notice if the decrypted message is different from the message intended by the sender.
\end{itemize}
IND-pkT-CPA security considers adversaries that may tamper with quantum public keys but only passively observe ciphertext. For classical PKE, the golden standard security notion is indistinguishability against chosen ciphertext attacks (IND-CCA security) that considers active adversaries that may see decryption results of any (possibly malformed) ciphertexts. Thus, we also define its analog for QPKE.  In \Cref{sec:discussion}, we discuss its importance in a natural application scenario.  
\begin{itemize} 
\item \textbf{Indistinguishability against public key tempering chosen ciphertext attacks (IND-pkT-CCA security).}
This is similar to IND-pkT-CPA security except that the adversary is given access to the decryption oracle that returns a decryption result on any ciphertext other than the challenge ciphertext.\footnote{Recall that ciphertexts are classical in our definition, and thus this is well-defined.} Moreover, we allow the adversary to learn one-bit information indicating if the challenge ciphertext is decrypted to $\bot$ or not. We note that it is redundant for classical PKE since the challenge ciphertext is always decrypted to the challenge message, which is not $\bot$, by decryption correctness. On the other hand, it may give more power to the adversary for QPKE since if the adversary tempers with the public key that is used to generate the challenge ciphertext, decryption correctness may no longer hold.  
\end{itemize} 
\if0
We then define a security notion for QPKE that we call \emph{indistinguishability against public key tempering attacks (IND-pkTA security)}.
Roughly speaking, it guarantees that indistinguishability holds even if messages are encrypted by a public key tampered with by an adversary.
More specifically, it guarantees that no efficient adversary can guess the challenge bit $b$ with a probability significantly better than random guessing given $\Enc(\vk,\pk^\prime,\msg_b)$, where $\vk$ is the correct verification key and $(\pk^\prime,\msg_0,\msg_1)$ are generated by the adversary who is given the verification key $\vk$ and multiple copies of the correctly generated quantum public keys.
IND-pkTA security captures the setting where the classical verification key is sent via a classical authenticated channel. Thus, everyone can obtain the correct verification key. However, a quantum public key is sent via an unauthenticated quantum channel and thus can be tampered with by an adversary.
\fi

\paragraph{\bf IND-pkT-CPA secure construction from OWFs.}
We propose a QPKE scheme satisfying IND-pkT-CPA security from a digital signature scheme that can be constructed from OWFs.
Our construction is inspired by the duality between distinguishing and swapping shown by Aaronson, Atia, and Susskind~\cite{AAS20} and its cryptographic applications by Hhan, Morimae, and Yamakawa~\cite{EC:HhaMorYam23}.
Our construction has quantum public keys and classical ciphertexts.
We also propose a general transformation that adds decryption error detectability.
The transformation uses only a digital signature scheme.

\paragraph{\bf Upgrading to IND-pkT-CCA security.} 
We show a generic compiler that upgrades IND-pkT-CPA security into IND-pkT-CCA security while preserving decryption error detectability only using OWFs. 
It is worth mentioning that constructing such a generic CPA-to-CCA compiler is a long-standing open problem for classical PKE, and thus we make crucial use of the fact that public keys are quantum for constructing our compiler.     
By plugging our IND-pkT-CPA secure construction into the compiler, we obtain a QPKE scheme that satisfies IND-pkT-CCA security and decryption error detectability only based on OWFs.

\paragraph{\bf Recyclable variant.}
Our above definitions for QPKE assume each quantum public key is used to encrypt only a single message and might be consumed. 
We also introduce a notion of \emph{recyclable QPKE} where the encryption algorithm given a quantum public key outputs a ciphertext together with a classical state that can be used to encrypt a message many times. 
Then, we show that any standard IND-pkT-CPA (resp. IND-pkT-CCA) secure QPKE scheme with classical ciphertexts can be transformed into a recyclable one with IND-pkT-CPA (resp. IND-pkT-CCA) security while preserving decryption error detectability.
The transformation uses only a CPA (resp. CCA) secure classical symmetric key encryption scheme that is implied by OWFs.
Thus, by combining the transformation with the above results, we obtain a recyclable IND-pkT-CCA QPKE scheme with decryption error detectability from OWFs.

\subsection{Discussion}\label{sec:discussion}
\paragraph{\bf Pure State Public Keys vs. Mixed State Public Keys.} 
The quantum public keys of our QPKE schemes are mixed states.
Some recent works~\cite{cryptoeprint:2023/282,TCC:BGHMSVW23}  that studied QPKE explicitly require that a quantum public key of QPKE be a pure quantum state.
The reason is related to the quantum public key certification problem, which is this work's  main focus.
Barooti et al.~\cite{TCC:BGHMSVW23} claimed that a sender can check the validity of given quantum public keys by using SWAP test if they are pure states, but not if they are mixed states.
However, as far as we understand, this claim implicitly requires that at least one intact quantum public key be transmitted via secure quantum channels where an adversary cannot touch it at all\footnote{
More precisely, their model seems to require a physical setup assumption that enables a sender to obtain at least one intact quantum public key, such as secure quantum channels or tamper-proof quantum hardware.
}, which is an unsatisfactory assumption that makes QPKE less valuable.
It is unclear how a sender can check the validity of a given quantum public key in the constructions proposed in \cite{TCC:BGHMSVW23} without assuming such secure transmission of intact quantum public keys.

We believe that it is not important whether the quantum public keys are pure states or mixed states, and what is really important is whether a sender can check the validity of given quantum public keys without assuming unsatisfactory setups such as quantum secure channels.
Although our QPKE schemes have mixed state quantum public keys, they provide such a validity checking of quantum public keys by a sender without assuming any unsatisfactory setups.
\ifnum\pure=0
\else
In addition, we can easily extend our construction into one with pure state quantum public keys. 
We provide the variant in  
\ifnum\cameraready=1
the full version. 
\else
\cref{sec:pure_pk_qpke}.
\fi
\fi

\subsection{Related Works} 
The possibility that QPKE can be achieved from weaker assumptions was first pointed out by Gottesman~\cite{GottesmanPKE}, though he did not give any concrete construction.
The first concrete construction of QPKE was proposed by Kawachi, Koshiba, Nishimura, and Yamakami~\cite{EC:KKNY05}.
They formally defined the notion of QPKE with quantum public keys, and provided a construction satisfying it from a distinguishing problem of two quantum states.
Recently, Morimae and Yamakawa~\cite{EPRINT:MorYam22c} pointed out that QPKE defined by~\cite{EC:KKNY05} can be achieved from any classical or quantum symmetric key encryption almost trivially.
The constructions proposed in these two works have mixed state quantum public keys.
Then, subsequent works~\cite{cryptoeprint:2023/282,cryptoeprint:BGHMSVW23} independently studied the question whether QPKE with pure state quantum public keys can be constructed from OWFs or even weaker assumptions.

The definition of QPKE studied in the above works essentially assume that a sender can obtain intact quantum public keys.
As far as we understand, this requires unsatisfactory physical setup assumptions such as secure quantum channels or tamper-proof quantum hardware, regardless of whether the quantum public keys are pure states or mixed states.
In our natural setting where an adversary can touch the quantum channel where quantum public keys are sent, the adversary can easily attack the previous constructions by simply replacing the quantum public key on the channel with the one generated by itself that the adversary knows the corresponding secret key.
We need to take such adversarial behavior into consideration, unless we assume physical setup assumptions that deliver intact quantum public keys to the sender.
Our work is the first one that proposes a QPKE scheme secure in this natural setting assuming only classical authenticated channels that is the same assumption as classical PKE and can be implemented by digital signature schemes.
It is unclear if we could solve the problem in the previous constructions by using classical authenticated channels similarly to our work.
Below, we review the constructions of QPKE from OWFs proposed in the recent works.

\ifnum\cameraready=0
The construction by Morimae and Yamakawa~\cite{EPRINT:MorYam22c} is highly simple.
A (mixed state) public key of their construction is of the form $(\ct_0,\ct_1)$, where $\ct_b$ is an encryption of $b$ by a symmetric key encryption scheme.
The encryption algorithm with input message $b$ simply outputs $\ct_b$.

Coladangelo~\cite{cryptoeprint:2023/282} constructed a QPKE scheme with
quantum public keys and quantum ciphertexts from pseudorandom functions (PRFs), which are constructed from OWFs.
The public key is 
\begin{equation}
\ket{\pk}\coloneqq\sum_y(-1)^{\PRF_k(y)}\ket{y},
\end{equation}
and the secret key is $k$.
The ciphertext for the plaintext $m$ is
\begin{equation}
(Z^x\ket{\pk}=\sum_y(-1)^{x\cdot y+\PRF_k(y)}\ket{y},
r,r\cdot x\oplus m),
\end{equation}
where $r$ is chosen uniformly at random.

 Barooti, Grilo, Huguenin-Dumittan, Malavolta, Sattath, Vu, and Walter~\cite{cryptoeprint:BGHMSVW23} constructed three QPKE schemes: (1) CCA secure QPKE with
quantum public keys and classical ciphertexts from OWFs (2) CCA1\footnote{Afther the adversary received a challenge ciphertext, they cannot access the decryption oracle.} secure QPKE with quantum public keys and ciphertexts from pseudorandom function-like states generators, (3) CPA secure QPKE with quantum public keys and classical ciphertexts from pseudo-random function-like states with proof of destruction.
All constructions considers security under the encryption oracle.
We review their construction based on OWFs.

Their construction is hybrid encryption of CPA secure QPKE (the KEM part) and CCA secure classical symmetric key encryption (the DEM part).
The public key is
\begin{equation}
\ket{\pk}\coloneqq   \sum_x\ket{x}\ket{\PRF_k(x)},
\end{equation}
and the secret key is $k$.
The encryption algorithm first measures $\ket{\pk}$ in the computational basis to get
$(x,\PRF_k(x))$ and outputs $(x,\mathsf{SKE}.\Enc(\PRF_k(x),m))$
as the ciphertext for the plaintext $m$, where $\mathsf{SKE}.\Enc$
is the encryption algorithm of a symmetric key encryption scheme.
\fi

We finally compare Quantum Key Distribution (QKD)~\cite{BB84} with our notion of QPKE. QKD also enables us to establish secure communication over an untrusted quantum channel assuming that an authenticated classical channel is available similarly to our QPKE. An advantage of QKD is that it is information theoretically secure and does not need any computational assumption. 
On the other hand, it has disadvantages that it must be interactive and parties must record secret information for each session. Thus, it is incomparable to the notion of QPKE.

\subsection{Concurrent Work}
A concurrent and independent work by Malavolta and Walter~\cite{NewMW23} constructs a two-round quantum key exchange protocol from OWFs. Their underlying idea is similar to our IND-pkT-CPA secure construction. Indeed, the technical core of their work is a construction a QPKE scheme that is secure against adversaries that only see one copy of the quantum public key. A nice feature of their scheme is that it satisfies everlasting security. That is, as long as the adversary is quantum polynomial-time when tampering with the public key, it cannot recover any information of the encrypted message even if it has an unbounded computational power later. They also show how to extend the scheme to satisfy security in the many-copy setting at the cost of sacrificing everlasting security. This gives an alternative construction of IND-pkT-CPA secure QPKE scheme from OWFs using our terminology.
On the other hand, they do not consider CCA security, and  
our CPA-to-CCA compiler is unique to this work.

\subsection{Open Problems}
In our construction, public keys are quantum states. It is an open problem whether
QPKE with classical public keys are possible from OWFs.
Another interesting open problem is whether we can construct QPKE defined in this work from an even weaker assumption than OWFs such as pseudorandom states generators.


In our model of QPKE, a decryption error 
may be caused by tampering attacks on the quantum public key.
To address this issue, we introduce the security notion we call decryption error detectability that guarantees that a legitimate receiver of a ciphertext can notice if the decrypted message is different from the message intended by the sender.
We could consider even stronger variant of decryption error detectability that requires that a sender can notice if a given quantum public key does not provide decryption correctness.
It is an open problem to construct a QPKE scheme satisfying such a stronger decryption error detectability.

The notion of IND-pkT-CCA security is defined with respect to a classical decryption oracle. In fact, this is inherent for our proof technique. We leave it open to construct a tamper-resilient QPKE scheme that resists attacks with a quantumly-accessible decryption oracle.

\newcommand{\INDpkTA}{\textrm{IND-pkT-CPA}}
\newcommand{\INDCVApkTA}{\textrm{IND-pkT-CVA}}
\newcommand{\INDCCApkTA}{\textrm{IND-pkT-CCA}}
\newcommand{\INDoneCCApkTA}{\textrm{IND-pkT-1CCA}}
\newcommand{\INDpkTCPA}{\textrm{IND-pkT-CPA}}
\newcommand{\INDpkTCVA}{\textrm{IND-pkT-CVA}}
\newcommand{\INDpkTCCA}{\textrm{IND-pkT-CCA}}
\newcommand{\INDpkToneCCA}{\textrm{IND-pkT-1CCA}}
\newcommand{\cv}{\mathsf{cv}}
\newcommand{\onekey}[1]{#1^{(1)}}

\newcommand{\CVA}{\mathsf{CVA}}
\newcommand{\oneCCA}{\mathsf{1CCA}}
\renewcommand{\CCA}{\mathsf{CCA}}
\newcommand{\onecca}{\mathsf{1cca}}
\newcommand{\cva}{\mathsf{cva}}
\newcommand{\Test}{\mathsf{Test}}
\newcommand{\sigvklen}{n}
\newcommand{\sigvk}{\mathsf{sigvk}}
\newcommand{\sigk}{\mathsf{sigk}}
\newcommand{\Good}{\mathtt{Good}}
\newcommand{\Forge}{\mathtt{Forge}}
\newcommand{\DecError}{\mathtt{DecError}}

\newcommand{\TMAC}{\mathsf{TMAC}}
\newcommand{\tmac}{\mathsf{tmac}}
\newcommand{\TKGen}{\mathsf{TKGen}}
\newcommand{\token}{\mathsf{token}}
\newcommand{\mk}{\mathsf{mk}}

\renewcommand{\sig}{\mathsf{sig}}

\newcommand{\Onekey}{\mathsf{1Key}}
\newcommand{\okey}{\mathsf{1key}}
\newcommand{\Mkey}{\mathsf{MKey}}
\newcommand{\snum}{\mathsf{snum}}
\newcommand{\keylist}{\mathsf{KL}}

\section{Technical Overview}
\ifnum\llncs=0
We provide a technical overview of our work.
\fi
\subsection{Definition of QPKE}
\paragraph{\bf Syntax.}
We define QPKE that can be used in the setting where quantum unauthenticated channels and classical authenticated channels are available.
To this end, we introduce the following two modifications to the previous definitions.
\begin{itemize}
\item The secret key generation algorithm outputs a classical verification key together with the secret key.
\item The verification key is given to the encryption algorithm together with a quantum public key and a message so that the encryption algorithm can check the validity of the given quantum public key.
\end{itemize}
Concretely, in our definition, a QPKE scheme consists of four algorithms $(\SKGen,\PKGen,\allowbreak \Enc,\Dec)$.
$\SKGen$ is a classical secret key generation algorithm that is given the security parameter and outputs a classical secret key $\sk$ and a classical verification key $\vk$.
$\PKGen$ is a quantum public key generation algorithm that takes as input the classical secret key $\sk$ and outputs a quantum public key $\pk$.
$\Enc$ is a quantum encryption algorithm that takes as inputs the classical verification key $\vk$, a quantum public key $\pk$, and a plaintext $\msg$, and outputs a classical ciphertext $\ct$.
Finally, $\Dec$ is a classical decryption algorithm that takes as input the classical secret key and a ciphertext, and outputs the decryption result.

The above definitions for QPKE assume each quantum public key is used to encrypt only a single message and might be consumed. 
We also introduce a notion of recyclable QPKE where the encryption algorithm given a quantum public key outputs a ciphertext together with a classical state that can be used to encrypt a message many times.
In this overview, we mainly focus on non-recyclable QPKE for simplicity.

\paragraph{\bf IND-pkT-CPA security.}
IND-pkT-CPA security roughly guarantees that indistinguishability holds even if messages are encrypted by a public key $\pk^\prime$ tampered with by an adversary as long as the encryption is done with the correct verification key $\vk$.
Formally, IND-pkT-CPA security is defined using the following security experiment played by an adversary $\cA$.
The experiment first generates classical secret key and verification key pair $(\sk,\vk)\la\SKGen(1^\secp)$ and $m$ copies of the quantum public key $\pk_1,\ldots,\pk_m\la\PKGen(\sk)$.
Then, $\cA$ is given the classical verification key $\vk$ and $m$ quantum public keys $\pk_1,\ldots,\pk_m$, and outputs a tampered quantum public key $\pk^\prime$ and a pair of challenge plaintexts $(\msg_0,\msg_1)$.
The experiment generates the challenge ciphertext using the adversarially generated quantum public key, that is, $\ct^*\la\Enc(\vk,\pk^\prime,\msg_b)$, where $b\la\bit$.
Finally, $\cA$ is given $\ct^*$ and outputs the guess for $b$.
IND-pkT-CPA security guarantees that any efficient quantum adversary cannot guess $b$ significantly better than random guessing in this experiment.

IND-pkT-CPA security captures the setting 
where the classical verification key is sent via a classical authenticated channel and thus everyone can obtain correct verification key, but a quantum public key is sent via an unauthenticated quantum channel and thus can be tampered with by an adversary.
Especially, it captures an adversary $\cA$ who steals a quantum public key $\pk$ sent to a user, replace it with a tampered one $\pk^\prime$, and try to break the secrecy of a message encrypted by $\pk^\prime$.

To capture wide range of usage scenarios, we give multiple copies of the quantum public keys $\pk_1,...,\pk_m$ to $\cA$.
We also consider a relaxed security notion where an adversary is given a single quantum public key and denote it as $\onekey{\INDpkTA}$.

\paragraph{\bf Decryption error detectability.}
We also define a security notion related to the correctness notion that we call decryption error detectability.
It roughly guarantees that a legitimate receiver of a ciphertext can notice if the decrypted message is different from the message intended by the sender.
Such a decryption error could occur frequently in our setting as a result of the tampering attacks on the quantum public key sent via an unauthenticated quantum channel.
Note that our definition of QPKE requires a ciphertext of QPKE to be a classical string and we assume all classical information is sent through a classical authenticated channel.
Thus, similarly to the verification key, we can assume that ciphertexts can be sent without being tampered.

\subsection{IND-pkT-CPA Secure Construction}
We provide the technical overview for IND-pkT-CPA secure construction.

\paragraph{\bf Duality between distinguishing and swapping.} 
Our construction is inspired by the duality between distinguishing and swapping shown by Aaronson, Atia, and Susskind~\cite{AAS20} and its cryptographic applications by Hhan, Morimae, and Yamakawa~\cite{EC:HhaMorYam23}.\footnote{In the main body, we do not explicitly use any result of \cite{AAS20,EC:HhaMorYam23} though our analysis is similar to theirs.} We first review their idea. Let $\ket{\psi}$ and $\ket{\phi}$ be orthogonal states. \cite{AAS20} showed that $\ket{\psi}+\ket{\phi}$ and $\ket{\psi}-\ket{\phi}$ are computationally indistinguishable\footnote{We often omit normalization factors.} if and only if one cannot efficiently ``swap'' $\ket{\psi}$ and $\ket{\phi}$ with a non-negligible advantage, i.e., for any efficiently computable unitary $U$, $|\bra{\phi}U\ket{\psi}+\bra{\psi}U\ket{\phi}|$ is negligible. 
Based on the above result, \cite{EC:HhaMorYam23} suggested to use $\ket{\psi}+(-1)^b \ket{\phi}$ as an encryption of a plaintext $b\in \bit$. By the result of \cite{AAS20}, its security is reduced to the hardness of swapping  $\ket{\psi}$ and $\ket{\phi}$. 

\paragraph{\bf Basic one-time SKE.}
We can construct one-time SKE scheme with quantum ciphertext using the above duality between distinguishing and swapping as follows.
A secret decryption key is $(x_0,x_1)$ for uniformly random bit strings $x_0,x_1\in \bit^\secp$, and the corresponding secret encryption key is
\begin{equation}
    \ket{0}\ket{x_0}+\ket{1}\ket{x_1}.
\end{equation}
Then, when encrypting a plaintext $b\in\bit$, it transforms the secret encryption key into the ciphertext
\begin{equation}
    \ket{0}\ket{x_0}+(-1)^b \ket{1}\ket{x_1}.
\end{equation}

One-time indistinguishability of this scheme is somewhat obvious because the adversary has no information of $x_0$ or $x_1$ besides the ciphertext, but let us analyze it using the idea of \cite{AAS20} to get more insights. Suppose that the above scheme is insecure, i.e., $\ket{0}\ket{x_0}+\ket{1}\ket{x_1}$ and $\ket{0}\ket{x_0}-\ket{1}\ket{x_1}$ are computationally distinguishable with a non-negligible advantage. Then, by the result of \cite{AAS20}, there is an efficient unitary $U$ that swaps  $\ket{0}\ket{x_0}$ and $\ket{1}\ket{x_1}$ with a non-negligible advantage. By using this unitary, let us  consider the following procedure:
\begin{enumerate}
    \item Given a state $\ket{0}\ket{x_0}\pm  \ket{1}\ket{x_1}$, measure it in the computational basis to get $\ket{\alpha}\ket{x_{\alpha}}$ for random $\alpha\in \bit$.
    \item Apply the unitary $U$ to $\ket{\alpha}\ket{x_{\alpha}}$ and measure it in the computational basis.
\end{enumerate}
Since $U$ swaps  $\ket{0}\ket{x_0}$ and $\ket{1}\ket{x_1}$ with a non-negligible advantage, the probability that the outcome of the second measurement is $\ket{\alpha\oplus 1}\ket{x_{\alpha\oplus1}}$ is non-negligible. 
This yields the following observation: If one can efficiently distinguish $\ket{0}\ket{x_0}+\ket{1}\ket{x_1}$ and $\ket{0}\ket{x_0}-\ket{1}\ket{x_1}$, then one can efficiently compute both $x_0$ and $x_1$ from $\ket{0}\ket{x_0}\pm  \ket{1}\ket{x_1}$. On the other hand, it is easy to show that one cannot compute both $x_0$ and $x_1$ from $\ket{0}\ket{x_0}\pm \ket{1}\ket{x_1}$ with a non-negligible probability by a simple information theoretical argument. Thus, the above argument implies one-time indistinguishability of the above construction.

\paragraph{\bf Extension to $\onekey{\INDpkTA}$ secure QPKE with quantum ciphertext.}
We show how to extend the above SKE scheme into an $\onekey{\INDpkTA}$ secure QPKE scheme with quantum ciphertext.
One natural approach is to use the secret encryption key $\ket{0}\ket{x_0}+\ket{1}\ket{x_1}$ as a quantum public key.
However, it does not work since the adversary for $\onekey{\INDpkTA}$ who is given $\ket{0}\ket{x_0}+\ket{1}\ket{x_1}$ as the public key can replace it with $\ket{0}\ket{x'_0}+\ket{1}\ket{x'_1}$ for $x'_0,x'_1$ of its choice. 
To fix this issue, we partially authenticate a quantum public key by using classical digital signatures. 
Concretely, the secret key generation algorithm $\SKGen$ generates a signing key and verification key pair $(\sk,\vk)$ of a digital signature scheme, and use them as the secret key and verification key of the QPKE scheme.
Then, public key generation algorithm $\PKGen$ takes as input $\sk$ and outputs a quantum public key
\begin{equation}
    \ket{0}\ket{\sigma(0)}+\ket{1}{\ket{\sigma(1)}}, 
\end{equation}
where $\sigma(\alpha)$ is a signature for $\alpha\in \bit$ by the signing key $\sk$. 
Here, we assume that the signature scheme has a deterministic signing algorithm.
The encryption algorithm $\Enc$ that is given $\vk$, a quantum public key, and a plaintext $b\in\bit$ first coherently verifies using $\vk$ the validity of the signatures in the second register of the public key and aborts if the verification rejects.
Otherwise, $\Enc$ generates the ciphertext by encoding the plaintext $b$ into the phase of the quantum public key as before.

The $\onekey{\INDpkTA}$ security of the construction is analyzed as follows.
We assume that the digital signature scheme satisfies strong unforgeability, i.e., given message-signature pairs $(\msg_1,\sigma_1),...,(\msg_n,\sigma_n)$, no efficient adversary can output $(\msg^\ast,\sigma^\ast)$ such that $(\msg^\ast,\sigma^\ast)\neq (\msg_i,\sigma_i)$ for all $i\in [n]$.\footnote{At this point, two-time security (where $n=2$) suffices but we finally need to allow $n$ to be an arbitrary polynomial.}
Then, no matter how the adversary who is given a single correctly generated quantum public key tampers with it, if it passes the verification in $\Enc$, the state after passing the verification is negligibly close to a state in the form of 
\begin{equation}
c_0\ket{0}\ket{\sigma(0)}\ket{\Psi_0}
+c_1\ket{1}\ket{\sigma(1)}\ket{\Psi_1}
\label{overview_pk}
\end{equation}
with some complex coefficients $c_0$ and $c_1$,
and some states $\ket{\Psi_0}$ and $\ket{\Psi_1}$ over the adversary's register (except for a negligible probability).
The encryption of a plaintext $b\in\bit$ is to apply $Z^b$ on the first qubit of \cref{overview_pk}.
The cipertext generated under the tampered public key is therefore
\begin{equation}
c_0\ket{0}\ket{\sigma(0)}\ket{\Psi_0}
+(-1)^bc_1\ket{1}\ket{\sigma(1)}\ket{\Psi_1}.
\end{equation}
By a slight extension of the analysis of the above SKE scheme, we show that if one can efficiently distinguish $c_0\ket{0}\ket{\sigma(0)}\ket{\Psi_0}
+c_1\ket{1}\ket{\sigma(1)}\ket{\Psi_1}$ and $c_0\ket{0}\ket{\sigma(0)}\ket{\Psi_0}
-c_1\ket{1}\ket{\sigma(1)}\ket{\Psi_1}$, then one can efficiently compute both $\sigma(0)$ and $\sigma(1)$. On the other hand, recall that the adversary is only given one copy of the public key $\ket{0}\ket{\sigma(0)}+\ket{1}\ket{\sigma(1)}$. We can show that it is impossible to compute both $\sigma(0)$ and $\sigma(1)$ from this state by the strong unforgeability as follows. By \cite[Lemma 2.1]{C:BonZha13},  the probability to output both $\sigma(0)$ and $\sigma(1)$ 
is only halved even if $\ket{0}\ket{\sigma(0)}+\ket{1}\ket{\sigma(1)}$ is measured in the computational basis before given to the adversary. After the measurement, the adversary's input collapses to a classical state $\ket{\alpha}\ket{\sigma(\alpha)}$ for random $\alpha\in \bit$, in which case the adversary can output $\sigma(\alpha\oplus1)$ only with a negligible probability by the strong unforgeability. 
Combining the above, security of the above scheme under tampered public keys is proven. 

\paragraph{\bf Achiving $\INDpkTA$ security.}
The above QPKE scheme satisfies $\onekey{\INDpkTA}$ security, but does not satisfy $\INDpkTA$ security where the adversary is given multiple copies of quantum public keys.
If the adversary is given two copies of the quantum public key, by measuring each public key in the computational basis,
the adversary can learn both $\sigma(0)$ and $\sigma(1)$ with probability 1/2.
In order to extend the scheme into $\INDpkTA$ security, we introduce a classical randomness for each public key generation.  
Specifically, a public key is 
\begin{equation}
    (r, \ket{0}\ket{\sigma(0,r)}+\ket{1}{\ket{\sigma(1,r)}})
\end{equation}
where 
$r\in \bit^\secp$ is chosen uniformly at random for every execution of the public key generation algorithm,
and
$\sigma(\alpha,r)$ is a signature for $\alpha \concat r$.\footnote{$\alpha\concat r$ is the concatenation of two bit strings $\alpha$ and $r$.} 
An encryption of a plaintext $b\in \bit$ is 
\begin{equation}
    (r, \ket{0}\ket{\sigma(0,r)}+(-1)^b\ket{1}{\ket{\sigma(1,r)}}).
\end{equation}
Since each quantum public key uses different $r$,  
security of this scheme holds even if the adversary obtains arbitrarily many public keys. 

\paragraph{\bf Making ciphertext classical.} 
The above constructions has quantum ciphertext, but our definition explicitly requires that a QPKE scheme have a classical cipheretxt.
We observe that the ciphertext of the above schemes can be made classical easily. In the $\INDpkTA$ secure construction, the ciphertext contains a quantum state $\ket{0}\ket{\sigma(r,0)}+(-1)^b\ket{1}{\ket{\sigma(r,1)}}$. Suppose that we measure this state in Hadamard basis and let $d$ be the measurement outcome. Then an easy calculation shows that we have 
\begin{equation}
    b=d\cdot (0\concat \sigma(0,r)\oplus 1\concat \sigma(1,r)).
\end{equation}
Thus, sending $(r,d)$ as a ciphertext is sufficient for the receiver who has the decryption key to recover the plaintext $b$. Moreover, this variant is at least as secure as the original one with quantum ciphertexts since the Hadamard-basis measurement only loses information of the ciphertext. 

\paragraph{\bf Achieving recyclability.}
Given that we achieve classical ciphertext property, it is rather straightforward to transform the construction into recyclable one where the encryption algorithm outputs a classical state that can be used to encrypt many plaintexts. The transformation uses standard hybrid encryption technique. Concretely, the encryption algorithm first generates a key $K$ of a SKE scheme, encrypt each bit of $K$ by the above non-recyclable scheme in a bit-by-bit manner, and encrypt the plaintext $\msg$ by the symmetric key encryption scheme under the key $K$. The final ciphertext is $(\ct,\ct_{\ske})$, where $\ct$ is the ciphertext of $K$ by the non-recyclable scheme and $\ct_{\ske}$ is the ciphertext of $\msg$ by the SKE scheme. The encryption algorithm outputs a classical state $(\ct,K)$ together with the ciphertext. The encryptor can reuse the state when it encrypts another message later.\footnote{The idea to achieve the recyclability by the hybrid encryption technique was also used in one of the constructions in \cite{TCC:BGHMSVW23}.}

\paragraph{\bf Adding decryption error detectability.} 
So far, we are only concerned with IND-pkTA security. 
On the other hand, the schemes presented in the previous paragraphs do not satisfy decryption error detectability. (See \Cref{def:dec_err} for formal definition.)  
Fortunately, there is a simple generic conversion that adds decryption error detectability while preserving IND-pkTA security by using digital signatures. The idea is that the encryption algorithm first generates a signature for the message under a signing key generated by itself, encrypts both the original message and signature under the building block scheme, and outputs the ciphertexts along with the verification key for the signature scheme in the clear. Then, the decryption algorithm can verify that the decryption result is correct as long as it is a valid message-signature pair (except for a negligible probability).

\subsection{CPA-to-CCA Transformation}\label{sec:overview_CCA}
We now explain how to transform $\INDpkTA$ secure QPKE scheme into $\INDCCApkTA$ secure one using OWFs.

\paragraph{\bf Definition of $\INDCCApkTA$ security.}
$\INDCCApkTA$ security is defined by adding the following two modifications to the security experiment for $\INDpkTA$ security.
\begin{itemize}
\item Throughout the experiment, the adversary can get access to the decryption oracle that is given a ciphertext $\ct$ and returns $\Dec(\sk,\ct)$ if $\ct\ne\ct^*$ and $\bot$ otherwise.
\item The adversary is given the $1$-bit leakage information that the challenge ciphertext is decrypted to $\bot$ or not.
\end{itemize}
As discussed in \cref{sec:discussion}, the second modification is needed to support a natural usage scenario of QPKE.
For simplicity, we will ignore this second modification for now and proceed the overview as if $\INDCCApkTA$ security is defined by just adding the decryption oracle to the security experiment for $\INDpkTA$ security.

We also define a weaker variant of $\INDCCApkTA$ security where the adversary is allowed to make only a single query to the decryption oracle.
We denote it as $\INDoneCCApkTA$ security.
We consider a relaxed variant of $\INDCCApkTA$ security and $\INDoneCCApkTA$ security where the adversary is given only a single copy of quantum public key.
We denote them as $\onekey{\INDCCApkTA}$ security and $\onekey{\INDoneCCApkTA}$ security respectively, similarly to $\onekey{\INDpkTA}$ security.

\paragraph{\bf $\INDoneCCApkTA$ from $\INDpkTA$.}
In classical cryptography, the CCA security where the number of decryption query is a-priori bounded to $q$ is called $q$-bounded-CCA security.
It is known that any CPA secure classical PKE scheme can be transformed into $q$-bounded-CCA secure one using only a digital signature scheme~\cite{AC:CHHIKP07}.
We show that by using a similar technique, we can transform an $\INDpkTA$ secure QPKE scheme into an $\INDoneCCApkTA$ secure one using only a digital signature scheme.

\paragraph{\bf Boosting $1$-bounded-CCA into full-fledged CCA.}
Classically, it is not known how to boost bounded-CCA security into CCA security without using additional assumption, and as a result, ``general transformation from CPA to CCA'' is a major open question in classical public key cryptography.
Surprisingly, we show that $1$-bounded-CCA security can be boosted into CCA security for QPKE assuming only OWFs.
More specifically, we show that $\onekey{\INDoneCCApkTA}$ secure QPKE can be transformed into $\onekey{\INDCCApkTA}$ secure one assuming only OWFs.

The key component in the transformation is tokenized message authentication code (MAC)~\cite{EPRINT:BehSatShi21}.
Tokenized MAC is a special MAC scheme where we can generate a quantum MAC token using the secret MAC key.
The quantum MAC token can be used to generate a valid signature only once.
In other words, an adversary who is given a single quantum MAC token cannot generate valid signatures for two different messages.
Tokenized MAC can be realized using only OWFs~\cite{EPRINT:BehSatShi21}.

The high level idea is to design CCA secure scheme so that a public key contains quantum MAC token and an adversary can generate a valid ciphertext only when it consumes the MAC token, which ensures that the adversary can make only one meaningful decryption query and CCA security is reduced to $1$-bounded-CCA security.
Consider the following construction of a QPKE scheme $\CCA$ based on $\onekey{\INDoneCCApkTA}$ secure QPKE scheme $\oneCCA$ and tokenized MAC scheme $\TMAC$.
The secret key of $\CCA$ consists of the secret keys of $\oneCCA$ and $\TMAC$, and the verification key of $\CCA$ is that of $\oneCCA$.
A quantum public key of $\CCA$ consists of that of $\oneCCA$ and a MAC token of $\TMAC$.
The encryption algorithm of $\CCA$ first generates a ciphertext $\onecca.\ct$ of $\oneCCA$ and then generates a signature $\tmac.\sigma$ for the message $\onecca.\ct$ by consuming the MAC token contained in the public key. The resulting ciphertext is $(\onecca.\ct,\tmac.\sigma)$.
The decryption algorithm of $\CCA$ that is given the ciphertext $(\onecca.\ct,\tmac.\sigma)$ first checks validity of $\tmac.\sigma$ by using the secret MAC key included in the secret key.
If it passes, the decryption algorithm decrypts $\onecca.\ct$ by using the secret key of $\oneCCA$.

In the experiment of $\onekey{\INDCCApkTA}$ security for $\CCA$, we can ensure that an adversary can make at most one decryption query whose result is not $\bot$ by the power of $\TMAC$, as we want.
However, the adversary in fact can make one critical query $(\onecca.\ct^*,\tmac.\sigma^\prime)$, where $\onecca.\ct^*$ is the first component of the challenge ciphertext, which allows the adversary to obtain the challenge bit.
This attack is possible due to the fact that the adversary is allowed to tamper the quantum public key.\footnote{
More specifically, the attack is done as follows.
The adversary is given a quantum public key $(\onecca.\pk,\token)$ where $\onecca.\pk$ is a public key of $\oneCCA$ and $\token$ is a MAC token of $\TMAC$.
The adversary generates another token $\token^\prime$ of $\TMAC$ by itself and sends $(\onecca.\pk,\token^\prime)$ to the challenger as the tempered public key.
Since there is no validity check on the MAC token in the encryption algorithm, this tampered public key is not rejected and the challenge ciphertext $(\onecca.\ct^*,\tmac.\sigma^*)$ is generated. 
Given the challenge ciphertext, the adversary generates a signature $\tmac.\sigma^\prime$ for $\onecca.\ct^*$ using $\token$ contained in the given un-tampered public key and queries $(\onecca.\ct^*,\tmac.\sigma^\prime)$ to the decryption oracle.
Since $\tmac.\sigma^\prime$ is a valid signature generated using the correct token, this query is successful and the adversary obtains the challenge bit.
}
Fortunately, this attack can be prevented by using a digital signature scheme and tying the two components $\onecca.\ct^*$ and $\tmac.\sigma^\prime$ together.
Once this issue is fixed, we can successfully reduce the $\onekey{\INDCCApkTA}$ security of the construction to the $\onekey{\INDoneCCApkTA}$ security of $\oneCCA$, since now the adversary can make only one non-critical decryption query.

\paragraph{\bf Upgrading $\onekey{\INDCCApkTA}$ to $\INDCCApkTA$.}
We can easily transform an $\onekey{\INDCCApkTA}$ secure QPKE scheme into an $\INDCCApkTA$ secure one.
The transformation is somewhat similar to the one from $\onekey{\INDpkTA}$ secure scheme to $\INDpkTA$ secure one.
We bundle multiple instances of $\onekey{\INDCCApkTA}$ secure scheme each of which is labeled by a classical random string.
The transformation uses pseudorandom functions and digital signatures both of which are implied by OWFs.

\paragraph{\bf How to deal with $1$-bit leakage ``the challenge is decrypted to $\bot$ or not''.}
So far, we ignore the fact that our definition of $\INDCCApkTA$ security allows the adversary to obtain $1$-bit leakage information whether the challenge is decrypted to $\bot$ or not.
We introduce an intermediate notion between $\INDpkTA$ security and $\INDCCApkTA$ security that we call $\INDCVApkTA$ security where the adversary is given the $1$-bit leakage information but is not allowed to get access to the decryption oracle.
We then show that an $\INDpkTA$ secure QPKE scheme can be transformed into $\INDCVApkTA$ secure one using the cut-and-choose technique.
Moreover, we show that the above construction strategy towards $\INDCCApkTA$ secure construction works even if the adversaries are given the $1$-bit leakage information, if we start with $\INDCVApkTA$ secure scheme.

\paragraph{\bf Some Remarks.}
We finally provide some remarks.
\begin{description}
\item[Recyclability:] Similarly to $\INDpkTA$ secure scheme, we consider recyclable variant for $\INDCCApkTA$ secure one. We show that a recyclable $\INDCCApkTA$ secure QPKE scheme can be constructed from non-recyclable one using the hybrid encryption technique similarly to $\INDpkTA$ secure construction. 
\item[Strong decryption error detectability:] In the proof of the construction from $\onekey{\INDoneCCApkTA}$ secure scheme to $\onekey{\INDCCApkTA}$ secure one, we use the underlying scheme's decryption error detectability. The proof of CCA security is sensitive to decryption errors, and it turns out that decryption error detectability that only provides security guarantee against computationally bounded adversaries is not sufficient for this part. Thus, we introduce statistical variant of decryption error detectability that we call strong decryption error detectability.
We also prove that our $\INDCVApkTA$ secure construction based on the cut-and-choose technique achieves strong decryption error detectability, and the subsequent transformations preserve it.
\end{description}


\newcommand{\ODec}[1]{O_{\mathsf{Dec},#1}}

\section{Preliminaries}\label{sec:preliminaries}
\subsection{Basic Notations}
\label{sec:basic_notations}

We use the standard notations of quantum computing and cryptography.
We use $\secp$ as the security parameter.
For any set $S$, $x\gets S$ means that an element $x$ is sampled uniformly at random from the set $S$.
We write $\negl$ to mean a negligible function.
PPT stands for (classical) probabilistic polynomial-time and QPT stands for quantum polynomial-time.
For an algorithm $A$, $y\gets A(x)$ means that the algorithm $A$ outputs $y$ on input $x$.
For two bit strings $x$ and $y$, $x\|y$ means the concatenation of them.
For simplicity, we sometimes omit the normalization factor of a quantum state.
(For example, we write $\frac{1}{\sqrt{2}}(|x_0\rangle+|x_1\rangle)$ just as
$|x_0\rangle+|x_1\rangle$.)
$I\coloneqq|0\rangle\langle0|+|1\rangle\langle 1|$ is the two-dimensional identity operator.
For the notational simplicity, we sometimes write $I^{\otimes n}$ just as $I$ when
the dimension is clear from the context.

\if0
$\|X\|_1\coloneqq\mbox{Tr}\sqrt{X^\dagger X}$ is the trace norm.
$\mbox{Tr}_\regA(\rho_{\regA,\regB})$ means that the subsystem (register) $\regA$ of the state $\rho_{\regA,\regB}$ on
two subsystems (registers) $\regA$ and $\regB$ is traced out.
For simplicity, we sometimes write $\mbox{Tr}_{\regA,\regB}(|\psi\rangle_{\regA,\regB})$ to mean
$\mbox{Tr}_{\regA,\regB}(|\psi\rangle\langle\psi|_{\regA,\regB})$.
$I$ is the two-dimensional identity operator. For simplicity, we sometimes write $I^{\otimes n}$ as $I$ 
if the dimension is clear from the context.
For the notational simplicity, we sometimes write $|0...0\rangle$ just as $|0\rangle$,
when the number of zeros is clear from the context.
For two pure states $|\psi\rangle$ and $|\phi\rangle$,
we sometimes write $\||\psi\rangle\langle\psi|-|\phi\rangle\langle\phi|\|_1$
as
$\||\psi\rangle-|\phi\rangle\|_1$
to simplify the notation.
$F(\rho,\sigma)\coloneqq\|\sqrt{\rho}\sqrt{\sigma}\|_1^2$
is the fidelity between $\rho$ and $\sigma$.
We often use the well-known relation between the trace distance and the fidelity:
$1-\sqrt{F(\rho,\sigma)}\le\frac{1}{2}\|\rho-\sigma\|_1\le\sqrt{1-F(\rho,\sigma)}$.
\fi

\subsection{Digital Signatures}
\begin{definition}[Digital signatures] \label{def:sEUF-CMA} 
A digital signature scheme is a set of algorithms $(\Gen,\Sign,\Ver)$ such that
\begin{itemize}
    \item 
    $\Gen(1^\secp)\to(k,\vk):$ It is a PPT algorithm that, on input the security parameter $\secp$, outputs
    a signing key $k$ and a verification key $\vk$.
    \item
    $\Sign(k,\msg)\to\sigma:$
    It is a PPT algorithm that, on input the message $\msg$ and $k$, outputs a signature $\sigma$.
    \item
   $\Ver(\vk,\msg,\sigma)\to\top/\bot:$ 
   It is a deterministic classical polynomial-time algorithm that, on input $\vk$, $\msg$, and $\sigma$, outputs $\top/\bot$.
\end{itemize}
We require the following correctness and strong EUF-CMA security.

\paragraph{\bf Correctness:}
For any $\msg$,
\begin{equation}
   \Pr[\top\gets\Ver(\vk,\msg,\sigma):
   (k,\vk)\gets\Gen(1^\secp),
   \sigma\gets\Sign(k,\msg)
   ]\ge1-\negl(\secp). 
\end{equation}

\paragraph{\bf Strong EUF-CMA security:}
For any QPT adversary $\cA$ with classical oracle access to the signing oracle $\Sign(k,\cdot)$,
\ifnum\llncs=0
\begin{equation}
   \Pr[(\msg^\ast,\sigma^\ast)\notin \mathcal{Q}~\land~\top\gets\Ver(\vk,\msg^*,\sigma^*):
   (k,\vk)\gets\Gen(1^\secp),
   (\msg^\ast,\sigma^\ast)\gets\cA^{\Sign(k,\cdot)}(\vk)
   ]\le\negl(\secp), 
\end{equation}
\else
\begin{align}
   \Pr\left[(\msg^\ast,\sigma^\ast)\notin \mathcal{Q}~\land~\top\gets\Ver(\vk,\msg^*,\sigma^*):
   \begin{array}{r}
   (k,\vk)\gets\Gen(1^\secp)\\
   (\msg^\ast,\sigma^\ast)\gets\cA^{\Sign(k,\cdot)}(\vk)
   \end{array}
   \right]\notag
   \le\negl(\secp), 
\end{align}
\fi
where $\mathcal{Q}$ is the set of message-signature pairs returned by the signing oracle. 
\end{definition}

\begin{remark}
Without loss of generality, we can assume that $\Sign$ is deterministic.
(The random seed used for $\Sign$ can be generated by applying a PRF on the message signed, and the key of PRF is appended to the signing key.)
\end{remark}

\begin{theorem}[{\cite[Sec. 6.5.2]{Book:Goldreich04}}]\label{thm:sig_from_OWF}
Strong EUF-CMA secure digital signatures exist if OWFs exist.
\end{theorem}

\subsection{Pseudorandom Functions}

\begin{definition}[Pseudorandom functions (PRFs)]
A keyed function $\{\PRF_K: \cX\rightarrow \cY\}_{K\in \bit^\secp}$ that is computable in classical deterministic polynomial-time is a quantum-query secure pseudorandom function if 
for any QPT adversary $\cA$ with quantum access to the evaluation oracle $\PRF_K(\cdot)$,
\begin{equation}
   |\Pr[1\gets\cA^{\PRF_K(\cdot)}(1^\secp)] 
   -\Pr[1\gets\cA^{H(\cdot)}(1^\secp)] |\le\negl(\secp),
\end{equation}
where $K\gets\bit^\secp$ and $H:\cX\rightarrow \cY$ is a function chosen uniformly at random.
\end{definition}

As we can see, we consider PRFs that is secure even if an adversary can get access to the oracles in superposition, which is called quantum-query secure PRFs.
We use the term PRFs to indicate quantum-query secure PRFs in this work.

\begin{theorem}[{\cite{FOCS:Zhandry12}}]\label{rem:PRF}
(Quantum-query secure) PRFs exist if OWFs exist.
\end{theorem}

\subsection{Symmetric Key Encryption}

\begin{definition}[Symmetric Key Encryption (SKE)]
A (classical) symmetric key encryption (SKE) scheme with message space $\bit^\ell$ is a set of algorithms $(\Enc,\Dec)$ such that
\begin{itemize}
    \item
    $\Enc(K,\msg)\to\ct:$
    It is a PPT algorithm that, on input $K\in \bit^\secp$ and the message $\msg\in \bit^\ell$, outputs a ciphertext $\ct$.
    \item
   $\Dec(K,\ct)\to\msg':$ 
   It is a deterministic classical polynomial-time algorithm that, on input $K$ and $\ct$, outputs $\msg'$.
\end{itemize}
We require the following correctness.
\paragraph{\bf Correctness:}
For any $\msg\in\bit^\ell$,
\begin{equation}
   \Pr[\msg\gets\Dec(K,\ct):
   K\gets\bit^\secp, 
   \ct\gets\Enc(K,\msg)
   ] = 1.
\end{equation}
\end{definition}

\begin{definition}[IND-CPA Security]
For any QPT adversary $\cA$ with classical oracle access to the encryption oracle $\Enc(K,\cdot)$,
\begin{equation}
   \Pr\left[b\gets\cA(\ct^*,\st)^{\Enc(K,\cdot)}:
   \begin{array}{r}
   K\gets\bit^\secp\\
   (\msg_0,\msg_1,\st)\gets \cA^{\Enc(K,\cdot)}(1^\secp)\\
   b\gets \bit\\ 
   \ct^*\gets\Enc(K,\msg_b)
   \end{array}
   \right] \le \frac{1}{2}+\negl(\secp).
\end{equation}   
\end{definition}

\begin{theorem}[\cite{JACM:GolGolMic86,SIAMCOMP:HILL99}]
IND-CPA secure SKE exists if OWFs exist.
\end{theorem}

\begin{definition}[IND-CCA Security]
For any QPT adversary $\cA$ with classical oracle access to the encryption oracle $\Enc(K,\cdot)$,
\begin{equation}
   \Pr\left[b\gets\cA(\ct^*,\st)^{\Enc(K,\cdot),\ODec{2}(\cdot)}:
   \begin{array}{r}
   K\gets\bit^\secp\\
   (\msg_0,\msg_1,\st)\gets \cA^{\Enc(K,\cdot),\ODec{1}(\cdot)}(1^\secp)\\
   b\gets \bit\\ 
   \ct^*\gets\Enc(K,\msg_b)
   \end{array}
   \right] \le \frac{1}{2}+\negl(\secp).
\end{equation}   
Here, $\ODec{1}(\ct)$ returns $\Dec(K,\ct)$ for any $\ct$.
$\ODec{2}$ behaves identically to $\ODec{1}$ except that $\ODec{2}$ returns $\bot$ to the input $\ct^*$.
\end{definition}

\begin{theorem}[\cite{JC:BelNam08}]
IND-CCA secure SKE exists if OWFs exist.
\end{theorem}

\if0
\subsection{Pseudorandom Functions}
\begin{definition}[Pseudorandom functions (PRFs)]
A keyed function $\{\PRF_k: \cX\rightarrow \cY\}_{k\in \bit^\secp}$ that is computable in classical deterministic polynomial-time is a quantum-query secure pseudorandom function if 
for any QPT adversary $\cA$ with quantum access to the evaluation oracle $\PRF_k(\cdot)$,
\begin{equation}
   |\Pr[1\gets\cA^{\PRF_k(\cdot)}(1^\secp)] 
   -\Pr[1\gets\cA^{f(\cdot)}(1^\secp)] |\le\negl(\secp),
\end{equation}
where $k\gets\bit^\secp$ and $f:\cX\rightarrow \cY$ is a function chosen uniformly at random.
\end{definition}

\begin{remark}
Quantum-query secure PRFs exist if quantum-query secure\mor{quantum-secure?} OWFs exist~\cite{FOCS:Zhandry12}.
\end{remark}
\fi

\subsection{Lemma by Boneh and Zhandry}
In this paper, we use the following lemma by Boneh and Zhandry~\cite{C:BonZha13}.

\begin{lemma}[{\cite[Lemma 2.1]{C:BonZha13}}]
\label{lem:BZ}
Let $A$ be a quantum algorithm, and let $\Pr[x]$ be the probability that $A$ outputs $x$. Let
$A'$ be another quantum algorithm obtained from $A$ by pausing $A$ at an arbitrary stage of execution,
performing a partial measurement that obtains one of $k$ outcomes, and then resuming $A$. 
Let $\Pr'[x]$ be the probability that $A'$ outputs $x$. Then $\Pr'[x] \ge \Pr[x]/k$.
\end{lemma}
\section{Definition of QPKE}
\label{sec:definition}

In this section, we define QPKE. 

\begin{definition}[Quantum Public-Key Encryption (QPKE)]\label{def:QPKE}
A quantum public-key encryption scheme with message space $\bit^\ell$ is a set of algorithms 
$(\SKGen,\PKGen,\Enc,\Dec)$ such that
\begin{itemize}
    \item 
    $\SKGen(1^\secp)\to (\sk,\vk):$
    It is a PPT algorithm that, on input the security parameter $\secp$, outputs
    a classical secret key $\sk$ and a classical verification key $\vk$. 
    \item
    $\PKGen(\sk)\to \pk:$
    It is a QPT algorithm that, on input $\sk$, outputs
    a quantum public key $\pk$.
    \item
    $\Enc(\vk,\pk,\msg)\to \ct:$ 
    It is a QPT algorithm that, on input $\vk$, $\pk$, and a plaintext $\msg\in\bit^\ell$,  outputs a classical ciphertext $\ct$. 
    \item
    $\Dec(\sk,\ct)\to \msg':$ 
    It is a classical deterministic polynomial-time algorithm that, on input $\sk$ and $\ct$, outputs $\msg'\in\bit^\ell \cup \{\bot\}$.
\end{itemize}
We require the following correctness and IND-pkTA security.

\paragraph{\bf Correctness:}
For any $\msg\in\bit^\ell$,
\ifnum\llncs=0
\begin{equation}
   \Pr[\msg\gets\Dec(\sk,\ct):
   (\sk,\vk)\gets\SKGen(1^\secp),
   \pk\gets\PKGen(\sk),
   \ct\gets\Enc(\vk,\pk,\msg)
   ] \ge 1-\negl(\secp).
\end{equation}
\else
\begin{equation}
   \Pr\left[\msg\gets\Dec(\sk,\ct):
   \begin{array}{r}
   (\sk,\vk)\gets\SKGen(1^\secp) \\
   \pk\gets\PKGen(\sk) \\
   \ct\gets\Enc(\vk,\pk,\msg)
   \end{array}
   \right] \ge 1-\negl(\secp).
\end{equation}
\fi

\paragraph{\bf IND-pkT-CPA Security:}
For any polynomial $m$, and any QPT adversary $\cA$,
\begin{equation}
   \Pr\left[b\gets\cA(\ct^\ast,\st):
   \begin{array}{r}
   (\sk,\vk)\gets\SKGen(1^\secp)\\
   \pk_1,...,\pk_m\gets\PKGen(\sk)^{\otimes m}\\
   (\pk',\msg_0,\msg_1,\st)\gets\cA(\vk,\pk_1,...,\pk_m)\\
   b\gets \bit\\ 
   \ct^\ast\gets\Enc(\vk,\pk',\msg_b)
   \end{array}
   \right] \le \frac{1}{2}+\negl(\secp).
\end{equation}
Here, $\pk_1,...,\pk_m\gets\PKGen(\sk)^{\otimes m}$
means that $\PKGen$ is executed $m$ times and $\pk_i$ is the output of the $i$th execution of $\PKGen$.
$\st$ is a quantum internal state of $\cA$, which can be entangled with $\pk'$.
\end{definition}
As we discussed in~\cref{sec:discussion}, the above definition does not require the quantum public key $\pk$ to be a pure state.

We also define a security notion related to the correctness notion that we call decryption error detectability.

\begin{definition}[Decryption error detectability]
\label{def:dec_err}
We say that a QPKE scheme has decryption error detectability if for 
any polynomial $m$, 
and any QPT adversary $\cA$,
\begin{equation}
   \Pr\left[
   \msg'\neq \bot~\land~ 
   \msg'\neq \msg :
   \begin{array}{r}
   (\sk,\vk)\gets\SKGen(1^\secp)\\
   \pk_1,...,\pk_m\gets\PKGen(\sk)^{\otimes m}\\
   (\pk',\msg)\gets\cA(\vk,\pk_1,...,\pk_m)\\
   \ct\gets\Enc(\vk,\pk',\msg)\\
   \msg' \gets\Dec(\sk,\ct)
   \end{array}
   \right] \le \negl(\secp).
\end{equation}
\end{definition}

It is easy to see that we can generically add decryption error detectability by letting the sender generate a signature for the message under a signing key generated by itself, encrypt the concatenation of the message and signature, and send the ciphertext along with the verification key of the signature to the receiver.
The receiver can check that there is no decryption error (except for a negligible probability) if the decryption result is a valid message-signature pair. That is, we have the following theorem.  
\begin{theorem}\label{thm:add_decryption_error_detectability}
If there exist OWFs and a QPKE scheme that satisfies correctness and IND-pkT-CPA security, there exists a QPKE scheme that satisfies correctness, IND-pkT-CPA security, and decryption error detectability. 
\end{theorem}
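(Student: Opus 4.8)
The plan is to instantiate the generic conversion sketched just before the statement. From OWFs I build a one-time EUF-CMA-secure signature scheme $\SIG=(\SIGGen,\SIGSign,\SIGVrfy)$, and from the given scheme $(\SKGen,\PKGen,\Enc,\Dec)$ (invoked on message length $\ell+|\sigma|$, which a QPKE scheme for any polynomial message length supports) I define $(\SKGen',\PKGen',\Enc',\Dec')$ by keeping $\SKGen'=\SKGen$ and $\PKGen'=\PKGen$ and modifying only encryption/decryption. Concretely, $\Enc'(\vk,\pk,\msg)$ samples a fresh $(\ssk,\svk)\gets\SIGGen(1^\secp)$, computes $\sigma\gets\SIGSign(\ssk,\msg)$, runs $\ct\gets\Enc(\vk,\pk,\msg\concat\sigma)$, and outputs $(\ct,\svk)$; while $\Dec'(\sk,(\ct,\svk))$ computes $\msg''\concat\sigma''\gets\Dec(\sk,\ct)$ (outputting $\bot$ if $\Dec$ returns $\bot$ or the parsing fails), and returns $\msg''$ when $\SIGVrfy(\svk,\msg'',\sigma'')$ accepts and $\bot$ otherwise. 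Correctness is immediate: by correctness of the base scheme $\Dec(\sk,\ct)=\msg\concat\sigma$ except with negligible probability, and then $\SIGVrfy(\svk,\msg,\sigma)$ accepts by signature correctness, so $\Dec'$ recovers $\msg$.

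For IND-pkT-CPA security I would give a straight-line reduction $\cB$ to IND-pkT-CPA security of the base scheme. $\cB$ forwards $(\vk,\pk_1,\dots,\pk_m)$ to $\cA$, receives $(\pk',\msg_0,\msg_1,\st)$, samples $(\ssk,\svk)\gets\SIGGen(1^\secp)$, computes $\sigma_0\gets\SIGSign(\ssk,\msg_0)$ and $\sigma_1\gets\SIGSign(\ssk,\msg_1)$, and submits $\pk'$ together with the challenge pair $(\msg_0\concat\sigma_0,\msg_1\concat\sigma_1)$. On receiving the base challenge ciphertext $\ct^\ast$, it hands $((\ct^\ast,\svk),\st)$ to $\cA$ and echoes the guess. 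Since the only $b$-dependent part of the new challenge ciphertext is $\ct^\ast=\Enc(\vk,\pk',\msg_b\concat\sigma_b)$ (the key $\svk$ is sampled independently of $b$), $\cB$ perfectly simulates the new experiment and inherits $\cA$'s advantage. Note this step uses only efficiency of signing, not unforgeability.

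The main content — and the step I expect to be the only delicate one — is decryption error detectability, which I would reduce to one-time EUF-CMA security of $\SIG$. The key structural observation is that the signature key pair is generated \emph{inside} $\Enc'$, hence after $\cA$ has committed to its tampered public key $\pk'$ and message $\msg$; in particular $\cA$'s view is independent of $\svk$. The bad event ``$\msg'\neq\bot\,\land\,\msg'\neq\msg$'' forces $\Dec(\sk,\ct)$ to output some $\msg''\concat\sigma''$ with $\msg''\neq\msg$ and $\SIGVrfy(\svk,\msg'',\sigma'')$ accepting. I then build a forger $\cF$ that receives $\svk$ with a single signing query: $\cF$ generates $(\sk,\vk)$ and $\pk_1,\dots,\pk_m$ itself, runs $\cA$ to obtain $(\pk',\msg)$, queries its signing oracle on $\msg$ to get $\sigma$, forms $\ct\gets\Enc(\vk,\pk',\msg\concat\sigma)$, decrypts $\msg''\concat\sigma''\gets\Dec(\sk,\ct)$, and outputs $(\msg'',\sigma'')$. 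Whenever the bad event occurs, $\msg''\neq\msg$, so $(\msg'',\sigma'')$ is a valid forgery on a never-queried message, contradicting one-time unforgeability; hence the bad event is negligible. The points requiring care are that $\cF$ faithfully reproduces the experiment — it holds $\sk$ so it can run $\Dec$, and it is QPT so it can run $\Enc$ on the quantum state $\pk'$ — and that $\Dec$ or parsing failures simply yield $\msg'=\bot$ and thus fall outside the bad event. Since one-time EUF-CMA signatures follow from OWFs, combining the three parts establishes the theorem.
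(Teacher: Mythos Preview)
Your proposal is correct and follows precisely the construction the paper sketches immediately before the theorem statement: sign the message under a freshly generated key pair, encrypt $\msg\concat\sigma$ under the base scheme, and ship $\svk$ in the clear so the receiver can verify. The paper explicitly omits the proof as ``straightforward by the construction explained above,'' and your write-up fills in exactly those details (with the minor refinement of noting that one-time EUF-CMA suffices for the detectability reduction).
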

We omit the proof since it is straightforward by the construction explained above. Since we have this theorem, we focus on constructing QPKE that satisfies correctness and IND-pkT-CPA security.  

\if0
\begin{remark}
\mor{In classical PKE, ciphertexts are classical bit strings and therefore can be authenticated.
However, in our QPKE, ciphertexts are quantum states, and therefore they could be tampered with by the adversary.
(In fact, in our construction that will be explained later, an adversary can flip the sender's bit by 
applying $Z$ in the first qubit of $\ct$.) However, QPKE can be generally protected by such a tampering 
by encrypting not the plaintext $m$ but the pair $(m,\sigma(m))$ of the plaintext and its signature.
}
\end{remark}
\fi

\section{Construction of QPKE}
\label{sec:construction}
In this section, we construct a QPKE scheme that satisfies correctness and IND-pkT-CPA security (but not decryption error detectability) from 
strong EUF-CMA secure digital signatures. 
The message space of our construction is $\bit$, but it can be extended to be arbitrarily many bits by parallel repetition. 
Let $(\Gen,\Sign,\Ver)$ be a strong EUF-CMA secure digital signature scheme with a deterministic $\Sign$ algorithm and message space $\bit^u$ for $u=\omega(\log \secp)$. 

Our construction of QPKE is as follows.
\begin{itemize}
    \item 
    $\SKGen(1^\secp)\to(\sk,\vk):$
    Run $(k,\vk)\gets\Gen(1^\secp)$.
    Output $\sk\seteq k$. Output $\vk$. 
    \item
    $\PKGen(\sk)\to\pk:$
    Parse $\sk=k$.
    Choose $r\gets\bit^u$.
    By running $\Sign$ coherently, generate the state
    \begin{align}
    \ket{\psi_r}\seteq\ket{0}_\regA\otimes\ket{\Sign(k,0\|r)}_\regB
    +\ket{1}_\regA\otimes\ket{\Sign(k,1\|r)}_\regB
    \end{align}
    over registers $(\regA,\regB)$.
    Output 
    $
    \pk\coloneqq(r,\ket{\psi_r}).
    $
   \item
   $\Enc(\vk,\pk,b)\to\ct:$
   Parse
    $\pk=(r,\rho)$, where $\rho$ is a quantum state over registers $(\regA,\regB)$. 
    The $\Enc$ algorithm consists of the following three steps. 
   \begin{enumerate}
    \item
    It coherently checks the signature in $\rho$. In other words, it applies the unitary
   \begin{equation}
      U_{r,\vk}\ket{\alpha}_\regA\ket{\beta}_\regB\ket{0...0}_\regD
      =\ket{\alpha}_\regA\ket{\beta}_\regB\ket{\Ver(\vk,\alpha\|r,\beta)}_\regD
   \end{equation} 
    on $\rho_{\regA,\regB}\otimes\ket{0...0}\bra{0...0}_\regD$,\footnote{$\regC$ is skipped, because $\regC$ will be used later.} and
    measures the register $\regD$ in the computational basis. If the result is $\bot$, it outputs $\ct\seteq\bot$ and halts.
    If the result is $\top$, 
    it goes to the next step.
    \item
    It applies $Z^b$ on the register $\regA$. 
    \item
    It measures all qubits in the registers $(\regA,\regB)$ in the Hadamard basis to get the result $d$.
    It outputs
    $
    \ct\seteq(r,d).
    $
    \end{enumerate}
   \item
   $\Dec(\sk,\ct)\to b':$
   Parse $\sk=k$ and
    $\ct=(r,d)$.
    Output 
    \begin{equation}
    b'\seteq d\cdot 
    (0\|\Sign(k,0\|r) \oplus 1\|\Sign(k,1\|r)).
    \end{equation}
\end{itemize}

\begin{theorem}\label{thm:IND-pkTA_QPKE_from_SIG}
If $(\Gen,\Sign,\Ver)$ is a strong EUF-CMA secure digital signature scheme, then
the QPKE scheme $(\SKGen,\PKGen,\Enc,\Dec)$ above is correct and IND-pkT-CPA secure.
\end{theorem}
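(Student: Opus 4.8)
The plan is to establish the two claimed properties, correctness and $\INDpkTA$ security, separately.

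For correctness I would simply trace an honest execution. On an honestly generated $\pk=(r,\ket{\psi_r})$ the coherent signature check in the first step of $\Enc$ accepts with probability $1$ (up to the signature scheme's correctness error) and leaves the state essentially unchanged, since both computational branches of $\ket{\psi_r}$ carry valid signatures. Writing $s_\alpha\seteq\alpha\|\Sign(k,\alpha\|r)$, after applying $Z^b$ on $\regA$ the state on $(\regA,\regB)$ is $\ket{s_0}+(-1)^b\ket{s_1}$ (omitting normalization). A Hadamard-basis measurement then yields an outcome $d$ with amplitude proportional to $(-1)^{d\cdot s_0}\bigl(1+(-1)^{b\oplus d\cdot(s_0\oplus s_1)}\bigr)$, which is nonzero only when $b=d\cdot(s_0\oplus s_1)$. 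Because $s_0\neq s_1$ (they differ on the $\regA$ bit), every outcome $d$ obeys this relation, so $\Dec(\sk,(r,d))$ recovers $b$, giving correctness.

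For $\INDpkTA$ security I would reduce to strong EUF-CMA. Fix a QPT adversary $\cA$ with advantage $\epsilon$; since the message space is $\bit$ I may take $(\msg_0,\msg_1)=(0,1)$. Purifying $\cA$'s first stage, its output $\pk'=(r',\rho')$ together with the internal state $\st$ is a pure state $\ket{\Theta}$ on $(\regA,\regB,\mathrm{adv})$. Let $\Pi$ project $(\regA,\regB)$ onto computational-basis strings accepted by $\Ver(\vk,\cdot\|r',\cdot)$, and set $\ket{\Phi}\seteq\Pi\ket{\Theta}/\|\Pi\ket{\Theta}\|$. If the check rejects, $\Enc$ outputs $\bot$ before $b$ is ever used, so $\cA$ learns nothing; hence all of $\cA$'s advantage lives on the accepting branch. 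The key observation is that encrypting $b$ amounts to Hadamard-measuring $Z^b_\regA\ket{\Phi}$ and handing $\cA$ the outcome $d$ with the residual state. Thus encryption followed by $\cA$'s second stage distinguishes $\ket{\Phi_+}\seteq\ket{\Phi}$ from $\ket{\Phi_-}\seteq Z_\regA\ket{\Phi}$ with advantage $2\epsilon$, up to the negligible rejecting weight.

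Next I would invoke the distinguishing/swapping duality of~\cite{AAS20} (used cryptographically in~\cite{EC:HhaMorYam23}), generalized to allow unequal branch weights and an entangled ancilla. Grouping $\ket{\Phi}$ by the $\regA$ qubit gives $\ket{\Phi}=c_0\ket{0}\ket{\Psi_0}+c_1\ket{1}\ket{\Psi_1}$, where each $\ket{\Psi_\alpha}$ is supported on $(\regB,\mathrm{adv})$ strings whose $\regB$ part is accepted under suffix $r'$; these branches are orthogonal and $\ket{\Phi_\pm}=c_0\ket{0}\ket{\Psi_0}\pm c_1\ket{1}\ket{\Psi_1}$. A distinguisher for them with non-negligible advantage therefore yields an efficient unitary swapping the two branches, and the procedure ``measure $(\regA,\regB)$, apply the swap, measure $(\regA,\regB)$ again'' extracts, with non-negligible probability, strings $0\|\beta_0$ and $1\|\beta_1$ with $\beta_0,\beta_1$ both accepted by $\Ver$ under suffix $r'$. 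To convert this into a forgery, $\cB$ guesses the index $i^\ast$ with $r'=r_{i^\ast}$ (the event $r'\notin\{r_i\}_i$ is negligible on the accepting branch, as a fresh accepted signature is itself a forgery) and a bit $\alpha^\ast$. By~\cite[Lemma 2.1]{C:BonZha13}, $\cB$ may hand $\cA$ the \emph{measured} $i^\ast$-th public key $\ket{\alpha^\ast}\ket{\Sign(k,\alpha^\ast\|r_{i^\ast})}$ instead of the superposition, losing only a factor $2$; crucially $\cB$ then never queries its signing oracle on $\overline{\alpha^\ast}\|r_{i^\ast}$, so the extracted accepted signature on that message is a genuine forgery, while all other public keys are simulated with two honest oracle queries each. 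Summing over the polynomial guessing loss contradicts strong EUF-CMA.

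I expect the main obstacle to be the middle step: formalizing the distinguishing-to-swapping duality for branches with \emph{unequal} coefficients $c_0,c_1$ entangled with $\cA$'s register, and tracking how the advantage $2\epsilon$ propagates to a non-negligible probability of jointly extracting signatures on \emph{both} $0\|r'$ and $1\|r'$. The surrounding bookkeeping—conditioning on acceptance, ruling out fresh $r'$, and the guess-and-collapse reduction through~\cite{C:BonZha13}—is routine once this duality is in place.
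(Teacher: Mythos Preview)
Your proposal is correct and follows essentially the same route as the paper: reduce distinguishing $\ket{\Phi_+}$ from $\ket{\Phi_-}$ to swapping the two branches via the duality of~\cite{AAS20,EC:HhaMorYam23}, use this to extract accepted signatures on both $0\|r'$ and $1\|r'$, then collapse one public key via~\cite[Lemma~2.1]{C:BonZha13} to turn the extraction into a strong EUF-CMA forgery. The paper organizes the same logic into three explicit hybrids (the game; omit the Hadamard measurement; demand both signatures), but the substance is identical.

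One point to be precise about, and it is exactly the obstacle you flag: in your decomposition $\ket{\Phi}=c_0\ket{0}\ket{\Psi_0}+c_1\ket{1}\ket{\Psi_1}$ with $\ket{\Psi_\alpha}$ living on $(\regB,\mathrm{adv})$, the procedure ``measure $(\regA,\regB)$, apply the swap, measure again'' only works if measuring $\regB$ on the branch $\ket{\alpha}\ket{\Psi_\alpha}$ barely disturbs it. The swapping unitary is tailored to $\ket{0}\ket{\Psi_0}\leftrightarrow\ket{1}\ket{\Psi_1}$, not to the post-$\regB$-measurement residual $\ket{0}\ket{\beta_0}\ket{\cdot}$. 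The paper resolves this with a dedicated lemma (their Lemma~5.1): strong EUF-CMA already forces each $\ket{\Psi_\alpha}$ to have its $\regB$ content negligibly close to the single canonical string $\Sign(k,\alpha\|r')$, because any other accepted $\beta$ is itself a forgery. This is the same mechanism you invoke to dismiss the fresh-$r'$ case, so you should apply it here as well before running the extraction; once that is in place, your argument and the paper's coincide.
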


The correctness is straightforward. 
First, the state over the registers $(\regA,\regB)$ is
$\ket{\psi_r}$ if $\pk$ was not tampered with and the first step of $\Enc$ algorithm got $\top$.
Second, in that case, the state becomes 
\begin{align}
\ket{0}\ket{\Sign(k,0\|r)}+(-1)^b\ket{1}\ket{\Sign(k,1\|r)} 
\end{align}
after the second step of $\Enc$ algorithm.
Finally, because in that case $d$ obtained in the third step of $\Enc$ algorithm satisfies
    \begin{equation}
    b=d\cdot 
    (0\|\Sign(k,0\|r)
    \oplus 1\|\Sign(k,1\|r)),
    \end{equation}
    we have $b'=b$.

We prove IND-pkT-CPA security in the next section.
\section{Proof of IND-pkT-CPA Security}\label{sec:security}
In this section, we show IND-pkT-CPA security of our construction to complete the proof of~\cref{thm:IND-pkTA_QPKE_from_SIG}.
The outline of the proof is as follows.
The security game for the IND-pkT-CPA security of our QPKE (Hybrid 0) is given in \cref{hyb0}.
We introduce two more hybrids, Hybrid 1 (\cref{hyb1}) and Hybrid 2 (\cref{hyb2}).
Hybrid 1 is the same as Hybrid 0 except that the challenger does not do the Hadamard-basis measurement
in the third step of $\Enc$ algorithm, and the challenger sends the adversary $r$ and the state over the registers
$(\regA,\regB)$.
Hybrid 2 is the same as Hybrid 1 except that the adversary outputs two bit strings $\mu_0,\mu_1$ and
the adversary wins if $\mu_0=\Sign(k,0\|r)$ and $\mu_1=\Sign(k,1\|r)$. 
The formal proof is as follows.

Assume that the IND-pkT-CPA security of our construction is broken by a QPT adversary
$\cA$. It means the QPT adversary $\cA$ wins Hybrid 0 with a non-negligible advantage.
Then, it is clear that there is another QPT adversary $\cA'$ that wins Hybrid 1  
with a non-negligible advantage. ($\cA'$ has only to do the Hadamard-basis measurement by itself.)
From the $\cA'$, we can construct a QPT adversary $\cA''$ that wins 
Hybrid 2 with a non-negligible probability 
by using the idea of \cite{EC:HhaMorYam23}. (For details, see \cref{sec:B}).
Finally, we show in \cref{sec:Zhandry}
that no QPT adversary can win Hybrid 2 except for a negligible probability.
We thus have the contradiction, and 
therefore our QPKE is IND-pkT-CPA secure.

\protocol{Hybrid 0}
{Hybrid 0}
{hyb0}
{
\begin{enumerate}
    \item \label{item:initial}
    $\cC$ runs $(k,\vk)\gets\Gen(1^\secp)$.
    $\cC$ sends $\vk$ to $\cA$.
    \item
    \label{rs}
    $\cC$ chooses $r_1,...,r_m\gets\bit^u$.
    \item
    \label{send}
    $\cC$ sends $\{(r_i,\ket{\psi_{r_i}})\}_{i=1}^m$
    to the adversary $\cA$, where
    \begin{align}
       \ket{\psi_{r_i}}&\coloneqq
       \ket{0}\otimes\ket{\Sign(k,0\|r_i)}
       +\ket{1}\otimes\ket{\Sign(k,1\|r_i)}.
    \end{align}
    \item
    \label{Ar}
    $\cA$ generates a quantum state over registers $(\regA,\regB,\regC)$. 
    ($(\regA,\regB)$ corresponds to the quantum part of $\pk'$, and $\regC$ corresponds to $\st$.)
    $\cA$ sends a bit string $r$ and the registers $(\regA,\regB)$ to $\cC$.
    $\cA$ keeps the register $\regC$.
    \item
    \label{sign}
    $\cC$ coherently checks the signature in the state sent from $\cA$.
    If the result is $\bot$, it sends $\bot$ to $\cA$ and halts.
    If the result is $\top$, it goes to the next step. 
   
    \item
    \label{bchosen}
    $\cC$ chooses $b\gets \bit$.
    $\cC$ applies $Z^b$ on the register $\regA$.

    \item
    $\cC$ measures all qubits in $(\regA,\regB)$ in the Hadamard basis to get the result $d$.
    $\cC$ sends $(r,d)$ to $\cA$.
    \item
    \label{last}
    $\cA$ outputs $b'$.
    If $b'=b$, $\cA$ wins.
\end{enumerate}
}

\protocol{Hybrid 1}
{Hybrid 1}
{hyb1}
{
\begin{itemize}
    \item[1.-6.]
    All the same as \cref{hyb0}.
    \item[7.]
    $\cC$ does not do the Hadamard-basis measurement,
    and $\cC$ sends $r$ and registers $(\regA,\regB)$ to $\cA$.
    \item[8.]
    The same as \cref{hyb0}.
\end{itemize}
}

\protocol{Hybrid 2}
{Hybrid 2}
{hyb2}
{
\begin{itemize}
    \item[1.-7.]
    All the same as \cref{hyb1}.
    \item[8.]
    $\cA$ outputs $(\mu_0,\mu_1)$.
    If $\mu_0=\Sign(k,0\|r)$ and $\mu_1=\Sign(k,1\|r)$, $\cA$ wins.
\end{itemize}
}

\subsection{From Distinguishing to Outputting Two Signatures}
\label{sec:B}
We present the construction of $\cA''$.
Assume that there exists a QPT adversary
$\cA'$ and a polynomial $p$ such that 
\begin{equation}
|\Pr[1\gets\cA'\mid b=0]-\Pr[1\gets\cA'\mid b=1]|\ge\frac{1}{p(\secp)}
\label{advantage}
\end{equation}
in Hybrid 1 (\cref{hyb1}) for all $\secp\in I$ with an infinite set $I$.
From the $\cA'$, we construct a QPT adversary $\cA''$ such that
\begin{equation}
\Pr[(\Sign(k,0\|r),\Sign(k,1\|r))\gets\cA'']\ge\frac{1}{q(\secp)}
\end{equation}
in Hybrid 2 (\cref{hyb2}) with a polynomial $q$ for infinitely many $\secp$.

Let $t\coloneqq (k,\vk,r_1,...,r_m,r)$, and $\Pr[t]$
be the probability that $t$ is generated in \cref{item:initial}, \cref{rs}, and \cref{Ar} in the game of \cref{hyb1}. 
Let $\mathsf{Good}$ be the event that
$\cC$ gets $\top$ in \cref{sign} in the game of \cref{hyb1}.
Let $\mathsf{Bad}$ be the event that
$\mathsf{Good}$ does not occur.
Then, from \cref{advantage}, we have
\begin{align}
\frac{1}{p(\secp)}&\le
\Big|\sum_t\Pr[t] \Pr[\mathsf{Good} \mid t]\Pr[1\gets\cA' \mid t,\mathsf{Good},b=0]\\
&+\sum_t\Pr[t] \Pr[\mathsf{Bad} \mid t]\Pr[1\gets\cA' \mid t,\mathsf{Bad},b=0]\nonumber\\
&-\sum_t\Pr[t] \Pr[\mathsf{Good} \mid t]\Pr[1\gets\cA' \mid t,\mathsf{Good},b=1]\\
&-\sum_t\Pr[t] \Pr[\mathsf{Bad} \mid t]\Pr[1\gets\cA' \mid t,\mathsf{Bad},b=1]\Big|\\
&\le
\sum_t\Pr[t] \Pr[\mathsf{Good} \mid t]
\Big|\Pr[1\gets\cA' \mid t,\mathsf{Good},b=0]
-\Pr[1\gets\cA' \mid t,\mathsf{Good},b=1]\Big|\nonumber\\
&+\sum_t\Pr[t] \Pr[\mathsf{Bad} \mid t]
\Big|\Pr[1\gets\cA' \mid t,\mathsf{Bad},b=0]
-\Pr[1\gets\cA' \mid t,\mathsf{Bad},b=1]\Big|\\
&=
\sum_t\Pr[t] \Pr[\mathsf{Good} \mid t]
\Big|\Pr[1\gets\cA' \mid t,\mathsf{Good},b=0]
-\Pr[1\gets\cA' \mid t,\mathsf{Good},b=1]\Big|
\end{align}
for all $\secp\in I$,
because if $\mathsf{Bad}$ occurs, $\cA'$ gets only $\bot$ which contains no information about $b$.
(Here, we often abuse notation to just write $t$ to mean the event that $t$ is generated 
in \cref{item:initial}, \cref{rs}, and \cref{Ar}.)
Therefore, if we define
\begin{equation}
   T_\secp\coloneqq\Big\{t:
   \Pr[\mathsf{Good} \mid t]
\cdot \Big|\Pr[1\gets\cA' \mid t,\mathsf{Good},b=0]
-\Pr[1\gets\cA' \mid t,\mathsf{Good},b=1]\Big|\ge\frac{1}{2p(\secp)}
   \Big\}, 
\end{equation}
we have, for all $\secp\in I$,
\begin{align}
\Pr[\mathsf{Good} \mid t]\ge\frac{1}{4p(\secp)}
\label{goodnonnegl}
\end{align}
and
\begin{align}
\Big|\Pr[1\gets\cA' \mid t,\mathsf{Good},b=0]
-\Pr[1\gets\cA' \mid t,\mathsf{Good},b=1]\Big|
\ge\frac{1}{2p(\secp)}
\label{dif}
\end{align}
for any $t\in T_\secp$ and
$
\sum_{t\in T_\secp}\Pr[t]\ge\frac{1}{2p(\secp)}.
$

Let $\ket{\phi_b^{t,good}}$ be the state over the registers $(\regA,\regB,\regC)$
immediately before \cref{last} of \cref{hyb1} given that 
$t$ is generated,
$\mathsf{Good}$ occurred,
and $b$ is chosen in \cref{bchosen} of \cref{hyb1}.
We can show the following lemma. (Its proof is given later.)
\begin{lemma}
\label{why}
If $(\Gen,\Sign,\Ver)$ is strong EUF-CMA secure, 
there exists a subset $T_\secp'\subseteq T_\secp$ such that the following is satisfied
for all $\secp\in I'$, where $I'\coloneqq\{\secp\in I:\secp\ge\secp_0\}$ with a certain $\secp_0$.
\begin{itemize}
\item
$\sum_{t\in T_\secp'}\Pr[t]\ge\frac{1}{4p(\secp)}$.
\item
For any $t\in T_\secp'$, $\ket{\phi_b^{t,good}}$ is close to a state 
    \begin{equation}
\ket{\tilde{\phi}_b^{t,good}}\coloneqq    c_0\ket{0}_\regA\ket{\Sign(k,0\|r)}_\regB\ket{\Psi_0}_{\regC}    
    +(-1)^bc_1\ket{1}_\regA\ket{\Sign(k,1\|r)}_\regB\ket{\Psi_1}_{\regC}   
    \label{whystate}
    \end{equation}
within the trace distance $\frac{1}{p^{10}(\secp)}$,
where $c_0$ and $c_1$ are some complex coefficients such that $|c_0|^2+|c_1|^2=1$, 
and $\ket{\Psi_0}$ and $\ket{\Psi_1}$ are some normalized states.
\end{itemize}
\end{lemma}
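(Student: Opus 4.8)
The plan is to show that conditioned on the signature check passing, the tampered state is forced—up to a tiny error—into the two-dimensional ``honest signature'' subspace, and that any weight outside it would break strong EUF-CMA security. First I would restate the event $\mathsf{Good}$ as a projective measurement. The coherent check in \cref{hyb1} applies $U_{r,\vk}$ and measures $\regD$; obtaining $\top$ corresponds to projecting $(\regA,\regB)$ onto $V_r := \mathrm{span}\{\ket{\alpha}_\regA\ket{\beta}_\regB : \Ver(\vk,\alpha\|r,\beta)=\top\}$. Inside $V_r$ I single out the good part $V_r^{\mathsf{good}} := \mathrm{span}\{\ket{0}\ket{\Sign(k,0\|r)},\ \ket{1}\ket{\Sign(k,1\|r)}\}$ and its orthogonal complement within $V_r$, the bad part $V_r^{\mathsf{bad}}$ spanned by $\ket{\alpha}\ket{\beta}$ with $\Ver(\vk,\alpha\|r,\beta)=\top$ but $\beta\neq\Sign(k,\alpha\|r)$; these are disjoint because $\Sign$ is deterministic. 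For each $t$, let $w_{\mathsf{bad}}(t)$ be the weight that the pure state $\ket{\phi_b^{t,good}}$ places on $V_r^{\mathsf{bad}}\otimes\regC$. Since $Z^b$ is unitary and acts only on $\regA$, it preserves both subspaces and their weights, so $w_{\mathsf{bad}}(t)$ is independent of $b$; moreover the renormalized projection of $\ket{\phi_b^{t,good}}$ onto $V_r^{\mathsf{good}}\otimes\regC$ is exactly a state of the form \eqref{whystate} (the coefficients $c_0,c_1$ and residual states $\ket{\Psi_0},\ket{\Psi_1}$ being $b$-independent, with $Z^b$ supplying the $(-1)^b$ phase), and by the pure-state fidelity/trace-distance relation it lies within trace distance $\sqrt{w_{\mathsf{bad}}(t)}$ of $\ket{\phi_b^{t,good}}$.

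Next I would bound the average bad weight by a reduction to strong EUF-CMA security. I construct a forger $\cB$ that, using its classical signing oracle, obtains $\Sign(k,0\|r_i)$ and $\Sign(k,1\|r_i)$ for freshly chosen $r_1,\dots,r_m$, prepares the quantum public keys $\ket{\psi_{r_i}}$ coherently, and runs $\cA'$ exactly as in \cref{hyb1} up to and including the coherent check. Upon $\mathsf{Good}$, $\cB$ measures $(\regA,\regB)$ to obtain $(\alpha,\beta)$ and outputs the pair $(\alpha\|r,\beta)$. Whenever the measurement lands in $V_r^{\mathsf{bad}}$, the pair is a valid signature with $\beta\neq\Sign(k,\alpha\|r)$, hence distinct from every queried pair $(\alpha'\|r_i,\Sign(k,\alpha'\|r_i))$—equality would force $\alpha\|r=\alpha'\|r_i$ and then $\beta=\Sign(k,\alpha\|r)$, a contradiction—so it is a fresh forgery. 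Consequently $\sum_t \Pr[t]\,\Pr[\mathsf{Good}\mid t]\,w_{\mathsf{bad}}(t)\le \negl(\secp)$.

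Finally I would run an averaging argument to extract $T_\secp'$. For $t\in T_\secp$ we have $\Pr[\mathsf{Good}\mid t]\ge \tfrac{1}{4p(\secp)}$ by \eqref{goodnonnegl}, so the previous bound gives $\sum_{t\in T_\secp}\Pr[t]\,w_{\mathsf{bad}}(t)\le 4p(\secp)\,\negl(\secp)=:\negl'(\secp)$. Setting $T_\secp' := \{t\in T_\secp : w_{\mathsf{bad}}(t)\le 1/p^{20}(\secp)\}$, Markov's inequality shows that the $t\in T_\secp$ with $w_{\mathsf{bad}}(t)>1/p^{20}(\secp)$ carry $\Pr$-mass at most $p^{20}(\secp)\,\negl'(\secp)$, which is negligible; combined with $\sum_{t\in T_\secp}\Pr[t]\ge \tfrac{1}{2p(\secp)}$ this yields $\sum_{t\in T_\secp'}\Pr[t]\ge \tfrac{1}{4p(\secp)}$ for all sufficiently large $\secp\in I$, i.e.\ on $I'$. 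For every $t\in T_\secp'$ the trace-distance bound $\sqrt{w_{\mathsf{bad}}(t)}\le 1/p^{10}(\secp)$ from the first step gives precisely the closeness to \eqref{whystate}.

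The step I expect to be the main obstacle is the reduction's faithful simulation combined with the correct accounting of the $\mathsf{Good}$ conditioning: $\cB$ must reproduce the \emph{coherent} verification (rather than a classical check) so that the post-measurement distribution of $(\alpha,\beta)$ genuinely reflects $w_{\mathsf{bad}}(t)$, and one must be careful that a break may take the form of a second valid signature on an already-queried message $\alpha\|r$ rather than a signature on a new message—which is exactly why strong, and not plain, EUF-CMA security is needed.
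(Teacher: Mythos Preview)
Your proposal is correct and follows essentially the same approach as the paper: decompose the post-verification state into the ``honest signature'' subspace and its orthogonal complement inside the accepting subspace, bound the bad weight via a reduction to strong EUF-CMA (the paper's reduction is implicit but identical to your forger $\cB$), and then use an averaging/Markov argument to extract $T_\secp'$. The only cosmetic difference is that the paper thresholds on the \emph{unnormalized} bad weight $\sum_{(\alpha,\beta)\in S}|d_{\alpha,\beta}|^2\le 1/(4p^{21})$ and then uses $\Pr[\mathsf{Good}\mid t]\ge 1/(4p)$ in the normalization step, whereas you threshold on the normalized weight $w_{\mathsf{bad}}(t)\le 1/p^{20}$ directly---these yield the same trace-distance bound $1/p^{10}$.
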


Now let us fix $t\in T_\secp'$.
Also, assume that $\mathsf{Good}$ occurred.
Because $T_\secp'\subseteq T_\secp$, it means that $t\in T_\secp$.
Then,
from \cref{dif},
\begin{align}
\Big|\Pr[1\gets\cA' \mid t,\mathsf{Good},b=0]
-\Pr[1\gets\cA' \mid t,\mathsf{Good},b=1]\Big|=\Delta
\label{advantage2}
\end{align}
for a non-negligible $\Delta\ge\frac{1}{2p(\secp)}$ for all $\secp\in I$.
Without loss of generality, we can assume that in \cref{last} of \cref{hyb1},
$\cA'$ applies a unitary $V$ on the state $\ket{\phi_b^{t,good}}$, and
measures the register $\regA$ in the computational basis to get $b'\in\bit$.
By \cref{advantage2}
we have
\begin{align}
V\ket{\phi_0^{t,good}}&=\sqrt{p}\ket{1}_{\regA}\ket{\nu_1}_{\regB,\regC}+\sqrt{1-p}\ket{0}_{\regA}\ket{\nu_0}_{\regB,\regC} \label{eq:V_phi_zero}\\    
V\ket{\phi_1^{t,good}}&=\sqrt{1-p+\Delta}\ket{0}_{\regA}\ket{\xi_0}_{\regB,\regC}+\sqrt{p-\Delta}\ket{1}_{\regA}\ket{\xi_1}_{\regB,\regC}    \label{eq:V_phi_one} 
\end{align}
for some real number $p$ and some normalized states $\ket{\nu_0},\ket{\nu_1},\ket{\xi_0},\ket{\xi_1}$.
(This is because any state can be written as $p\ket{1}\ket{\nu_1}+\sqrt{1-p}\ket{0}\ket{\nu_0}$
with some $p$ and normalized states $\ket{\nu_0},\ket{\nu_1}$, and due to \cref{advantage2}, the coefficients
of $\ket{1}\ket{\xi_1}$ has to be $\sqrt{p-\Delta}$.)
If we define $W$ as
$
   W\coloneqq V^\dagger (Z\otimes I) V, 
$
we have
\begin{align}
|\bra{\tilde{\phi}_b^{t,good}} W \ket{\tilde{\phi}_b^{t,good}}   
-\bra{\phi_b^{t,good}} W \ket{\phi_b^{t,good}}|\le\frac{2}{p^{10}(\secp)}   
\label{eq:2p10}
\end{align}
for all $\secp\in I'$
from \cref{why}.
Therefore,
\begin{align}
&   |
   c_0^*c_1\bra{0}\bra{\Sign(k,0\|r)}\bra{\Psi_0}W\ket{1}\ket{\Sign(k,1\|r)}\ket{\Psi_1}\\
   &+c_0c_1^*\bra{1}\bra{\Sign(k,1\|r)}\bra{\Psi_1}W\ket{0}\ket{\Sign(k,0\|r)}\ket{\Psi_0}|\\
  &=\frac{1}{4} |
  (\bra{\tilde{\phi}_0^{t,good}}+\bra{\tilde{\phi}_1^{t,good}})W(\ket{\tilde{\phi}_0^{t,good}}-\ket{\tilde{\phi}_1^{t,good}})\\
  &+(\bra{\tilde{\phi}_0^{t,good}}-\bra{\tilde{\phi}_1^{t,good}})W(\ket{\tilde{\phi}_0^{t,good}}+\ket{\tilde{\phi}_1^{t,good}})
  | \\
   &=\frac{1}{2} |
  \bra{\tilde{\phi}_0^{t,good}}W\ket{\tilde{\phi}_0^{t,good}}
  -\bra{\tilde{\phi}_1^{t,good}}W\ket{\tilde{\phi}_1^{t,good}}|\\
  &\ge\frac{1}{2} |
  \bra{\phi_0^{t,good}}W\ket{\phi_0^{t,good}}
  -\bra{\phi_1^{t,good}}W\ket{\phi_1^{t,good}}
  |-\frac{2}{p^{10}(\secp)}\label{eq:rewrite}\\
   \begin{split}
   &=
  \frac{1}{2}\left|\left(\sqrt{p}\bra{1}\bra{\nu_1}+\sqrt{1-p}\bra{0}\bra{\nu_0}\right)
  \left(-\sqrt{p}\ket{1}\ket{\nu_1}+\sqrt{1-p}\ket{0}\ket{\nu_0}\right)\right.\\
  &\left.~~~-
    \left(\sqrt{1-p+\Delta}\bra{0}\bra{\xi_0}+\sqrt{p-\Delta}\bra{1}\bra{\xi_1}\right) 
  \left(\sqrt{1-p+\Delta}\ket{0}\ket{\xi_0}-\sqrt{p-\Delta}\ket{1}\ket{\xi_1}\right)\right|
  -\frac{2}{p^{10}(\secp)}
  \end{split} \label{eq:expand}
  \\
  &=\frac{1}{2}\left|-p+(1-p)-(1-p+\Delta)+(p-\Delta)\right|-\frac{2}{p^{10}(\secp)}\\
  & =\Delta-\frac{2}{p^{10}(\secp)}\\ 
  & \ge\frac{1}{2p(\secp)}-\frac{2}{p^{10}(\secp)}\\ 
  & \ge\frac{1}{p(\secp)}
\end{align}
for all $\secp\in I'$.
Here, \Cref{eq:rewrite} follows from \cref{eq:2p10}, and 
\Cref{eq:expand} follows from \Cref{eq:V_phi_zero,eq:V_phi_one} and the definition of $W$. 
From the triangle inequality and the facts that $|c_0|\le1$ and $|c_1|\le1$,
\begin{align}
&\frac{1}{p(\secp)}\le
   |c_1|\cdot|\bra{0}\bra{\Sign(k,0\|r)}\bra{\Psi_0}W\ket{1}\ket{\Sign(k,1\|r)}\ket{\Psi_1}|\\
   &+|c_0|\cdot|\bra{1}\bra{\Sign(k,1\|r)}\bra{\Psi_1}W\ket{0}\ket{\Sign(k,0\|r)}\ket{\Psi_0}|
\end{align}
for all $\secp\in I'$.
Then,
\begin{align}
\frac{1}{2p(\secp)}\le
   |c_1|\cdot|\bra{0}\bra{\Sign(k,0\|r)}\bra{\Psi_0}W\ket{1}\ket{\Sign(k,1\|r)}\ket{\Psi_1}|
\end{align}
or
\begin{align}
&\frac{1}{2p(\secp)}\le
  |c_0|\cdot|\bra{1}\bra{\Sign(k,1\|r)}\bra{\Psi_1}W\ket{0}\ket{\Sign(k,0\|r)}\ket{\Psi_0}|
\end{align}
holds for all $\secp\in I'$.
Assume that the latter holds.
(The following proof can be easily modified even if the former holds.)
Then
\begin{align}
\frac{1}{4p^2(\secp)}&\le
   |c_0|^2\cdot|\bra{1}\bra{\Sign(k,1\|r)}\bra{\Psi_1}W\ket{0}\ket{\Sign(k,0\|r)}\ket{\Psi_0}|^2\\
   &\le
   |c_0|^2\cdot\|(I\otimes\bra{\Sign(k,1\|r)}\otimes I)W\ket{0}\ket{\Sign(k,0\|r)}\ket{\Psi_0}\|^2\label{delta}
\end{align}
for all $\secp\in I'$.
With this $W$, we construct the QPT adversary $\cA''$ 
as is shown in \cref{A''}.
\protocol{$\cA''$}
{$\cA''$}
{A''}
{
\begin{enumerate}
    \item 
    Simulate $\cA'$ in steps 1.-7. of \cref{hyb2}. 
    If $\bot$ is sent from $\cC$, output $\bot$ and halt. 
    \label{A''1}
    \item
    Measure the register $\regA$ in the computational basis. If the result is 1, output $\bot$ and halt.
    If the result is 0, measure the register $\regB$ of the post-measurement state in the computational basis to get the measurement result $\mu_0$. 
    \label{A''2}
    \item
    Apply $W$ on the post-measurement state 
    and measure the register $\regB$ in the computational basis to get the result $\mu_1$.
    \label{A''3}
    \item
    Output $(\mu_0,\mu_1)$.
\end{enumerate}
}

We show that $\cA''$
wins the game of \cref{hyb2} with a non-negligible probability for infinitely many $\secp$.
The probability that $t\in T_\secp'$ and $\mathsf{Good}$ occur in \cref{A''1} of \cref{A''}
is at least $\frac{1}{16p^2(\secp)}$ for all $\secp\in I'$, because
of the following reasons. First,
$\sum_{t\in T_\secp'}\Pr[t]\ge\frac{1}{4p(\secp)}$ for all $\secp\in I'$ from \cref{why}.
Second, because $t\in T_\secp'$ means $t\in T_\secp$, $\Pr[\mathsf{Good} \mid t]\ge\frac{1}{4p(\secp)}$ for all $\secp\in I$ from
\cref{goodnonnegl}.

Assume that $t\in T_\secp'$ and $\mathsf{Good}$ occur.
If $\cA''$ does the operations in \cref{A''2} and \cref{A''3} on $\ket{\tilde{\phi}_b^{t,good}}$,
the probability that $(\mu_0,\mu_1)=(\Sign(k,0\|r),\Sign(k,1\|r))$ is at least $\frac{1}{4p^2(\secp)}$ for all $\secp\in I'$
from \cref{delta}.
From \cref{why}, the trace distance between
$\ket{\phi_b^{t,good}}$ and $\ket{\tilde{\phi}_b^{t,good}}$ is at most $\frac{1}{p^{10}(\secp)}$ for all $\secp\in I'$.
Therefore,
if $\cA''$ does the operations in \cref{A''2} and \cref{A''3} on $\ket{\phi_b^{t,good}}$,
the probability that $(\mu_0,\mu_1)=(\Sign(k,0\|r),\Sign(k,1\|r))$ is at least $\frac{1}{4p^2(\secp)}-\frac{1}{p^{10}(\secp)}$ for all $\secp\in I'$.
Hence, the overall probability that $\cA''$ outputs $(\mu_0,\mu_1)=(\Sign(k,0\|r),\Sign(k,1\|r))$ is
non-negligible for infinitely many $\secp$.

We prove~\cref{why} to complete this subsection.
\begin{proof}[Proof of \cref{why}]
Fix $t\in T_\secp$.
Immediately before the coherent signature test in \cref{sign} of \cref{hyb1}, the entire state over the registers 
$(\regA,\regB,\regC)$
is generally written as 
$
\sum_{\alpha,\beta}d_{\alpha,\beta}\ket{\alpha}_\regA\ket{\beta}_\regB\ket{\Lambda_{\alpha,\beta}}_{\regC},
$
where $d_{\alpha,\beta}$ are some complex coefficients such that $\sum_{\alpha,\beta}|d_{\alpha,\beta}|^2=1$, 
and $\ket{\Lambda_{\alpha,\beta}}$ are some normalized states.
Define the set
\begin{align}
   S\coloneqq\{(\alpha,\beta):\Ver(\vk,\alpha\|r,\beta)=\top\wedge \beta\neq \Sign(k,\alpha\|r)\}. 
\end{align}
The (unnormalized) state after obtaining $\top$ in the coherent signature test in \cref{sign} of \cref{hyb1}
is
\begin{align}
&d_{0,\Sign(k,0\|r)}\ket{0}_\regA\ket{\Sign(k,0\|r)}_\regB
\ket{\Lambda_{0,\Sign(k,0\|r)}}_{\regC}\nonumber\\
&+d_{1,\sign(k,1\|r)}\ket{1}_\regA\ket{\Sign(k,1\|r)}_\regB
\ket{\Lambda_{1,\Sign(k,1\|r)}}_{\regC}\nonumber\\
&+\sum_{(\alpha,\beta)\in S}d_{\alpha,\beta}\ket{\alpha}_\regA\ket{\beta}_\regB\ket{\Lambda_{\alpha,\beta}}_{\regC}\label{unnormalized}.
\end{align}

Define 
\begin{equation}
T_\secp'\coloneqq\Big\{t\in T_\secp:\sum_{(\alpha,\beta)\in S}|d_{\alpha,\beta}|^2\le\frac{1}{4p^{21}(\secp)}\Big\}.
\end{equation}
If 
$
\sum_{t\in T_\secp\setminus T_\secp'}\Pr[t]\ge\frac{1}{4p(\secp)} 
$
for infinitely many $\secp\in I$, 
it contradicts
the strong EUF-CMA security of the digital signature scheme.
Therefore,
$
\sum_{t\in T_\secp\setminus T_\secp'}\Pr[t]\le\frac{1}{4p(\secp)} 
$
for all $\secp\in I'$, where $I'\coloneqq\{\secp\in I: \secp\ge\secp_0\}$ with a certain $\secp_0$.
This means that
\begin{align}
\sum_{t\in T_\secp'}\Pr[t]&\ge\sum_{t\in T_\secp}\Pr[t]-\frac{1}{4p(\secp)}
\ge\frac{1}{2p(\secp)}-\frac{1}{4p(\secp)}
=\frac{1}{4p(\secp)}
\end{align}
for all $\secp\in I'$.

Moreover, because $t\in T_\secp'$ means $t\in T_\secp$, $\Pr[\mathsf{Good} \mid t]\ge\frac{1}{4p(\secp)}$ for all $\secp\in I$ from \cref{goodnonnegl}.
Therefore,
for any $t\in T_\secp'$,
\begin{align}
|d_{0,\Sign(k,0\|r)}|^2    
+|d_{1,\Sign(k,1\|r)}|^2    
+\sum_{(\alpha,\beta)\in S}|d_{\alpha,\beta}|^2
\ge\frac{1}{4p(\secp)}
\end{align}
for all $\secp\in I$.
If we renormalize the state of \cref{unnormalized} and apply $Z^b$, 
we have
\begin{align}
&\ket{\phi_b^{t,good}}\\
&=\frac{d_{0,\Sign(k,0\|r)}}{\sqrt{|d_{0,\Sign(k,0\|r)}|^2+|d_{1,\Sign(k,1\|r)}|^2+\sum_{(\alpha,\beta)\in S}|d_{\alpha,\beta}|^2}}\ket{0}_\regA\ket{\Sign(k,0\|r)}_\regB
\ket{\Lambda_{0,\Sign(k,0\|r)}}_{\regC}\\
&+(-1)^b\frac{d_{1,\sign(k,1\|r)}}{\sqrt{|d_{0,\Sign(k,0\|r)}|^2+|d_{1,\Sign(k,1\|r)}|^2+\sum_{(\alpha,\beta)\in S}|d_{\alpha,\beta}|^2}}\ket{1}_\regA\ket{\Sign(k,1\|r)}_\regB
\ket{\Lambda_{1,\Sign(k,1\|r)}}_{\regC}\\
&+Z^b\frac{\sum_{(\alpha,\beta)\in S}d_{\alpha,\beta}}{\sqrt{|d_{0,\Sign(k,0\|r)}|^2+|d_{1,\Sign(k,1\|r)}|^2+\sum_{(\alpha,\beta)\in S}|d_{\alpha,\beta}|^2}}\ket{\alpha}_\regA\ket{\beta}_\regB
\ket{\Lambda_{\alpha,\beta}}_{\regC}.
\end{align}
For any $t\in T_\secp'$,
its trace distance to
the state
\begin{align}
&\frac{d_{0,\Sign(k,0\|r)}}{\sqrt{|d_{0,\Sign(k,0\|r)}|^2+|d_{1,\Sign(k,1\|r)}|^2}}\ket{0}_\regA\ket{\Sign(k,0\|r)}_\regB
\ket{\Lambda_{0,\Sign(k,0\|r)}}_{\regC}\\
&+(-1)^b\frac{d_{1,\sign(k,1\|r)}}{\sqrt{|d_{0,\Sign(k,0\|r)}|^2+|d_{1,\Sign(k,1\|r)}|^2}}\ket{1}_\regA\ket{\Sign(k,1\|r)}_\regB
\ket{\Lambda_{1,\Sign(k,1\|r)}}_{\regC}
\end{align}
is less than $\frac{1}{p^{10}(\secp)}$ for all $\secp\in I$.
\end{proof}

\subsection{No QPT Adversary Can Output Two Signatures}
\label{sec:Zhandry}
Here we show that no QPT adversary can win Hybrid 2 (\cref{hyb2}) with a non-negligible probability.
We first give an intuitive argument for the proof, and them give a precise proof.

Intuitive argument for the proof is as follows.
First, note that the probability that all $\{r_i\}_{i=1}^m$ are distinct in \cref{rs} in \cref{hyb2} is at least $1-\negl(\secp)$.
Therefore, we can assume  that all $\{r_i\}_{i=1}^m$ are distinct with a negligible loss in the adversary's winning probability.
If $r\notin\{r_i\}_{i=1}^m$, it is clear that $\cA$ cannot win the game of \cref{hyb2} except for a negligible probability.
The reason is that $\cA$ cannot find $\Sign(k,0\|r)$ or $\Sign(k,1\|r)$ except for a negligible probability due to the security of the digital signature scheme.
Therefore, we assume that $r$ is equal to one of the $\{r_i\}_{i=1}^m$.

Assume that, in the game of \cref{hyb2}, $\cC$ is replaced with $\cC'$ who is the same as $\cC$ except that
it measures the first qubit of $\ket{\psi_r}$ in the computational basis
before sending the states in \cref{send}.
Let $s\in\bit$ be the measurement result.
Then, for any QPT adversary $\cA$, the probability that $\cA$ wins
the game of \cref{hyb2}
is
negligible. The reason is that $\cA$ cannot find $\Sign(k,{s\oplus 1}\|r)$ 
except for a negligible probability due to the strong EUF-CMA security of the digital signature scheme.
From \cref{lem:BZ}, 
we therefore have
\begin{align}
   \Pr[(\Sign(k,0\|r),\Sign(k,1\|r))\gets\cA \mid \cC]&\le 2  
   \Pr[(\Sign(k,0\|r),\Sign(k,1\|r))\gets\cA \mid \cC']\label{ZB}\\
   &\le \negl(\secp), 
\end{align}
where the left-hand-side of \cref{ZB} is the probability that $\cA$ outputs
   $(\Sign(k,0\|r),\Sign(k,1\|r))$ with the challenger $\cC$,
   and the right-hand-side is that with the challenger $\cC'$.

We give a precise proof below.
Let $\mathsf{Alg}$ be an algorithm that, on input $(r_1,...,r_m)$, simulates $\cC$ and $\cA$
in \cref{hyb2} and outputs $(r,\mu_0,\mu_1)$.
The probability that $\cA$ wins in the game of \cref{hyb2} is
  \begin{align}
&\frac{1}{2^{um}}\sum_{r_1,...,r_m}\sum_r \Pr[(r,\Sign(k,0\|r),\Sign(k,1\|r))\gets\mathsf{Alg}(r_1,...,r_m)]      \\
&=
\frac{1}{2^{um}}\sum_{(r_1,...,r_m)\in R}\sum_r \Pr[(r,\Sign(k,0\|r),\Sign(k,1\|r))\gets\mathsf{Alg}(r_1,...,r_m)]      \\
&+\frac{1}{2^{um}}\sum_{(r_1,...,r_m)\notin R}\sum_r \Pr[(r,\Sign(k,0\|r),\Sign(k,1\|r))\gets\mathsf{Alg}(r_1,...,r_m)]      \\
&\le
\frac{1}{2^{um}}\sum_{(r_1,...,r_m)\in R}\sum_r \Pr[(r,\Sign(k,0\|r),\Sign(k,1\|r))\gets\mathsf{Alg}(r_1,...,r_m)]      
+\frac{1}{2^{um}}\sum_{(r_1,...,r_m)\notin R}      \\
&\le
\frac{1}{2^{um}}\sum_{(r_1,...,r_m)\in R}\sum_r \Pr[(r,\Sign(k,0\|r),\Sign(k,1\|r))\gets\mathsf{Alg}(r_1,...,r_m)]      
+\frac{(m-1)m}{2^{u}}\\
&=
\frac{1}{2^{um}}\sum_{(r_1,...,r_m)\in R}\sum_{r\in\{r_i\}_{i=1}^m} 
\Pr[(r,\Sign(k,0\|r),\Sign(k,1\|r))\gets\mathsf{Alg}(r_1,...,r_m)]      \\
&+\frac{1}{2^{um}}\sum_{(r_1,...,r_m)\in R}\sum_{r\notin\{r_i\}_{i=1}^m} 
\Pr[(r,\Sign(k,0\|r),\Sign(k,1\|r))\gets\mathsf{Alg}(r_1,...,r_m)]      
+\frac{(m-1)m}{2^{u}}\\
&\le
\frac{1}{2^{um}}\sum_{(r_1,...,r_m)\in R}\sum_{r\in\{r_i\}_{i=1}^m} 
\Pr[(r,\Sign(k,0\|r),\Sign(k,1\|r))\gets\mathsf{Alg}(r_1,...,r_m)]      \\
&+\frac{1}{2^{um}}\sum_{(r_1,...,r_m)\in R}\negl(\secp) 
+\frac{(m-1)m}{2^{u}}\label{EUC1}\\
&\le
\frac{1}{2^{um}}\sum_{(r_1,...,r_m)\in R}\sum_{r\in\{r_i\}_{i=1}^m} 
\Pr[(r,\Sign(k,0\|r),\Sign(k,1\|r))\gets\mathsf{Alg}(r_1,...,r_m)]      \\
&+\negl(\secp) 
+\frac{(m-1)m}{2^{u}}\\
&\le
\frac{1}{2^{um}}\sum_{(r_1,...,r_m)\in R}\sum_{r\in\{r_i\}_{i=1}^m} 
2\mbox{$\Pr$}'[(r,\Sign(k,0\|r),\Sign(k,1\|r))\gets\mathsf{Alg}(r_1,...,r_m)]      \label{BZ}\\
&+\negl(\secp) 
+\frac{(m-1)m}{2^{u}}\\
&\le
\frac{1}{2^{um}}\sum_{(r_1,...,r_m)\in R}\sum_{r\in\{r_i\}_{i=1}^m} 
\negl(\secp) 
+\negl(\secp) 
+\frac{(m-1)m}{2^{u}}\label{EUC2}\\
&\le
\negl(\secp) 
+\negl(\secp) 
+\frac{(m-1)m}{2^{u}}\\
&=\negl(\secp).
  \end{align} 
  Here, $R\coloneqq\{(r_1,...,r_m):\mbox{All of them are distinct}\}$.
  In \cref{EUC1}, we have used the strong EUF-CMA security of the digital signature scheme.
  $\Pr'$ is the probability that,
  in $\mathsf{Alg}$, $\cC$ is replaced with $\cC'$ who is the same as $\cC$ except that
it measures the first qubit of $\ket{\psi_r}$ in the computational basis
before sending the states in \cref{send}.
\cref{BZ} comes from \cref{lem:BZ}. 
\cref{EUC2} is from the strong EUF-CMA security of the digital signature scheme.
\ifnum\llncs=0

\section{Definition of Chosen Ciphertext Security}\label{sec:def_cca}

In this section, we define CCA security for QPKE and related security notions. \ifnum\llncs=1(Though we already defined IND-pkT-CCA security in \Cref{def_mainbody_indccapkta}, we restate it here again for the reader's convenience.) \fi
We start with an intermediate notion between CPA security and CCA security that we call security against challenge validity attack (CVA).

\begin{definition}[$\INDCVApkTA$ security]\label{def_indcvpkta}
For any polynomial $m$, and any QPT adversary $\cA$, we have
\begin{equation}
   \Pr\left[b\gets\cA(\ct^\ast,\cv,\st):
   \begin{array}{r}
   (\sk,\vk)\gets\SKGen(1^\secp)\\
   \pk_1,...,\pk_m\gets\PKGen(\sk)^{\otimes m}\\
   (\pk',\msg_0,\msg_1,\st)\gets\cA(\vk,\pk_1,...,\pk_m)\\
   b\gets \bit\\ 
   \ct^\ast\gets\Enc(\vk,\pk',\msg_b)\\
   \cv:=0\textrm{~if~}\Dec(\sk,\ct^*)=\bot\textrm{~and otherwise~}\cv:=1
   \end{array}
   \right] \le \frac{1}{2}+\negl(\secp).
\end{equation}
Here, $\pk_1,...,\pk_m\gets\PKGen(\sk)^{\otimes m}$
means that $\PKGen$ is executed $m$ times and $\pk_i$ is the output of the $i$th execution of $\PKGen$.
$\st$ is a quantum internal state of $\cA$, which can be entangled with $\pk'$.
\end{definition}

We then define CCA security for QPKE.

\begin{definition}[$\INDCCApkTA$ security]\label{def_indccapkta}
For any polynomial $m$, and any QPT adversary $\cA$, we have
\begin{equation}
   \Pr\left[b\gets\cA^{\ODec{2}(\cdot)}(\ct^\ast,\cv,\st):
   \begin{array}{r}
   (\sk,\vk)\gets\SKGen(1^\secp)\\
   \pk_1,...,\pk_m\gets\PKGen(\sk)^{\otimes m}\\
   (\pk',\msg_0,\msg_1,\st)\gets\cA^{\ODec{1}(\cdot)}(\vk,\pk_1,...,\pk_m)\\
   b\gets \bit\\ 
   \ct^\ast\gets\Enc(\vk,\pk',\msg_b)\\
   \cv:=0\textrm{~if~}\Dec(\sk,\ct^*)=\bot\textrm{~and otherwise~}\cv:=1
   \end{array}
   \right] \le \frac{1}{2}+\negl(\secp).
\end{equation}
Here, $\pk_1,...,\pk_m\gets\PKGen(\sk)^{\otimes m}$
means that $\PKGen$ is executed $m$ times and $\pk_i$ is the output of the $i$th execution of $\PKGen$.
$\st$ is a quantum internal state of $\cA$, which can be entangled with $\pk'$.
Also, $\ODec{1}(\ct)$ returns $\Dec(\sk,\ct)$ for any $\ct$.
$\ODec{2}$ behaves identically to $\ODec{1}$ except that $\ODec{2}$ returns $\bot$ to the input $\ct^*$.
\end{definition}

\begin{definition}[$\INDoneCCApkTA$ security]\label{def_ind1ccapkta}
$\INDoneCCApkTA$ security is defined in the same way as $\INDCCApkTA$ security except that in the security game we require that the total number of $\cA$'s query to $\ODec{1}$ and $\ODec{2}$ is at most one.
\end{definition}

\paragraph{Security under single public key.}
For $X\in\{\INDpkTA,\INDCVApkTA,\INDCCApkTA,\allowbreak \INDoneCCApkTA\}$ security, we define $\onekey{X}$ security as its variant where the number of public keys given to the adversary is fixed to one.
Note that $\onekey{X}$ security is implied by $X$ security for any $X\in\{\INDpkTA,\INDCVApkTA,\INDCCApkTA,\allowbreak\INDoneCCApkTA\}$.

\medskip

We also define statistical variant of decryption error detectability that is useful to achieve CCA security with our transformations.

\begin{definition}[Strong decryption error detectability]
\label{def:dec_err_new}
We say that a QPKE scheme has strong decryption error detectability if for 
any $\sk^\prime,\vk^\prime,\pk^\prime$, and $\msg$, we have
\begin{equation}
   \Pr\left[
   \msg'\neq \bot~\land~ 
   \msg'\neq \msg :
   \begin{array}{r}
   \ct\gets\Enc(\vk^\prime,\pk^\prime,\msg)\\
   \msg^\prime \gets\Dec(\sk^\prime,\ct)
   \end{array}
   \right] \le \negl(\secp).
\end{equation}
\end{definition}


\section{Transformations Achieving Chosen Ciphertext Security}\label{sec:cca_const}

In this section, we present the transformation from CPA secure QPKE scheme to CCA secure one. 
\ifnum\llncs=1 We remark that the transformation makes use of the intermediate security notions defined in \Cref{sec:def_cca}. \fi
The transformation consists of the following four subroutines.
\begin{enumerate}
\item Transformation from $\INDpkTCPA$ secure one to $\INDpkTCVA$ secure one presented in~\cref{sec:indcvapkta}.
\item Transformation from $\onekey{\INDpkTCVA}$ secure one to $\onekey{\INDpkToneCCA}$ secure one presented in~\cref{sec:indoneccapkta}.
\item Transformation from $\onekey{\INDpkToneCCA}$ secure one to $\onekey{\INDpkTCCA}$ secure one presented in \cref{sec:indccapkta}.
\item Transformation from $\onekey{\INDpkTCCA}$ secure one to $\INDpkTCCA$ secure one presented in \cref{sec:onekey_mkey}.
\end{enumerate}
Below, we first introduce the notion of tokenized MAC~\cite{EPRINT:BehSatShi21} in \cref{sec:ccaprep} that is used in the third transformation, and then provide each transformations.

\subsection{Preparations}\label{sec:ccaprep}
\begin{definition}[Tokenized MAC \cite{EPRINT:BehSatShi21}]\label{def:TMAC}
A tokenized MAC scheme with the message space $\bit^\ell$ is a set of algorithms 
$(\SKGen,\TKGen,\Sign,\Ver)$ such that
\begin{itemize}
    \item 
    $\SKGen(1^\secp)\to \sk:$
    It is a PPT algorithm that, on input the security parameter $\secp$, outputs
    a classical secret key $\sk$. 
    \item
    $\TKGen(\sk)\to \token:$
    It is a QPT algorithm that, on input $\sk$, outputs
    a quantum signing token $\token$.
    \item
    $\Sign(\token,\msg)\to \sigma:$ 
    It is a QPT algorithm that, on input $\token$ and a message $\msg\in\bit^\ell$,  outputs a classical signature $\sigma$. 
    \item
    $\Ver(\sk,\msg,\sigma)\to \top/\bot:$ 
    It is a classical deterministic polynomial-time algorithm that, on input $\sk$, $\msg$, and $\sigma$, outputs $\top$ or $\bot$.
\end{itemize}
We require the following correctness and unforgeability.

\paragraph{\bf Correctness:}
For any $\msg$,
\ifnum\llncs=0
\begin{equation}
   \Pr[\top\gets\Ver(\sk,\msg,\sigma):
   \sk\gets\SKGen(1^\secp),
   \token\gets\TKGen(\sk),
   \sigma\gets\Sign(\token,\msg)
   ]\ge1-\negl(\secp). 
\end{equation}
\else
\begin{equation}
   \Pr\left[\top\gets\Ver(\sk,\msg,\sigma):
   \begin{array}{r}
   \sk\gets\SKGen(1^\secp)\\
   \token\gets\TKGen(\sk)\\
   \sigma\gets\Sign(\token,\msg)
   \end{array}
   \right]\ge1-\negl(\secp). 
\end{equation}
\fi

\paragraph{\bf Unforgeability:}
For any QPT adversary $\cA$ with classical oracle access to the verification oracle $\Ver(\sk,\cdot,\cdot)$,
\begin{equation}
   \Pr\left[
   \begin{array}{r}
   \msg_1\ne\msg_2\\
   \land~ \top\gets\Ver(\sk,\msg_1,\sigma_1)\\
   \land~ \top\gets\Ver(\sk,\msg_2,\sigma_2)
    \end{array}
   :
   \begin{array}{r}
   \sk\gets\SKGen(1^\secp),\\
   \token\gets\TKGen(\sk),\\
   (\msg_a,\sigma_a)_{a\in[2]}\gets\cA^{\Ver(\sk,\cdot,\cdot)}(\token)\\
   \end{array}
   \right]\le\negl(\secp). 
\end{equation}
\end{definition}

\medskip

\begin{theorem}[\cite{EPRINT:BehSatShi21}]
Tokenized MAC exists if OWFs exist.
\end{theorem}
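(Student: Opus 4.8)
Since this is a result established in~\cite{EPRINT:BehSatShi21}, the plan is to reconstruct its construction and reduce unforgeability to an information-theoretic core together with standard OWF-based tools. First I would build a single-bit tokenized MAC $\TMAC_1$ from conjugate-coding/subspace states. Concretely, $\SKGen$ samples a uniformly random dimension-$n/2$ subspace $A\subseteq\mathbb{F}_2^n$ and sets $\sk$ to its description; $\TKGen(\sk)$ outputs the subspace state $\token=\ket{A}=\frac{1}{\sqrt{|A|}}\sum_{a\in A}\ket{a}$; $\Sign(\token,0)$ measures $\token$ in the computational basis to obtain $u\in A$, while $\Sign(\token,1)$ measures in the Hadamard basis to obtain $v\in A^\perp$; and $\Ver(\sk,c,\sigma)$ accepts iff $\sigma$ is a nonzero element of $A$ (when $c=0$) or of $A^\perp$ (when $c=1$). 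Correctness is immediate from the standard facts that measuring $\ket{A}$ in the two bases yields uniform elements of $A$ and $A^\perp$ respectively.

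The unforgeability of $\TMAC_1$ is exactly the statement that a single copy of $\ket{A}$ cannot be used to produce both a nonzero $u\in A$ and a nonzero $v\in A^\perp$, i.e.\ the \emph{direct-product hardness} of subspace states, which holds unconditionally. I would argue that the verification oracle $\Ver(\sk,\cdot,\cdot)$ in \Cref{def:TMAC} gives no useful help here: each query returns a single bit testing membership in $A$ or $A^\perp$, and since a uniformly random candidate lies in either space only with probability $2^{-n/2}$, a polynomial-query adversary learns essentially nothing about $A$ beyond what the token itself provides. Formally, a hybrid that answers all oracle queries by $\bot$ is statistically indistinguishable until the adversary first submits a valid nonzero vector, at which point it has already broken direct-product hardness.

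Next I would lift $\TMAC_1$ to arbitrary message length $\bit^\ell$ in a Lamport-style fashion: derive $\ell$ independent single-bit instances and sign $\msg=\msg_1\cdots\msg_\ell$ by signing each bit $\msg_i$ with the $i$-th instance. Any forger producing valid signatures on two distinct messages $\msg\neq\msg'$ must, at a position $j$ where they differ, exhibit valid $\TMAC_1$-signatures for both $0$ and $1$ under instance $j$, contradicting the above. To keep $\sk$ short and to make this robust in the presence of the verification oracle, OWFs enter through a PRF (generating the $\ell$ subspaces from a short seed) and a classical MAC that authenticates the index structure so the adversary cannot recombine signature components across instances; replacing PRF outputs by truly random values in a hybrid reduces the analysis to the unconditional argument above, and the classical MAC is itself implied by OWFs.

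The step I expect to be the main obstacle is the unconditional direct-product/monogamy analysis of the single token in the presence of the verification oracle: one must rule out that the adversary's two claimed signatures are produced jointly from an entangled state together with oracle-acquired information, which requires care beyond the clean ``measuring one basis destroys the other'' intuition. The remaining ingredients---the PRF hybrids, the Lamport-style amplification, and the classical-MAC binding---are routine once this core is in place.
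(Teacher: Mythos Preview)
The paper does not prove this theorem; it simply cites it from \cite{EPRINT:BehSatShi21} and uses it as a black box (the only remark is that the paper's unforgeability notion is weaker than the original, so the cited result a fortiori suffices). There is thus no proof in the paper to compare against.

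Your reconstruction follows the right high-level plan---subspace states for a one-bit token, direct-product hardness as the core, Lamport-style amplification with PRFs for compactness---and this is indeed the architecture behind the cited result. But your handling of the verification oracle has a real gap. The sentence ``a uniformly random candidate lies in either space only with probability $2^{-n/2}$'' does not apply: the adversary's queries are not uniformly random but may depend on the token $\ket{A}$ and on prior answers. For instance, after measuring $\ket{A}$ in the computational basis to get $u\in A$, the adversary can query $u+e_i$ for standard basis vectors $e_i$ and learn which $e_i$ lie in $A$, progressively mapping out the subspace. Your proposed hybrid (``answer all oracle queries by $\bot$ until the first valid nonzero vector'') is also broken: submitting \emph{one} valid vector is not a violation of direct-product hardness---the adversary can always do that by measuring the token---so the hybrids diverge immediately, long before any forgery. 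What you actually need is the stronger statement that direct-product hardness for subspace states holds even when the adversary is given classical membership oracles for $A$ and $A^\perp$; this is a separate, nontrivial theorem (established in the Ben-David--Sattath/Aaronson--Christiano line) and is not derivable from the ``random query'' heuristic you sketch.

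Once you invoke that oracle-robust direct-product theorem as a black box, the rest of your plan (Lamport amplification, PRF for compact keys, classical MAC to bind components) goes through essentially as you describe.
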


Note that the unforgeability in the above definition is weaker than that in the original definition by \cite{EPRINT:BehSatShi21}.
We use this weaker definition that is sufficient for our purpose for ease of exposition.

\subsection{$\INDCVApkTA$ Secure QPKE via Cut-and-Choose}\label{sec:indcvapkta}

We show a generic construction of $\INDCVApkTA$ secure QPKE from $\INDpkTA$ secure QPKE using the cut-and-choose technique.  

Let $\QPKE=(\QPKE.\SKGen,\QPKE.\PKGen,\QPKE.\Enc,\QPKE.\Dec)$ be a QPKE scheme with message space $\bit^\ell$. Then we construct a QPKE scheme $\CVA=(\CVA.\SKGen,\CVA.\PKGen,\CVA.\Enc,\CVA.\Dec)$ with message space $\bit^\ell$ as follows:
\begin{itemize}
    \item 
    $\CVA.\SKGen(1^\secp)\to (\sk,\vk):$
    Run $(\sk_i,\vk_i)\gets \QPKE.\SKGen(1^\secp)$ for every $i\in[4\secp]$. 
    Output $\sk:=(\sk_i)_{i\in[4\secp]}$ and $\vk:=(\vk_i)_{i\in[4\secp]}$.
    \item
    $\CVA.\PKGen(\sk)\to \pk:$
    Parse $\sk:=(\sk_i)_{i\in[4\secp]}$.
    Run $\pk_i \gets \QPKE.\PKGen(\sk_i)$ for $i\in[4\secp]$ and outputs $\pk:=(\pk_i)_{i\in[4\secp]}$. 
    \item
    $\CVA.\Enc(\vk,\pk,\msg)\to \ct:$
    Parse $\vk:=(\vk_i)_{i\in[4\secp]}$ and $\pk:=(\pk_i)_{i\in[4\secp]}$.
    Generate a random $2\secp$ size subset $\Test$ of $[4\secp]$. 
    Generate $u_i\la\bit^\ell$, run $\ct_i \gets \QPKE.\Enc(\vk_i,\pk_i,u_i)$ for every $i\in[4\secp]$.
    Set $v_i:=u_i$ if $i\in\Test$ and $v_i:=u_i\oplus\msg$ otherwise.
    Output $\ct:=(\Test, (\ct_i,v_i)_{i\in[4\secp]})$.
    \item
    $\CVA.\Dec(\sk,\ct)\to \msg:$
    Parse $\sk:=(\sk_i)_{i\in[4\secp]}$ and $\ct=(\Test,(\ct_i,v_i)_{i\in[4\secp]})$.
    Output $\bot$ if $\QPKE.\Dec(\sk_i,\ct_i)\ne v_i$ for some $i\in\Test$.
    Otherwise, run $u_i\la\QPKE.\Dec(\sk_i,\ct_i)$ and compute $\msg_i:=v_i\oplus u_i$ for every $i\in[4\secp]\setminus\Test$, and output most frequently appeared $\msg$. (If there are multiple such $\msg$, output the lexicographically first one.)
\end{itemize}

\paragraph{Correctness.}
Correctness of $\CVA$ immediately follows from correctness of $\QPKE$. 

\paragraph{Strong decryption error detectability.}
Let $(\sk^\prime,\vk^\prime,\pk^\prime,\msg)$ be any tuple of a secret key, verification key, public key, and message, where $\sk^\prime:=(\sk_i^\prime)_{i\in[4\secp]}$, $\vk^\prime:=(\vk_i^\prime)_{i\in[4\secp]}$, $\pk^\prime:=(\pk_i^\prime)_{i\in[4\secp]}$, and $\msg\in\bit^\ell$.
Suppose we pick $u_i\la\bit^\ell$, generate $\ct_i\la\QPKE.\Enc(\vk_i^\prime,\pk_i^\prime,u_i)$, and compute $u_i^\prime\la\QPKE.\Dec(\sk^\prime_i,\ct_i)$ for every $i\in[4\secp]$.
We consider the following two cases.
\begin{itemize}
\item The first case is $u_i\ne u_i^\prime$ for more than $\secp$ indices. In this case, a randomly chosen $2\secp$ size subset $\Test$ includes at least one index $i$ such that $u_i\ne u_i^\prime$ and thus the decryption result of $\ct:=(\Test,(\ct_i,v_i)_{i\in[4\secp]})$ for randomly chosen $\Test$ is $\bot$ with overwhelming probability, where $v_i:=u_i$ if $i\in\Test$ and otherwise $v_i:=u_i\oplus\msg$.
\item The second case is $u_i\ne u_i^\prime$ for less than $\secp$ indices.
In this case, for every choice of $\Test$, $\msg$ occupies the majority among $\msg_i:=\msg\oplus u_i \oplus u_i^\prime$ for $i\in[4\secp]\setminus\Test$.
Thus, the decryption result of $\ct:=(\Test, (\ct_i,v_i)_{i\in[4\secp]})$ is either $\bot$ or $\msg$, regardless of the choice of $\Test$.
\end{itemize}
This proves the strong decryption error detectability of $\CVA$.

\paragraph{$\INDCVApkTA$ security.}
We prove that if $\QPKE$ satisfies $\INDpkTA$ security, then $\CVA$ satisfies $\INDCVApkTA$ security.
We consider the following games.

\begin{description}
\item[$\hybi{0}$:]This is the original security experiment for the $\INDCVApkTA$ security of $\CVA$ played between $\cA$ and the challenger.
The detailed description is as follows.
\begin{enumerate}
\item The challenger generates $(\sk_i,\vk_i)\gets \QPKE.\SKGen(1^\secp)$ for every $i\in[4\secp]$, and set $\sk:=(\sk_i)_{i\in[4\secp]}$ and $\vk:=(\vk_i)_{i\in[4\secp]}$. The challenger generates $\pk_i \gets \QPKE.\PKGen(\sk_i)$ for $i\in[4\secp]$ and set $\pk:=(\pk_i)_{i\in[4\secp]}$.

\item The challenger runs $(\pk^\prime,\msg_0,\msg_1,\st)\la\cA(\vk,\pk)$.

\item The challenger parses $\pk^\prime:=(\pk_i^\prime)_{i\in[4\secp]}$ and picks $b\la\bit$. The challenger generates $\ct^*$ as follows.
\begin{itemize}
\item Generate a random $2\secp$ size subset $\Test^*$ of $[4\secp]$. 
\item Generate $u_i^*\la\bit^\ell$, run $\ct_i^* \gets \QPKE.\Enc(\vk_i,\pk_i^\prime,u_i^*)$ for every $i\in[4\secp]$.
\item Set $v_i^*:=u_i^*$ if $i\in\Test^*$ and $v_i^*:=u_i^*\oplus\msg_b$ otherwise.
\item Output $\ct^*:=(\Test^*, (\ct_i^*,v_i^*)_{i\in[4\secp]})$.
\end{itemize}
The challenger also sets $\cv:=0$ if $\CVA.\Dec(\sk,\ct^*)=\bot$ and otherwise sets $\cv:=1$.

\item The challenger runs $b^\prime\la\cA(\cv,\ct^*,\st)$.
The challenger outputs $1$ if $b=b^\prime$ and otherwise outputs $0$.

\end{enumerate}

\item[$\hybi{1}$:] This is the same as $\hybi{0}$ except that the challenger generates $\ct_i^*\la\QPKE.\Enc(\vk_i,\pk_i^\prime,0^\ell)$ for every $i\in[4\secp]\setminus\Test^*$.
\end{description}

We can prove $\abs{\Pr[1\la\hybi{0}]-\Pr[1\la\hybi{1}]}=\negl(\secp)$ using the $\INDpkTA$ security of $\QPKE$ with respect to instances such that the corresponding index $i$ is not included in $\Test^*$.
Note that the reduction needs to know whether $\CVA.\Dec(\sk,\ct^*)=\bot$ or not.
This is possible since it can be computed with only $\sk_i$ for $i\in\Test^*$, which is generated by the reduction itself.

In $\hybi{1}$, the challenge bit $b$ is completely hidden from the view of $\cA$ since $b$ is masked by $u_i$ for $i\in[4\secp]\setminus\Test^*$.
Thus, we have $\Pr[1\la\hybi{2}]=\frac{1}{2}$.
From the above discussions, $\CVA$ satisfies $\INDCVApkTA$ security.

\subsection{$\INDoneCCApkTA$ Secure QPKE from $\INDCVApkTA$ Secure One}\label{sec:indoneccapkta}

We show how to construct $\onekey{\INDoneCCApkTA}$ secure QPKE from $\onekey{\INDCVApkTA}$ secure one.  

Let $\CVA=(\CVA.\SKGen,\CVA.\PKGen,\CVA.\Enc,\CVA.\Dec)$ be a QPKE scheme with message space $\bit^\ell$ and 
$\SIG=(\SIG.\Gen,\SIG.\Sign,\SIG.\Ver)$ be a digital signature scheme whose verification key is of length $\sigvklen$. Then we construct a QPKE scheme $\oneCCA=(\oneCCA.\SKGen,\oneCCA.\PKGen,\oneCCA.\Enc,\oneCCA.\Dec)$ with message space $\bit^\ell$ as follows, where for a verification key $\sigvk$ of $\SIG$ and an integer $i\in[\sigvklen]$, $\sigvk[i]$ denotes the $i$-th bit of $\sigvk$:
\begin{itemize}
    \item 
    $\oneCCA.\SKGen(1^\secp)\to (\sk,\vk):$
    Run $(\cva.\sk_{i,\alpha},\cva.\vk_{i,\alpha})\gets \CVA.\SKGen(1^\secp)$ for every $i\in[\sigvklen]$ and $\alpha\in\bit$. 
    Output $\sk:=(\cva.\sk_{i,\alpha})_{i\in[\sigvklen],\alpha\in\bit}$ and $\vk:=(\cva.\vk_{i,\alpha})_{i\in[\sigvklen],\alpha\in\bit}$.
    \item
    $\oneCCA.\PKGen(\sk)\to \pk:$
    Parse $\sk:=(\cva.\sk_{i,\alpha})_{i\in[\sigvklen],\alpha\in\bit}$.
    Run $\cva.\pk_{i,\alpha} \gets \CVA.\PKGen(\cva.\sk_{i,\alpha})$ for every $i\in[\sigvklen]$ and $\alpha\in\bit$. Output $\pk:=(\cva.\pk_{i,\alpha})_{i\in[\sigvklen],\alpha\in\bit}$. 
    \item
    $\oneCCA.\Enc(\vk,\pk,\msg)\to \ct:$
    Parse $\vk:=(\cva.\vk_{i,\alpha})_{i\in[\sigvklen],\alpha\in\bit}$ and $\pk:=(\cva.\pk_{i,\alpha})_{i\in[\sigvklen],\alpha\in\bit}$.
    Run $(\sigvk,\sigk)\la\SIG.\Gen(1^\secp)$.
    Generate $u_i\la\bit^\ell$ for every $i\in[\sigvklen-1]$ and set $u_{\sigvklen}:=\msg\oplus\bigoplus_{i\in[\sigvklen-1]}u_i$.
    Run $\cva.\ct_i \gets \CVA.\Enc(\cva.\vk_{i,\sigvk[i]},\cva.\pk_{i,\sigvk[i]},u_i)$ for every $i\in[\sigvklen]$.
    Run $\sigma\la\SIG.\Sign(\sigk,\cva.\ct_1\|\cdots\|\cva.\ct_\sigvklen)$.
    Output $\ct:=(\sigvk, (\cva.\ct_i)_{i\in[\sigvklen]},\sigma)$.
    \item
    $\oneCCA.\Dec(\sk,\ct)\to \msg:$
    Parse $\sk:=(\cva.\sk_{i,\alpha})_{i\in[\sigvklen],\alpha\in\bit}$ and $\ct=(\sigvk,(\cva.\ct_i)_{i\in[\sigvklen]},\sigma)$.
    Output $\bot$ if $\SIG.\Ver(\sigvk,\cva.\ct_1\|\cdots\|\cva.\ct_{\sigvklen},\sigma)=\bot$ and otherwise go to the next step.
    Run $u_i\la\CVA.\Dec(\cva.\sk_{i,\sigvk[i]},\cva.\ct_i)$ for every $i\in[\sigvklen]$, and output $\bot$ if $u_i=\bot$ for some $i\in[\sigvklen]$.
    Otherwise, output $\bigoplus_{i\in[\sigvklen]}u_i$.
\end{itemize}

\paragraph{Correctness and strong decryption error detectability.}
The correctness and the strong decryption error detectability of $\oneCCA$ immediately follow from those of $\CVA$ and the correctness of $\SIG$. 

\paragraph{$\onekey{\INDoneCCApkTA}$ security.}
We prove that if $\CVA$ satisfies $\onekey{\INDCVApkTA}$ security and $\SIG$ satisfies strong unforgeability, then $\oneCCA$ satisfies $\onekey{\INDoneCCApkTA}$ security.

Let $\cA$ be any QPT adversary attacking the $\onekey{\INDoneCCApkTA}$ security of $\oneCCA$.
Without loss of generality, we assume that $\cA$ makes exactly one decryption query.
We proceed the proof using a sequence of experiments.

\begin{description}
\item[$\hybi{0}$:]This is the original security experiment for the $\onekey{\INDoneCCApkTA}$ security of $\oneCCA$ played between $\cA$ and the challenger.
The detailed description is as follows.
\begin{enumerate}
\item The challenger generates $(\cva.\sk_{i,\alpha},\cva.\vk_{i,\alpha})\gets \CVA.\SKGen(1^\secp)$ for every $i\in[\sigvklen]$ and $\alpha\in\bit$, and sets $\sk:=(\cva.\sk_{i,\alpha})_{i\in[\sigvklen],\alpha\in\bit}$ and $\vk:=(\cva.\vk_{i,\alpha})_{i\in[\sigvklen],\alpha\in\bit}$. The challenger generates $\cva.\pk_{i,\alpha} \gets \CVA.\PKGen(\cva.\sk_{i,\alpha})$ for $i\in[\sigvklen]$ and $\alpha\in\bit$, 
and sets $\pk:=(\cva.\pk_{i,\alpha})_{i\in[\sigvklen],\alpha\in\bit}$.

\item The challenger runs $(\pk^\prime,\msg_0,\msg_1,\st)\la\cA(\vk,\pk)^{\ODec{1}(\cdot)}$, where $\ODec{1}(\ct)$ behaves as follows.
\begin{itemize}
\item Parse $\ct=(\sigvk,(\cva.\ct_i)_{i\in[\sigvklen]},\sigma)$.
\item Output $\bot$ if $\SIG.\Ver(\sigvk,\cva.\ct_1\|\cdots\|\cva.\ct_{\sigvklen},\sigma)=\bot$ and otherwise go to the next step.
\item Run $u_i\la\CVA.\Dec(\cva.\sk_{i,\sigvk[i]},\cva.\ct_i)$ for every $i\in[\sigvklen]$, and output $\bot$ if $u_i=\bot$ for some $i\in[\sigvklen]$.
\item Otherwise, output $\bigoplus_{i\in[\sigvklen]}u_i$.
\end{itemize}

\item The challenger parses $\pk^\prime:=(\cva.\pk_{i,\alpha}^\prime)_{i\in[\sigvklen],\alpha\in\bit}$ and picks $b\la\bit$. The challenger generates $\ct^*$ as follows.
\begin{itemize}
\item Run $(\sigvk^*,\sigk^*)\la\SIG.\Gen(1^\secp)$.
\item Generate $u_i^*\la\bit^\ell$ for every $i\in[\sigvklen-1]$ and set $u_{\sigvklen}^*:=\msg_b\oplus\bigoplus_{i\in[\sigvklen-1]}u_i^*$.
\item Run $\cva.\ct_i^* \gets \CVA.\Enc(\cva.\vk_{i,\sigvk^*[i]},\cva.\pk_{i,\sigvk^*[i]}^\prime,u_i^*)$ for every $i\in[\sigvklen]$.
\item Run $\sigma^*\la\SIG.\Sign(\sigk^*,\cva.\ct_1^*\|\cdots\|\cva.\ct_\sigvklen^*)$ and set $\ct^*:=(\sigvk^*,(\cva.\ct_i^*)_{i\in[\sigvklen]},\sigma^*)$.
\end{itemize}
The challenger also sets $\cv:=0$ if $\oneCCA.\Dec(\sk,\ct^*)=\bot$ and otherwise sets $\cv:=1$.

\item The challenger runs $b^\prime\la\cA(\cv,\ct^*,\st)^{\ODec{2}(\cdot)}$, where $\ODec{2}$ behaves exactly in the same way as $\ODec{1}$ except that $\ODec{2}$ given $\ct$ returns $\bot$ if $\ct=\ct^*$.
The challenger outputs $1$ if $b=b^\prime$ and otherwise outputs $0$.

\end{enumerate}
\item[$\hybi{1}$:] This is the same as $\hybi{0}$ except that the challenger generates the key pair $(\sigvk^*,\sigk^*)$ of $\SIG$ that is used to generate the challenge ciphertext at the beginning of the game, and $\ODec{1}$ and $\ODec{2}$ behave as follows.
\begin{itemize}
\item Parse $\ct=(\sigvk,(\cva.\ct_i)_{i\in[\sigvklen]},\sigma)$.
\item Output $\bot$ if $\sigvk=\sigvk^*$ and otherwise go to the next step.
\item Output $\bot$ if $\SIG.\Ver(\sigvk,\cva.\ct_1\|\cdots\|\cva.\ct_{\sigvklen},\sigma)=\bot$ and otherwise go to the next step.
\item Run $u_i\la\CVA.\Dec(\cva.\sk_{i,\sigvk[i]},\cva.\ct_i)$ for every $i\in[\sigvklen]$, and output $\bot$ if $u_i=\bot$ for some $i\in[\sigvklen]$.
\item Otherwise, output $\bigoplus_{i\in[\sigvklen]}u_i$.
\end{itemize}
\end{description}

We define the following two events.
\begin{description}
\item[$\Forge_{j,1}$:]In $\hybi{j}$, $\cA$ queries $\ct=(\sigvk,(\cva.\ct_i)_{i\in[\sigvklen]},\sigma)$ to $\ODec{1}$ such that $\sigvk=\sigvk^*$ and $\SIG.\Ver(\sigvk,\allowbreak\cva.\ct_1\|\cdots\|\cva.\ct_{\sigvklen},\sigma)=\top$.
\item[$\Forge_{j,2}$:]In $\hybi{j}$, $\cA$ queries $\ct=(\sigvk,(\cva.\ct_i)_{i\in[\sigvklen]},\sigma)$ to $\ODec{2}$ such that $\sigvk=\sigvk^*$, $\ct\ne\ct^*$, and $\SIG.\Ver(\sigvk,\allowbreak\cva.\ct_1\|\cdots\|\cva.\ct_{\sigvklen},\sigma)=\top$.
\end{description}
We also let $\Forge_j=\Forge_{j,1}\lor\Forge_{j,2}$.
$\hybi{0}$ and $\hybi{1}$ are identical games unless the events $\Forge_0$ and $\Forge_1$ happen in $\hybi{0}$ and $\hybi{1}$, respectively.
Thus, we have $\Pr[1\la\hybi{0}\land\lnot\Forge_0]=\Pr[1\la\hybi{1}\land\lnot\Forge_1]$ and $\Pr[\Forge_0]=\Pr[\Forge_1]$.
Then, we have
\begin{align}
\abs{\Pr[1\la\hybi{0}]-\Pr[1\la\hybi{1}]}
&\le \abs{\Pr[1\la\hybi{0}\land\Forge_0]-\Pr[1\la\hybi{1}\land\Forge_1]}\\
&+\abs{\Pr[1\la\hybi{0}\land\lnot\Forge_0]-\Pr[1\la\hybi{1}\land\lnot\Forge_1]}\\
&=\abs{\Pr[1\la\hybi{0}\land\Forge_0]-\Pr[1\la\hybi{1}\land\Forge_1]}\\
&\le \Pr[\Forge_1]\cdot\abs{\Pr[1\la\hybi{0}|\Forge_0]-\Pr[1\la\hybi{1}|\Forge_1]}\\
&\le \Pr[\Forge_1]\\
&\le\Pr[\Forge_{1,1}]+\Pr[\Forge_{1,2}].
\end{align}
From the strong unforgeability of $\SIG$, we have $\Pr[\Forge_{1,1}]\le\negl(\secp)$ and $\Pr[\Forge_{1,2}]\le\negl(\secp)$, and thus obtain $\abs{\Pr[1\la\hybi{0}]-\Pr[1\la\hybi{1}]}\le\negl(\secp)$.

\begin{description}
\item[$\hybi{2}$:] This is the same as $\hybi{1}$ except that the challenger generates 
\begin{align}
\cva.\ct_{i^*}^*\la\CVA.\Enc(\cva.\vk_{i^*,\sigvk^*[i^*]},\cva.\pk_{i^*,\sigvk^*[i^*]}^\prime,0^\ell)
\end{align}
for randomly chosen $i^*\la[\sigvklen]$.
\end{description}

To estimate $\abs{\Pr[1\la\hybi{1}]-\Pr[1\la\hybi{2}]}$, we construct the following adversary $\cB$ that uses $\cA$ and attacks the $\onekey{\INDCVApkTA}$ security of $\CVA$.
\begin{enumerate}
\item Given $(\cva.\vk,\cva.\pk)$, $\cB$ generates $(\sigvk^*,\sigk^*)\la\SIG.\Gen(1^\secp)$, picks $i^*\la[\sigvklen]$, and sets $\cva.\vk_{i^*,\sigvk^*[i^*]}:=\cva.\vk$ and $\cva.\pk_{i^*,\sigvk^*[i^*]}:=\cva.\pk$.
$\cB$ generates $(\cva.\sk_{i,\alpha},\cva.\vk_{i,\alpha})\gets \CVA.\SKGen(1^\secp)$ and $\cva.\pk_{i,\alpha} \gets \CVA.\PKGen(\cva.\sk_{i,\alpha})$ for every $(i,\alpha)\in[\sigvklen]\times\bit\setminus\{(i^*,\sigvk^*[i^*])\}$.
$\cB$ sets $\vk:=(\cva.\vk_{i,\alpha})_{i\in[\sigvklen],\alpha\in\bit}$ and $\pk:=(\cva.\pk_{i,\alpha})_{i\in[\sigvklen],\alpha\in\bit}$.

\item $\cB$ runs $(\pk^\prime,\msg_0,\msg_1,\st)\la\cA(\pk,\vk)^{\ODec{1}(\cdot)}$, where $\ODec{1}(\ct)$ is simulated as follows.
\begin{itemize}
\item Parse $\ct:=(\sigvk,(\cva.\ct_i)_{i\in[\sigvklen]},\sigma)$.
\item If $\sigvk=\sigvk^*$ or $\SIG.\Ver(\sigvk,\cva.\ct_1\|\cdots\|\cva.\ct_{\sigvklen},\sigma)=\bot$, return $\bot$ with probability $\frac{1}{\sigvklen}$ and abort with output $\beta^\prime=0$ with probability $\frac{\sigvklen-1}{\sigvklen}$.
Otherwise, go to the next step.
\item If $i^*$ is not the smallest index $i$ such that $\sigvk[i]\ne\sigvk^*[i]$, abort with output $\beta^\prime=0$. Otherwise, go to the next step.
\item Return $\oneCCA.\Dec(\sk,\ct)$. Note that in this case, $\cB$ can compute $\oneCCA.\Dec(\sk,\ct)$ by using $\cva.\sk_{i,\alpha}$ for $(i,\alpha)\in[\sigvklen]\times\bit\setminus\{(i^*,\sigvk^*[i^*])\}$.
\end{itemize}

\item $\cB$ parses $\pk^\prime:=(\cva.\pk_{i,\alpha}^\prime)_{i\in[\sigvklen],\alpha\in\bit}$ and picks $b\la\bit$. $\cB$ generates $\ct^*$ as follows.
\begin{itemize}
\item Generate $u_i^*\la\bit^\ell$ for every $i\in[\sigvklen]\setminus\{i^*\}$ and set $u_{i^*}^*:=\msg_b\oplus\bigoplus_{i\in[\sigvklen]\setminus\{i^*\}}u_i^*$. 
\item Run $\cva.\ct_i^* \gets \CVA.\Enc(\cva.\vk_{i,\sigvk^*[i]},\cva.\pk_{i,\sigvk^*[i]},u_i^*)$ for every $i\in[\sigvklen]\setminus\{i^*\}$.
\item Output $(\cva.\pk_{i^*,\sigvk^*[i^*]}^\prime,u_{i^*}^*,0^\ell,\st_\cB)$, where $\st_\cB$ includes all information $\cB$ knows at this point.
\item Obtain $(\ct_{i^*}^*,\cv_{i^*},\st_{\cB})$. Set $\cva.\ct^*_{i^*}\coloneqq \ct_{i^*}^*$.
\item Generate $\sigma^*\la\SIG.\Sign(\sigk^*,\cva.\ct_1^*\|\cdots\|\cva.\ct_\sigvklen^*)$ and set $\ct^*:=(\sigvk^*,(\cva.\ct_i^*)_{i\in[\sigvklen]},\sigma^*)$.
\end{itemize}
$\cB$ also sets $\cv=0$ if $\cv_{i^*}=0$.
Otherwise, $\cB$ sets $\cv=1$ if and only if $\CVA.\Dec(\cva.\sk_{i,\sigvk^*[i]},\cva.\ct_{i}^*)\ne\bot$ holds for every $i\in[\sigvklen]\setminus\{i^*\}$.
\item $\cB$ runs $b^\prime\la\cA(\ct^*,\cv,\st)^{\ODec{2}(\cdot)}$, where $\ODec{2}$ is simulated in exactly the same way as $\ODec{1}$.
$\cB$ outputs $\beta^\prime=1$ if $b=b^\prime$ and otherwise outputs $\beta^\prime=0$.

\end{enumerate}

We define $\Good$ as the event that $\cB$ does not abort when simulating decryption oracles.
We also let the challenge bit in the security experiment played by $\cB$ be $\beta$.
$\cB$ aborts with probability $\sigvklen-1/\sigvklen$ regardless of the value of $\beta$, that is, $\Pr[\Good|\beta=0]=\Pr[\Good|\beta=1]=1/\sigvklen$ holds.
Then, $\cB$'s advantage is calculated as follows.

\begin{align}
\abs{\Pr[\beta'=1|\beta=0]-\Pr[\beta'=1|\beta=1]}
&=\abs{\Pr[b=b^\prime \land \Good|\beta=0]-\Pr[b=b^\prime \land \Good|\beta=1]}\\
&= \frac{1}{\sigvklen}\abs{\Pr[b=b^\prime | \beta=0 \land \Good]-\Pr[b=b^\prime | \beta=1 \land \Good]}\\
&=\frac{1}{\sigvklen}\abs{\Pr[1\la\hybi{1}]-\Pr[1\la\hybi{2}]}.
\end{align}

The second line follows from the fact that we have $\Pr[\Good|\beta=0]=\Pr[\Good|\beta=1]=1/\sigvklen$ as stated above.
The third line follows since $\cB$ perfectly simulates $\hybi{1}$ (resp. $\hybi{2}$) conditioned that $\beta=0$ (resp. $\beta=1$) and the event $\Good$ occurs.
Thus, from the $\onekey{\INDCVApkTA}$ security of $\CVA$, we have $\abs{\Pr[1\la\hybi{1}]-\Pr[1\la\hybi{2}]}\le\negl(\secp)$.

Clearly, we have $\Pr[1\la\hybi{2}]=\frac{1}{2}$.
From the above discussions, $\oneCCA$ satisfies $\onekey{\INDoneCCApkTA}$ security.

\subsection{Boosting $\onekey{\INDoneCCApkTA}$ Security into $\onekey{\INDCCApkTA}$ Security}\label{sec:indccapkta}

We show how to transform $\onekey{\INDoneCCApkTA}$ secure QPKE into $\onekey{\INDCCApkTA}$ secure one using tokenized MAC.  

We construct a QPKE scheme $\CCA=(\CCA.\SKGen,\CCA.\PKGen,\CCA.\Enc,\CCA.\Dec)$ using the following building blocks.
\begin{itemize}
\item A QPKE scheme $\oneCCA=(\oneCCA.\SKGen,\oneCCA.\PKGen,\oneCCA.\Enc,\oneCCA.\Dec)$.
\item A tokenized MAC scheme $\TMAC=(\TMAC.\SKGen,\TMAC.\TKGen,\TMAC.\Sign,\TMAC.\Ver)$.
\item A signature scheme $\SIG=(\SIG.\Gen,\SIG.\Sign,\SIG.\Ver)$.
\end{itemize}

The construction of $\CCA$ is as follows.

\begin{itemize}
    \item 
    $\CCA.\SKGen(1^\secp)\to (\sk,\vk):$
    Run $(\onecca.\sk,\onecca.\vk)\gets \oneCCA.\SKGen(1^\secp)$ and $\mk\la\TMAC.\SKGen(1^\secp)$. 
    Output $\sk:=(\onecca.\sk,\mk)$ and $\vk:=\onecca.\vk$.
    \item
    $\CCA.\PKGen(\sk)\to \pk:$
    Parse $\sk:=(\onecca.\sk,\mk)$.
    Run $\onecca.\pk \gets \oneCCA.\PKGen(\onecca.\sk)$ and $\token\la\TMAC.\TKGen(\mk)$ and outputs $\pk:=(\onecca.\pk,\token)$. 
    \item
    $\CCA.\Enc(\vk,\pk,\msg)\to \ct:$
    Parse $\vk:=\onecca.\vk$ and $\pk:=(\onecca.\pk,\token)$.
    Run $(\sigvk,\sigk)\la\SIG.\Gen(1^\secp)$.
    Run $\onecca.\ct \gets \oneCCA.\Enc(\onecca.\vk,\onecca.\pk,\sigvk\|\msg)$.
    Run $\tmac.\sigma\la\TMAC.\Sign(\token,\allowbreak\onecca.\ct)$.
    Run $\sig.\sigma\la\SIG.\Sign(\sigk,\onecca.\ct\|\tmac.\sigma)$.
    Output $\ct:=(\onecca.\ct,\tmac.\sigma,\sig.\sigma)$.
    \item
    $\CCA.\Dec(\sk,\ct)\to \msg:$
    Parse $\sk:=(\onecca.\sk,\mk)$ and $\ct=(\onecca.\ct,\tmac.\sigma,\sig.\sigma)$.
    Output $\bot$ if $\TMAC.\Ver(\mk,\onecca.\ct,\tmac.\sigma)=\bot$ and otherwise go to the next step.
    Run $\sigvk\|\msg\la\oneCCA.\Dec(\onecca.\sk,\onecca.\ct)$, and output $\bot$ if $\sigvk\|\msg=\bot$ or $\SIG.\Ver(\sigvk,\onecca.\ct\|\tmac.\sigma,\sig.\sigma)=\bot$.
    Otherwise, output $\msg$.
\end{itemize}

\paragraph{Correctness and strong decryption error detectability.}
The correctness and the strong decryption error detectability of $\CCA$ immediately follow from those of $\oneCCA$ and the correctness of $\TMAC$ and $\SIG$. 

\paragraph{$\onekey{\INDCCApkTA}$ security.}
We prove that if $\oneCCA$ satisfies $\onekey{\INDoneCCApkTA}$ security, $\TMAC$ satisfies unforgeability, and $\SIG$ satisfies strong unforgeability, then $\CCA$ satisfies $\onekey{\INDCCApkTA}$ security.

Let $\cA$ be any QPT adversary attacking the $\onekey{\INDCCApkTA}$ security of $\CCA$.
We proceed the proof using a sequence of experiments.

\begin{description}
\item[$\hybi{0}$:]This is the original security experiment for the $\onekey{\INDCCApkTA}$ security of $\CCA$ played between $\cA$ and the challenger. The detailed description is as follows.
\begin{enumerate}
\item The challenger generates $(\onecca.\sk,\onecca.\vk)\gets \oneCCA.\SKGen(1^\secp)$ and $\mk\la\TMAC.\SKGen(1^\secp)$, and sets $\sk:=(\onecca.\sk,\mk)$ and $\vk:=\onecca.\vk$.
The challenger also generates $\onecca.\pk \gets \oneCCA.\PKGen(\onecca.\sk)$ and $\token\la\TMAC.\TKGen(\mk)$ and sets $\pk:=(\onecca.\pk,\token)$.

\item The challenger runs $(\pk^\prime,\msg_0,\msg_1,\st)\la\cA(\vk,\pk)^{\ODec{1}(\cdot)}$, where $\ODec{1}(\ct)$ behaves as follows.
\begin{itemize}
\item Parse $\ct=(\onecca.\ct,\tmac.\sigma,\sig.\sigma)$.
\item Output $\bot$ if $\TMAC.\Ver(\mk,\onecca.\ct,\tmac.\sigma)=\bot$ and otherwise go to the next step.
\item Run $\sigvk\|\msg\la\oneCCA.\Dec(\onecca.\sk,\onecca.\ct)$, and output $\bot$ if $\sigvk\|\msg=\bot$ or $\SIG.\Ver(\sigvk,\onecca.\ct\|\tmac.\sigma,\sig.\sigma)=\bot$. Otherwise, output $\msg$.
\end{itemize}
\item The challenger parses $\pk^\prime:=(\onecca.\pk^\prime,\token^\prime)$ and picks $b\la\bit$.
The challenger generates $\ct^*$ as follows.
\begin{itemize}
\item Run $(\sigvk^*,\sigk^*)\la\SIG.\Gen(1^\secp)$. 
\item Run $\onecca.\ct^* \gets \oneCCA.\Enc(\onecca.\vk,\onecca.\pk^\prime,\sigvk^*\|\msg_b)$.
\item Run $\tmac.\sigma^*\la\TMAC.\Sign(\token^\prime,\onecca.\ct^*)$.
\item Run $\sig.\sigma^*\la\SIG.\Sign(\sigk^*,\onecca.\ct^*\|\tmac.\sigma^*)$.
\item Set $\ct^*:=(\onecca.\ct^*,\tmac.\sigma^*,\sig.\sigma^*)$.
\end{itemize}
The challenger sets $\cv=0$ if $\CCA.\Dec(\sk,\ct^*)=\bot$ and $\cv=1$ otherwise.
\item The challenger runs $b^\prime\la\cA(\ct^*,\cv,\st)^{\ODec{2}(\cdot)}$, where $\ODec{2}(\ct)$ behaves in the same way as $\ODec{1}$ except that $\ODec{2}$ returns $\bot$ if $\ct=\ct^*$.
The challenger outputs $1$ if $b=b^\prime$ and otherwise outputs $0$.
\end{enumerate}
\item[$\hybi{1}$:] This is the same as $\hybi{0}$ except that $\ODec{2}$ given $\ct=(\onecca.\ct,\tmac.\sigma,\sig.\sigma)$ returns $\bot$ if $\onecca.\ct=\onecca.\ct^*$. 
\end{description}

We define the following events.
\begin{description}
\item[$\DecError_j$:]In $\hybi{j}$, It holds that $\oneCCA.\Dec(\onecca.\sk,\onecca.\ct^*)\notin\{\sigvk^*\|\msg_b,\bot\}$.
\item[$\Forge_j$:]In $\hybi{j}$, $\cA$ queries $\ct=(\onecca.\ct,\tmac.\sigma,\sig.\sigma)$ to $\ODec{2}$ such that $\onecca.\ct=\onecca.\ct^*$, $\ct\ne\ct^*$, and $\SIG.\Ver(\sigvk,\onecca.\ct\|\tmac.\sigma,\sig.\sigma)=\top$, where $\sigvk\|\msg\la\oneCCA.\Dec(\onecca.\sk,\onecca.\ct)$.
\end{description}
$\ODec{2}$ returns $\bot$ for a queried ciphertext $\ct=(\onecca.\ct,\tmac.\sigma,\sig.\sigma)$ such that $\onecca.\ct=\onecca.\ct^*$ in $\hybi{0}$, unless the event $\DecError_0$ or $\Forge_0$ occur.
Thus, we have $\abs{\Pr[1\la\hybi{0}]-\Pr[1\la\hybi{1}]}\le\Pr[\DecError_1\lor\Forge_1]\le\Pr[\DecError_1]+\Pr[\Forge_1]$.
We have $\Pr[\DecError_1]\le\negl(\secp)$ from the strong decryption error detectability of $\oneCCA$ and $\Pr[\Forge_1]\le\negl(\secp)$ from the strong unforgeability of $\SIG$.
From these, we obtain $\abs{\Pr[1\la\hybi{0}]-\Pr[1\la\hybi{1}]}\le\negl(\secp)$.

\begin{description}
\item[$\hybi{2}$:] This is the same as $\hybi{1}$ except that $\ODec{1}$ has a state $(s,t)$ that is initially set to $(0,\bot)$ and behaves as follows.
\begin{itemize}
\item Parse $\ct=(\onecca.\ct,\tmac.\sigma,\sig.\sigma)$.
\item If $s=1$, do the following
\begin{itemize}
\item Output $\bot$ if $\onecca.\ct\ne t$. Otherwise, go to the next step.
\item If $\TMAC.\Ver(\mk,\onecca.\ct,\tmac.\sigma)=\bot $ output $\bot$. Otherwise, go to the next step.
\item Run $\sigvk\|\msg\la\oneCCA.\Dec(\onecca.\sk,\onecca.\ct)$, and output $\bot$ if $\sigvk\|\msg=\bot$ or $\SIG.\Ver(\sigvk,\allowbreak\onecca.\ct\|\tmac.\sigma,\sig.\sigma)=\bot$. Otherwise, output $\msg$.
\end{itemize} 
\item If $s=0$, do the following.
\begin{itemize}
\item If $\TMAC.\Ver(\mk,\onecca.\ct,\tmac.\sigma)=\bot $ output $\bot$. Otherwise, update the state $(s,t)$ into $(1,\onecca.\ct)$ and go to the next step.
\item Run $\sigvk\|\msg\la\oneCCA.\Dec(\onecca.\sk,\onecca.\ct)$, and output $\bot$ if $\sigvk\|\msg=\bot$ or $\SIG.\Ver(\sigvk,\allowbreak\onecca.\ct\|\tmac.\sigma,\sig.\sigma)=\bot$. Otherwise, output $\msg$.
\end{itemize}
\end{itemize}
Also, the state $(s,t)$ is passed to $\ODec{2}$ at the end of the execution of $\ODec{1}$ and $\ODec{2}$ behaves in the same way as $\ODec{1}$ except that $\ODec{2}$ given $\ct=(\onecca.\ct,\tmac.\sigma,\sig.\sigma)$ returns $\bot$ if $\onecca.\ct=\onecca.\ct^*$.
\end{description}

From the unforgeability of $\TMAC$, we have $\abs{\Pr[1\la\hybi{1}]-\Pr[1\la\hybi{2}]}\le\negl(\secp)$.

In $\hybi{2}$, the decryption oracle decrypts at most one queried ciphertext $\ct=(\onecca.\ct,\tmac.\sigma,\sig.\sigma)$ such that $\onecca.\ct\ne\onecca.\ct^*$.
Then, from the $\onekey{\INDoneCCApkTA}$ security of $\oneCCA$, we have $\Pr[1\la\hybi{2}]\le\frac{1}{2}+\negl(\secp)$.
From the above discussions, $\CCA$ satisfies $\onekey{\INDCCApkTA}$ security.

\subsection{$\onekey{\INDCCApkTA}$ Security to $\INDCCApkTA$ Security}\label{sec:onekey_mkey}

We show how to construct $\INDCCApkTA$ secure QPKE from $\onekey{\INDCCApkTA}$ secure one.  

We construct a QPKE scheme $\Mkey=(\Mkey.\SKGen,\Mkey.\PKGen,\Mkey.\Enc,\Mkey.\Dec)$ using the following building blocks.
\begin{itemize}
\item A QPKE scheme $\Onekey=(\Onekey.\SKGen,\Onekey.\PKGen,\Onekey.\Enc,\Onekey.\Dec)$.
\item PRFs $\{\PRF_K\}_{K\in \bit^\secp}$.
\item A signature scheme $\SIG=(\SIG.\Gen,\SIG.\Sign,\SIG.\Ver)$.
\end{itemize}

The construction of $\Mkey$ is as follows.

\begin{itemize}
    \item 
    $\Mkey.\SKGen(1^\secp)\to (\sk,\vk):$
    Generate $K\la\bit^\secp$ and $(\sigvk,\sigk)\la\SIG.\Gen(1^\secp)$.
    Output $\sk:=(K,\sigk)$ and $\vk:=\sigvk$.
    \item
    $\Mkey.\PKGen(\sk)\to \pk:$
    Parse $\sk:=(K,\sigk)$ and generate $\snum\la\bit^\secp$.
    Run $r\la\PRF_K(\snum)$, $(\okey.\sk,\okey.\vk)\la\Onekey.\SKGen(1^\secp;r)$, and $\okey.\pk\la\Onekey.\PKGen(\okey.\sk)$.
    Run $\sigma\la\SIG.\Sign(\sigk,\snum\|\okey.\vk)$.
    Outputs $\pk:=(\snum,\okey.\vk,\okey.\pk,\sigma)$. 
    \item
    $\Mkey.\Enc(\vk,\pk,\msg)\to \ct:$
    Parse $\vk:=\sigvk$ and $\pk:=(\snum,\okey.\vk,\okey.\pk,\sigma)$.
    Output $\bot$ if $\SIG.\Ver(\sigvk,\snum\|\okey,\sigma)=\bot$ 
    and otherwise go to the next step.
    Run $\okey.\ct \gets \Onekey.\Enc(\okey.\vk,\okey.\pk,\msg)$.
    Output $\ct:=(\snum,\okey.\ct)$.
    \item
    $\Mkey.\Dec(\sk,\ct)\to \msg:$
    Parse $\sk:=(K,\sigk)$ and $\ct=(\snum,\okey.\ct)$.
    Run $r\la\PRF_K(\snum)$ and $(\okey.\sk,\okey.\vk)\la\Onekey.\SKGen(1^\secp;r)$.
    Output $\msg\la\Onekey.\Dec(\okey.\sk,\okey.\ct)$.
\end{itemize}

\paragraph{Correctness and strong decryption error detectability.}
The correctness and the strong decryption error detectability of $\Mkey$ immediately follow from those of $\Onekey$ and the correctness of $\SIG$. 

\paragraph{$\INDCCApkTA$ security.}
We prove that if $\Onekey$ satisfies $\onekey{\INDCCApkTA}$ security, $\{\PRF_K\}_{K\in \bit^\secp}$ is a secure PRF, and $\SIG$ satisfies strong unforgeability, then $\Mkey$ satisfies $\INDCCApkTA$ security.

Let $\cA$ be any QPT adversary attacking the $\INDCCApkTA$ security of $\Mkey$.
Let $m$ be a polynomial of $\secp$.
We proceed the proof using a sequence of experiments.

\begin{description}
\item[$\hybi{0}$:]This is the original security experiment for the $\INDCCApkTA$ security of $\Mkey$ played between $\cA$ and the challenger.
The detailed description is as follows.
\begin{enumerate}
\item The challenger generates $K\la\bit^\secp$ and $(\sigvk,\sigk)\la\SIG.\Gen(1^\secp)$, and sets $\sk:=(K,\sigk)$ and $\vk:=\sigvk$.
The challenger generates $\pk_i$ for every $i\in[m]$ as follows.
\begin{itemize}
\item Generate $\snum_i\la\bit^\secp$.
\item Run $r_i\la\PRF_K(\snum_i)$, $(\okey.\sk_i,\okey.\vk_i)\la\Onekey.\SKGen(1^\secp;r_i)$, and $\okey.\pk_i\la\Onekey.\PKGen(\okey.\sk_i)$.
\item Run $\sigma_i\la\SIG.\Sign(\sigk,\snum_i\|\okey.\vk_i)$.
\item Set $\pk_i:=(\snum_i,\okey.\vk_i,\okey.\pk_i,\sigma_i)$. 
\end{itemize}

\item The challenger runs $(\pk^\prime,\msg_0,\msg_1,\st)\la\cA(\vk,\pk_1,\cdots,\pk_m)^{\ODec{1}(\cdot)}$, where $\ODec{1}(\ct)$ behaves as follows.
\begin{itemize}
\item Parse $\ct=(\snum,\okey.\ct)$.
\item Run $r\la\PRF_K(\snum)$ and $(\okey.\sk,\okey.\vk)\la\Onekey.\SKGen(1^\secp;r)$.
\item Return $\msg\la\Onekey.\Dec(\okey.\sk,\okey.\ct)$.
\end{itemize}

\item The challenger parses $\pk^\prime:=(\snum^\prime,\okey.\vk^\prime,\okey.\pk^\prime,\sigma^\prime)$ and picks $b\la\bit$. The challenger generates $\ct^*$ as follows.
\begin{itemize}
\item Set $\ct^*:=\bot$ if $\SIG.\Ver(\sigvk,\snum^\prime\|\okey.\vk^\prime,\sigma^\prime)=\bot$ and otherwise go to the next step.
\item Run $\okey.\ct^* \gets \Onekey.\Enc(\okey.\vk^\prime,\okey.\pk^\prime,\msg_b)$.
\item Set $\ct^*:=(\snum^\prime,\okey.\ct^*)$.
\end{itemize}
The challenger also sets $\cv:=0$ if $\Mkey.\Dec(\sk,\ct^*)=\bot$ and otherwise sets $\cv:=1$.

\item The challenger runs $b^\prime\la\cA(\cv,\ct^*,\st)^{\ODec{2}(\cdot)}$, where $\ODec{2}$ behaves exactly in the same way as $\ODec{1}$ except that $\ODec{2}$ given $\ct$ returns $\bot$ if $\ct=\ct^*$.
The challenger outputs $1$ if $b=b^\prime$ and otherwise outputs $0$.

\end{enumerate}
\item[$\hybi{1}$:] This is the same as $\hybi{1}$ except that $\PRF_K(\cdot)$ is replaced with a truly random function.
\end{description}

From the security of $\PRF$, we have $\abs{\Pr[1\la\hybi{0}]-\Pr[1\la\hybi{1}]}\le\negl(\secp)$.

\begin{description}
\item[$\hybi{2}$:] This is the same as $\hybi{1}$ except that if $\snum^\prime\|\okey.\vk^\prime\ne \snum_i\|\okey.\vk_i$ for every $i\in[m]$, the challenger sets $\ct^*:=\bot$.
\end{description}

From the strong unforgeability of $\SIG$, we have $\abs{\Pr[1\la\hybi{1}]-\Pr[1\la\hybi{2}]}\le\negl(\secp)$.

\begin{description}
\item[$\hybi{3}$:] This is the same as $\hybi{2}$ except that if $\snum^\prime\|\okey.\vk^\prime= \snum_i\|\okey.\vk_i$ for some $i\in[m]$, the challenger generates 
$\okey.\ct^*\la\Onekey.\Enc(\okey.\vk_i,\okey.\pk^\prime,0^\ell)$.
\end{description}

To estimate $\abs{\Pr[1\la\hybi{2}]-\Pr[1\la\hybi{3}]}$, we construct the following adversary $\cB$ that uses $\cA$ and attacks the $\onekey{\INDCCApkTA}$ security of $\Onekey$.
\begin{enumerate}
\item Given $(\okey.\vk,\okey.\pk)$, $\cB$ picks $i^*\la[m]$,
generates $(\sigvk,\sigk)\la\SIG.\Gen(1^\secp)$, and sets $\vk:=\sigvk$.
$\cB$ then generates $\snum_{i^*}\la\bit^\secp$, sets $\okey.\vk_{i^*}:=\okey.\vk$ and $\okey.\pk_{i^*}:=\okey.\pk$, generates $\sigma_{i^*}\la\SIG.\Sign(\sigk,\snum_{i^*}\|\okey.\vk_{i^*})$, and sets $\pk_{i^*}:=(\snum_{i^*},\okey.\vk_{i^*},\okey.\pk_{i^*},\sigma_{i^*})$.
$\cB$ prepares an empty list $\keylist$.
$\cB$ does the following for every $i\in[m]\setminus\{i^*\}$.
\begin{itemize}
\item Generate $\snum_i\la\bit^\secp$ and $r_i\la\bit^\secp$.
\item Run $(\okey.\sk_i,\okey.\vk_i)\la\Onekey.\SKGen(1^\secp;r_i)$ and $\okey.\pk_i\la\Onekey.\PKGen(\okey.\sk_i)$.
\item Run $\sigma_i\la\SIG.\Sign(\sigk,\snum_i\|\okey.\vk_i)$.
\item Set $\pk_i:=(\snum_i,\okey.\vk_i,\okey.\pk_i,\sigma_i)$ and adds $(\snum_i,\okey.\sk_i)$ to $\keylist$. 
\end{itemize}

\item $\cB$ runs $(\pk^\prime,\msg_0,\msg_1,\st)\la\cA(\vk,\pk_1,\cdots,\pk_m)^{\ODec{1}(\cdot)}$, where $\ODec{1}(\ct)$ is simulated as follows.
\begin{itemize}
\item Parse $\ct:=(\snum,\okey.\ct)$.
\item If $\snum=\snum_{i^*}$, queries $\okey.\ct$ to its decryption oracle and forwards the answer to $\cA$.
Otherwise, go to the next step.
\item If there exists an entry of the form $(\snum,\okey.\sk)$ in $\keylist$, return $\Onekey.\Dec(\okey.\sk,\okey.\ct)$ to $\cA$.
Otherwise, go to the next step.
\item Generate $r\la\bit^\secp$ and $(\okey.\vk,\okey.\sk)\la\Onekey.\SKGen(1^\secp;r)$, and add $(\snum,\okey.\sk)$ to $\keylist$.
\item Return $\Onekey.\Dec(\okey.\sk,\okey.\ct)$ to $\cA$.
\end{itemize}

\item $\cB$ parses $\pk^\prime:=(\snum^\prime,\okey.\vk^\prime,\okey.\pk^\prime,\sigma^\prime)$ and picks $b\la\bit$. $\cB$ does the following.
\begin{itemize}
\item Set $\ct^*:=\bot$ if $\SIG.\Ver(\sigvk,\snum^\prime\|\okey.\vk^\prime,\sigma^\prime)=\bot$ and otherwise go to the next step.
\item Set $\ct^*:=\bot$ if $\snum^\prime\|\okey.\vk^\prime\ne \snum_i\|\okey.\vk_i$ for every $i\in[m]$. Otherwise, go to the next step.
\item Abort with $\beta^\prime:=0$ if $\snum^\prime\|\okey.\vk^\prime\ne \snum_{i^*}\|\okey.\vk_{i^*}$. Otherwise, go to the next step.
\item Output $(\okey.\pk^\prime,\msg_b,0^\ell,\st_{\cB})$, where $\st_{\cB}$ is all information that $\cB$ knows at this point.
\item Obtain $(\okey.\ct^*,\cv,\st_{\cB})$.
\item Set $\ct^*:=(\snum^\prime,\okey.\ct^*)$.
\end{itemize}

\item $\cB$ runs $b^\prime\la\cA(\cv,\ct^*,\st)^{\ODec{2}(\cdot)}$, where $\ODec{2}$ is simulated exactly in the same way as $\ODec{1}$ except that $\ODec{2}$ given $\ct$ returns $\bot$ if $\ct=\ct^*$.
$\cB$ outputs $\beta^\prime:=1$ if $b=b^\prime$ and otherwise outputs $\beta^\prime:=0$.

\end{enumerate}

We define $\Good$ as the event that $\cB$ does not abort when generating the challenge ciphertext.
Then, letting the challenge bit in the security experiment played by $\cB$ be $\beta$, $\cB$'s advantage is calculated as follows.

\begin{align}
\abs{\Pr[\beta'=1|\beta=0]-\Pr[\beta'=1|\beta=1]}
&=\abs{\Pr[b=b^\prime \land \Good|\beta=0]-\Pr[\beta'=1 \land \Good|\beta=1]}\\
&\ge \frac{1}{m}\abs{\Pr[b=b^\prime | \beta=0 \land \Good]-\Pr[\beta'=1 | \beta=1 \land \Good]}\\
&=\frac{1}{m}\abs{\Pr[1\la\hybi{2}]-\Pr[1\la\hybi{3}]}.
\end{align}

The second line follows from the fact that we have $\Pr[\Good|\beta=0]=\Pr[\Good|\beta=1]\ge\frac{1}{m}$.
The third line follows since $\cB$ perfectly simulates $\hybi{2}$ (resp. $\hybi{3}$) conditioned that $\beta=0$ (resp. $\beta=1$) and the event $\Good$ occurs.
Thus, from the $\onekey{\INDCCApkTA}$ security of $\CCA$, we have $\abs{\Pr[1\la\hybi{2}]-\Pr[1\la\hybi{3}]}\le\negl(\secp)$.

Clearly, we have $\Pr[1\la\hybi{3}]=\frac{1}{2}$.
From the above discussions, $\Mkey$ satisfies $\INDCCApkTA$ security.

\else
\section{Chosen Ciphertext Security}
We extend IND-pkT-CPA security to the chosen ciphertext setting as follows.
\begin{definition}[$\INDCCApkTA$ security]\label{def_mainbody_indccapkta}
For any polynomial $m$, and any QPT adversary $\cA$, we have
\begin{equation}
   \Pr\left[b\gets\cA^{\ODec{2}(\cdot)}(\ct^\ast,\cv,\st):
   \begin{array}{r}
   (\sk,\vk)\gets\SKGen(1^\secp)\\
   \pk_1,...,\pk_m\gets\PKGen(\sk)^{\otimes m}\\
   (\pk',\msg_0,\msg_1,\st)\gets\cA^{\ODec{1}(\cdot)}(\vk,\pk_1,...,\pk_m)\\
   b\gets \bit\\ 
   \ct^\ast\gets\Enc(\vk,\pk',\msg_b)\\
   \cv:=0\textrm{~if~}\Dec(\sk,\ct^*)=\bot\textrm{~and otherwise~}\cv:=1
   \end{array}
   \right]
   \le \frac{1}{2}+\negl(\secp).
\end{equation}
Here, $\pk_1,...,\pk_m\gets\PKGen(\sk)^{\otimes m}$
means that $\PKGen$ is executed $m$ times and $\pk_i$ is the output of the $i$th execution of $\PKGen$.
$\st$ is a quantum internal state of $\cA$, which can be entangled with $\pk'$.
Also, $\ODec{1}(\ct)$ returns $\Dec(\sk,\ct)$ for any $\ct$.
$\ODec{2}$ behaves identically to $\ODec{1}$ except that $\ODec{2}$ returns $\bot$ to the input $\ct^*$.
\end{definition}

In the full version, we show a generic compiler that upgrades an IND-pkT-CPA secure QPKE scheme to IND-pkt-CCA secure one while preserving decryption error detectability by additionally using OWFs.  
Combined with our construction of IND-pkT-CPA secure QPKE scheme based on OWFs, we obtain the following theorem. 
\begin{theorem}
    If there exist OWFs, then there exists a QPKE scheme that satisfies IND-pkT-CCA security and decryption error detectability. 
\end{theorem}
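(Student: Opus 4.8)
The plan is to assemble the scheme by threading OWFs through \Cref{thm:IND-pkTA_QPKE_from_SIG} and the sequence of transformations established above, so the argument is largely a matter of bookkeeping over the security notions. First I would invoke the standard facts that OWFs imply strong EUF-CMA secure signatures with a deterministic signing algorithm, pseudorandom functions, CPA/CCA secure symmetric-key encryption, and (by \cite{EPRINT:BehSatShi21}) tokenized MAC. Feeding the signature scheme into \Cref{thm:IND-pkTA_QPKE_from_SIG} gives a correct, $\INDpkTCPA$ secure QPKE scheme, which is the starting point of the pipeline.

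Next I would apply the four transformations in order, tracking both the security level and the number of public keys at each stage. The cut-and-choose transformation of \Cref{sec:indcvapkta} converts the $\INDpkTCPA$ scheme into an $\INDpkTCVA$ secure one while, importantly, endowing it with strong decryption error detectability (\Cref{def:dec_err_new}). Since full $\INDpkTCVA$ security implies its single-key restriction $\onekey{\INDpkTCVA}$, the signature-based transformation of \Cref{sec:indoneccapkta} then yields a $\onekey{\INDpkToneCCA}$ secure scheme; the tokenized-MAC boosting of \Cref{sec:indccapkta} upgrades this to $\onekey{\INDpkTCCA}$; and finally the PRF-and-signature transformation of \Cref{sec:onekey_mkey} lifts single-key CCA security back to many-key $\INDpkTCCA$ security. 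Each of these three steps is designed to preserve strong decryption error detectability, so it survives to the end of the chain. To conclude, I would observe that strong decryption error detectability, being quantified over all (possibly malformed) keys, implies the honest-key notion of \Cref{def:dec_err}; hence the final scheme is $\INDpkTCCA$ secure with decryption error detectability, all from OWFs.

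Since every individual transformation has already been proved correct, the only real care needed is conceptual rather than computational. The main point to watch is that it is the \emph{statistical} (strong) variant of decryption error detectability, not the computational one, that must be carried through: the $\onekey{\INDpkToneCCA}$-to-$\onekey{\INDpkTCCA}$ step of \Cref{sec:indccapkta} is sensitive to decryption errors, and a merely computational guarantee would break the reduction there, which is exactly why \Cref{sec:indcvapkta} is arranged to output the strong variant. The secondary bookkeeping hazard is the single-key/many-key alternation---collapsing to one public key before the 1CCA and CCA boosting steps and re-expanding afterward---so I would double-check that the implications $X \Rightarrow \onekey{X}$ are applied in the right places.
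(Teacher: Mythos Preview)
Your proposal is correct and mirrors the paper's own argument: start from the IND-pkT-CPA scheme of \Cref{thm:IND-pkTA_QPKE_from_SIG}, then chain the four transformations of \Cref{sec:indcvapkta,sec:indoneccapkta,sec:indccapkta,sec:onekey_mkey}, carrying strong decryption error detectability through and noting at the end that it implies the computational notion of \Cref{def:dec_err}. You also correctly flag the one genuinely delicate point, namely that the tokenized-MAC step in \Cref{sec:indccapkta} requires the \emph{strong} (statistical) variant of decryption error detectability, which is exactly why the cut-and-choose step is arranged to produce it.
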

Due to space limitation, we omit the proof.
Its overview can be found in \Cref{sec:overview_CCA} and the full proof can be found in the full version. 
\fi
\section{Recyclable Variants}\label{sec:recyclable}
In the construction given in 
\ifnum\cameraready=0
\cref{sec:construction,sec:cca_const},
\else
\cref{sec:construction}, 
\fi
a quantum public key can be used to encrypt only one message and a sender needs to obtain a new quantum public key whenever it encrypts a message. This is not desirable from practical perspective. 
In this section, we define recyclable QPKE where a sender only needs to receive one quantum public key to send arbitrarily many messages, and then show how to achieve it. 

\subsection{Definitions}
The definition is similar to QPKE as defined in~\cref{def:QPKE} except that the encryption algorithm outputs a classical \emph{recycled} key that can be reused to encrypt messages many times.   


\begin{definition}[Recyclable QPKE]
A recyclable QPKE scheme with message space $\bit^\ell$ is a set of algorithms  $(\SKGen,\PKGen,\Enc,\rEnc,\Dec)$ such that
\begin{itemize}
    \item 
    $\SKGen(1^\secp)\to (\sk,\vk):$
    It is a PPT algorithm that, on input the security parameter $\secp$, outputs
    a classical secret key $\sk$ and a classical verification key $\vk$. 
    \item
    $\PKGen(\sk)\to \pk:$
    It is a QPT algorithm that, on input $\sk$, outputs
    a quantum public key $\pk$.
    \item
    $\Enc(\vk,\pk,\msg)\to (\ct,\rk):$ 
    It is a QPT algorithm that, on input $\vk$, $\pk$, and a plaintext $\msg\in\bit^\ell$, outputs a classical ciphertext $\ct$ and classical recycled key $\rk$.  
    \item
    $\rEnc(\rk,\msg)\to \ct:$ 
    It is a PPT algorithm that, on input $\rk$ and a plaintext $\msg\in\bit^\ell$, outputs a classical ciphertext $\ct$. 
    \item
    $\Dec(\sk,\ct)\to \msg':$ 
    It is a classical deterministic polynomial-time algorithm that, on input $\sk$ and $\ct$, outputs $\msg'\in\bit^\ell \cup \{\bot\}$.
\end{itemize}

We require the following correctness.

\paragraph{\bf Correctness:}
 For any $\msg,\msg'\in\bit^\ell$,
\begin{equation}
   \Pr\left[
   \msg\gets\Dec(\sk,\ct)
   \land\\ 
   \msg'\gets\Dec(\sk,\ct'):
   \begin{array}{r}
   (\sk,\vk)\gets\SKGen(1^\secp)\\
   \pk\gets\PKGen(\sk)\\
   (\ct,\rk)\gets\Enc(\vk,\pk,\msg)\\
   \ct'\gets\rEnc(\rk,\msg')
   \end{array}
   \right] \ge 1-\negl(\secp).
\end{equation}
\end{definition}

\begin{definition}[IND-pkT-CPA Security for Recyclable QPKE]
We require the followings.
\paragraph{\bf Security under quantum public keys:}
For any polynomial $m$, and any QPT adversary $\cA$,
\begin{equation}
   \Pr\left[b\gets\cA^{\rEnc(\rk,\cdot)}(\ct^\ast ,\st):
   \begin{array}{r}
   (\sk,\vk)\gets\SKGen(1^\secp)\\
   \pk_1,...,\pk_m\gets\PKGen(\sk)^{\otimes m}\\
   (\pk',\msg_0,\msg_1,\st)\gets\cA(\vk,\pk_1,...,\pk_m)\\
   b\gets \bit\\ 
   (\ct^\ast,\rk) \gets\Enc(\vk,\pk',\msg_b)
   \end{array}
   \right] \le \frac{1}{2}+\negl(\secp).
\end{equation}

\paragraph{\bf Security under recycled keys:} 
For any polynomial $m$, and any QPT adversary $\cA$,
\begin{equation}
   \Pr\left[b\gets\cA^{\rEnc(\rk,\cdot)}(\ct^\ast ,\st'):
   \begin{array}{r}
   (\sk,\vk)\gets\SKGen(1^\secp)\\
   \pk_1,...,\pk_m\gets\PKGen(\sk)^{\otimes m}\\
   (\pk',\msg,\st)\gets\cA(\vk,\pk_1,...,\pk_m)\\
   (\ct^\prime,\rk) \gets\Enc(\vk,\pk',\msg)\\
    (\msg_0,\msg_1,\st')\gets \cA^{\rEnc(\rk,\cdot)}(\ct^\prime,\st)\\ 
   b\gets \bit\\ 
   \ct^\ast \gets\rEnc(\rk,\msg_b)
   \end{array}
   \right] \le \frac{1}{2}+\negl(\secp).
\end{equation}
Here, $\pk_1,...,\pk_m\gets\PKGen(\sk)^{\otimes m}$
means that $\PKGen$ is executed $m$ times and $\pk_i$ is the output of the $i$th execution of $\PKGen$,  $\rEnc(\rk,\cdot)$ means a classically-accessible encryption oracle, 
and $\st$ and $\st'$ are quantum internal states of $\cA$, which can be entangled with $\pk'$.
\end{definition}

\begin{definition}[IND-pkT-CCA Security for Recyclable QPKE]
We require the followings.
\paragraph{\bf Security under quantum public keys:}
For any polynomial $m$, and any QPT adversary $\cA$,
\begin{equation}
   \Pr\left[b\gets\cA^{\rEnc(\rk,\cdot),\ODec{2}(\cdot)}(\ct^\ast ,\cv, \st):
   \begin{array}{r}
   (\sk,\vk)\gets\SKGen(1^\secp)\\
   \pk_1,...,\pk_m\gets\PKGen(\sk)^{\otimes m}\\
   (\pk',\msg_0,\msg_1,\st)\gets\cA^{\ODec{1}(\cdot)}(\vk,\pk_1,...,\pk_m)\\
   b\gets \bit\\ 
   (\ct^\ast,\rk) \gets\Enc(\vk,\pk',\msg_b)\\
   \cv:=0\textrm{~if~}\Dec(\sk,\ct^*)=\bot\textrm{~and otherwise~}\cv:=1
   \end{array}
   \right] \le \frac{1}{2}+\negl(\secp).
\end{equation}

\paragraph{\bf Security under recycled keys:} 
For any polynomial $m$, and any QPT adversary $\cA$,
\begin{equation}
   \Pr\left[b\gets\cA^{\rEnc(\rk,\cdot),\ODec{2}(\cdot)}(\ct^\ast ,\cv, \st'):
   \begin{array}{r}
   (\sk,\vk)\gets\SKGen(1^\secp)\\
   \pk_1,...,\pk_m\gets\PKGen(\sk)^{\otimes m}\\
   (\pk',\msg,\st)\gets\cA^{\ODec{1}(\cdot)}(\vk,\pk_1,...,\pk_m)\\
   (\ct^\prime,\rk) \gets\Enc(\vk,\pk',\msg)\\
    (\msg_0,\msg_1,\st')\gets \cA^{\rEnc(\rk,\cdot),\ODec{1}(\cdot)}(\ct^\prime,\st)\\ 
   b\gets \bit\\ 
   \ct^\ast \gets\rEnc(\rk,\msg_b)\\
   \cv:=0\textrm{~if~}\Dec(\sk,\ct^*)=\bot\textrm{~and otherwise~}\cv:=1
   \end{array}
   \right] \le \frac{1}{2}+\negl(\secp).
\end{equation}
Here, $\pk_1,...,\pk_m\gets\PKGen(\sk)^{\otimes m}$
means that $\PKGen$ is executed $m$ times and $\pk_i$ is the output of the $i$th execution of $\PKGen$,  $\rEnc(\rk,\cdot)$ means a classically-accessible encryption oracle, 
and $\st$ and $\st'$ are quantum internal states of $\cA$, which can be entangled with $\pk'$.
Also, $\ODec{1}(\ct)$ returns $\Dec(\sk,\ct)$ for any $\ct$.
$\ODec{2}$ behaves identically to $\ODec{1}$ except that $\ODec{2}$ returns $\bot$ to the input $\ct^*$.
\end{definition}

\subsection{Construction} 
We show a generic construction of recyclable QPKE from (non-recyclable) QPKE with classical ciphertexts and SKE via standard hybrid encryption.  

Let $\QPKE=(\QPKE.\SKGen,\QPKE.\PKGen,\QPKE.\Enc,\QPKE.\Dec)$ be a (non-recyclable) QPKE scheme with message space $\bit^\secp$ and $\SKE=(\SKE.\Enc,\SKE.\Dec)$ be an SKE scheme with message space $\bit^\ell$. Then we construct a recyclable QPKE scheme $\QPKE'=(\QPKE'.\SKGen,\QPKE'.\PKGen,\QPKE'.\Enc,\allowbreak \QPKE'.\rEnc,\QPKE'.\Dec)$ with message space $\bit^\ell$ as follows:
\begin{itemize}
    \item 
    $\QPKE'.\SKGen(1^\secp)\to (\sk',\vk'):$
    Run  $(\sk,\vk)\gets \QPKE.\SKGen(1^\secp)$ and output a secret key $\sk':=\sk$ and verification key $\vk':=\vk$. 
    \item
    $\QPKE'.\PKGen(\sk')\to \pk':$
    Run $\pk \gets \QPKE.\PKGen(\sk)$ and outputs $\pk':=\pk$. 
    \item
    $\QPKE'.\Enc(\vk',\pk',\msg)\to (\ct',\rk'):$
        Parse $\pk'=\pk$ and $\vk'=\vk$,  
        sample $K\gets \bit^\secp$,  run  $\ct \gets \QPKE.\Enc(\vk,\pk,K)$ and $\ct_{\ske}\gets \SKE.\Enc(K,\msg)$, and output a ciphertext $\ct':=(\ct,\ct_{\ske})$ and recycled key $\rk':=(K,\ct)$.
 \item
 $\QPKE'.\rEnc(\rk',\msg)\to \ct':$
Parse $\rk'=(K,\ct)$, run $\ct_{\ske}\gets \SKE.\Enc(K,\msg)$, and output a ciphertext $\ct':=(\ct,\ct_{\ske})$. 
    \item
    $\QPKE'.\Dec(\sk',\ct')\to \msg':$
    Parse $\ct'=(\ct,\ct_{\ske})$ and $\sk'=\sk$, run $K'\gets \QPKE.\Dec(\sk,\ct)$ and $\msg' \gets \SKE.\Dec(K',\ct_{\ske})$, and output $\msg'$. 
\end{itemize}

\paragraph{Correctness and decryption error detectability.}
Correctness of $\QPKE'$ immediately follows from correctness of $\QPKE$ and $\SKE$.
Also, the decryption error detectability of $\QPKE'$ directly follows from that of $\QPKE$.

\paragraph{IND-pkT-CPA security and IND-pkT-CCA security.}
If $\QPKE$ satisfies $\INDpkTCPA$ (resp. $\INDpkTCCA$) security and $\SKE$ satisfies IND-CPA (resp. IND-CCA) security, then $\QPKE'$ satisfies $\INDpkTCPA$ (resp. $\INDpkTCCA$) security.
The proofs can be done by standard hybrid arguments, thus omitted.



\ifnum\anonymous=1
\else
\paragraph{Acknowledgments.}
TM is supported by
JST CREST JPMJCR23I3,
JST Moonshot R\verb|&|D JPMJMS2061-5-1-1, 
JST FOREST, 
MEXT QLEAP, 
the Grant-in-Aid for Scientific Research (B) No.JP19H04066, 
the Grant-in Aid for Transformative Research Areas (A) 21H05183,
and 
the Grant-in-Aid for Scientific Research (A) No.22H00522.
\fi

\ifnum\submission=0
\bibliographystyle{alpha} 
\else
\bibliographystyle{splncs04}
\fi
\bibliography{abbrev3,crypto,reference}

\newcommand{\etalchar}[1]{$^{#1}$}
\begin{thebibliography}{BGH{\etalchar{+}}23b}

\bibitem[AAS20]{AAS20}
Scott Aaronson, Yosi Atia, and Leonard Susskind.
\newblock On the hardness of detecting macroscopic superpositions.
\newblock {\em Electron. Colloquium Comput. Complex.}, page 146, 2020.

\bibitem[AGM21]{signingquantumstates}
Gorjan Alagic, Tommaso Gagliardoni, and Christian Majenz.
\newblock Can you sign a quantum state?
\newblock {\em Quantum}, 2021.

\bibitem[AGQY22]{TCC:AGQY22}
Prabhanjan Ananth, Aditya Gulati, Luowen Qian, and Henry Yuen.
\newblock Pseudorandom (function-like) quantum state generators: New
  definitions and applications.
\newblock In {\em TCC~2022, Part~I}, {LNCS}, pages 237--265. Springer,
  Heidelberg, November 2022.

\bibitem[AQY22]{C:AnaQiaYue22}
Prabhanjan Ananth, Luowen Qian, and Henry Yuen.
\newblock Cryptography from pseudorandom quantum states.
\newblock In Yevgeniy Dodis and Thomas Shrimpton, editors, {\em CRYPTO~2022,
  Part~I}, volume 13507 of {\em {LNCS}}, pages 208--236. Springer, Heidelberg,
  August 2022.

\bibitem[BB84]{BB84}
Charles~H. Bennett and Gilles Brassard.
\newblock Quantum cryptography: Public key distribution and coin tossing.
\newblock In {\em IEEE International Conference on Computers Systems and Signal
  Processing}, pages 175--179. IEEE, 1984.

\bibitem[BCG{\etalchar{+}}02]{FOCS:BCGST02}
Howard Barnum, Claude Cr{\'e}peau, Daniel Gottesman, Adam Smith, and Alain
  Tapp.
\newblock Authentication of quantum messages.
\newblock In {\em 43rd FOCS}, pages 449--458. {IEEE} Computer Society Press,
  November 2002.

\bibitem[BCKM21]{C:BCKM21b}
James Bartusek, Andrea Coladangelo, Dakshita Khurana, and Fermi Ma.
\newblock One-way functions imply secure computation in a quantum world.
\newblock In Tal Malkin and Chris Peikert, editors, {\em CRYPTO~2021, Part~I},
  volume 12825 of {\em {LNCS}}, pages 467--496, Virtual Event, August 2021.
  Springer, Heidelberg.

\bibitem[BCQ23]{ITCS:BraCanQia23}
Zvika Brakerski, Ran Canetti, and Luowen Qian.
\newblock On the computational hardness needed for quantum cryptography.
\newblock In Yael~Tauman Kalai, editor, {\em 14th Innovations in Theoretical
  Computer Science Conference, {ITCS} 2023, January 10-13, 2023, MIT,
  Cambridge, Massachusetts, {USA}}, volume 251 of {\em LIPIcs}, pages
  24:1--24:21. Schloss Dagstuhl - Leibniz-Zentrum f{\"{u}}r Informatik, 2023.

\bibitem[BGH{\etalchar{+}}23a]{TCC:BGHMSVW23}
Khashayar Barooti, Alex~B. Grilo, Lo{\"{\i}}s Huguenin{-}Dumittan, Giulio
  Malavolta, Or~Sattath, Quoc{-}Huy Vu, and Michael Walter.
\newblock Public-key encryption with quantum keys.
\newblock In {\em {TCC} 2023}, pages 198--227, 2023.

\bibitem[BGH{\etalchar{+}}23b]{cryptoeprint:BGHMSVW23}
Khashayar Barooti, Alex~B. Grilo, Lo{\"{\i}}s Huguenin{-}Dumittan, Giulio
  Malavolta, Or~Sattath, Quoc{-}Huy Vu, and Michael Walter.
\newblock Public-key encryption with quantum keys.
\newblock {\em {IACR} Cryptol. ePrint Arch.}, page 877, 2023.

\bibitem[BN08]{JC:BelNam08}
Mihir Bellare and Chanathip Namprempre.
\newblock Authenticated encryption: Relations among notions and analysis of the
  generic composition paradigm.
\newblock {\em Journal of Cryptology}, 21(4):469--491, October 2008.

\bibitem[BSS21]{EPRINT:BehSatShi21}
Amit Behera, Or~Sattath, and Uriel Shinar.
\newblock Noise-tolerant quantum tokens for {MAC}.
\newblock Cryptology ePrint Archive, Report 2021/1353, 2021.
\newblock \url{https://eprint.iacr.org/2021/1353}.

\bibitem[BZ13]{C:BonZha13}
Dan Boneh and Mark Zhandry.
\newblock Secure signatures and chosen ciphertext security in a quantum
  computing world.
\newblock In Ran Canetti and Juan~A. Garay, editors, {\em CRYPTO~2013,
  Part~II}, volume 8043 of {\em {LNCS}}, pages 361--379. Springer, Heidelberg,
  August 2013.

\bibitem[CHH{\etalchar{+}}07]{AC:CHHIKP07}
Ronald Cramer, Goichiro Hanaoka, Dennis Hofheinz, Hideki Imai, Eike Kiltz,
  Rafael Pass, {abhi} {shelat}, and Vinod Vaikuntanathan.
\newblock Bounded {CCA2}-secure encryption.
\newblock In Kaoru Kurosawa, editor, {\em ASIACRYPT~2007}, volume 4833 of {\em
  {LNCS}}, pages 502--518. Springer, Heidelberg, December 2007.

\bibitem[Col23]{cryptoeprint:2023/282}
Andrea Coladangelo.
\newblock Quantum trapdoor functions from classical one-way functions.
\newblock Cryptology ePrint Archive, Paper 2023/282, 2023.
\newblock \url{https://eprint.iacr.org/2023/282}.

\bibitem[CX22]{EPRINT:CaoXue22b}
Shujiao Cao and Rui Xue.
\newblock On constructing one-way quantum state generators, and more.
\newblock Cryptology ePrint Archive, Report 2022/1323, 2022.
\newblock \url{https://eprint.iacr.org/2022/1323}.

\bibitem[DS15]{C:DotSch15}
Nico D{\"o}ttling and Dominique Schr{\"o}der.
\newblock Efficient pseudorandom functions via on-the-fly adaptation.
\newblock In Rosario Gennaro and Matthew J.~B. Robshaw, editors, {\em
  CRYPTO~2015, Part~I}, volume 9215 of {\em {LNCS}}, pages 329--350. Springer,
  Heidelberg, August 2015.

\bibitem[GGM86]{JACM:GolGolMic86}
Oded Goldreich, Shafi Goldwasser, and Silvio Micali.
\newblock How to construct random functions.
\newblock {\em Journal of the {ACM}}, 33(4):792--807, 1986.

\bibitem[GKM{\etalchar{+}}00]{FOCS:GKMRV00}
Yael Gertner, Sampath Kannan, Tal Malkin, Omer Reingold, and Mahesh
  Viswanathan.
\newblock The relationship between public key encryption and oblivious
  transfer.
\newblock In {\em 41st FOCS}, pages 325--335. {IEEE} Computer Society Press,
  November 2000.

\bibitem[GLSV21]{EC:GLSV21}
Alex~B. Grilo, Huijia Lin, Fang Song, and Vinod Vaikuntanathan.
\newblock Oblivious transfer is in {MiniQCrypt}.
\newblock In Anne Canteaut and Fran\c{c}ois-Xavier Standaert, editors, {\em
  EUROCRYPT~2021, Part~II}, volume 12697 of {\em {LNCS}}, pages 531--561.
  Springer, Heidelberg, October 2021.

\bibitem[Gol04]{Book:Goldreich04}
Oded Goldreich.
\newblock Foundations of cryptography: Volume 2, basic applications.
\newblock 2004.

\bibitem[Got]{GottesmanPKE}
Daniel Gottesman.
\newblock Quantum public-key cryptography with information-theoretic security.
\newblock {\em
  https://www2.perimeterinstitute.ca/personal/dgottesman/Public-key.ppt}.

\bibitem[HILL99]{SIAMCOMP:HILL99}
Johan H{\aa}stad, Russell Impagliazzo, Leonid~A. Levin, and Michael Luby.
\newblock A pseudorandom generator from any one-way function.
\newblock {\em {SIAM} Journal on Computing}, 28(4):1364--1396, 1999.

\bibitem[HMY23]{EC:HhaMorYam23}
Minki Hhan, Tomoyuki Morimae, and Takashi Yamakawa.
\newblock From the hardness of detecting superpositions to cryptography:
  Quantum public key encryption and commitments.
\newblock In {\em EUROCRYPT~2023, Part~I}, {LNCS}, pages 639--667. Springer,
  Heidelberg, June 2023.

\bibitem[IR89]{STOC:ImpRud89}
Russell Impagliazzo and Steven Rudich.
\newblock Limits on the provable consequences of one-way permutations.
\newblock In {\em 21st ACM STOC}, pages 44--61. {ACM} Press, May 1989.

\bibitem[IR90]{C:ImpRud88}
Russell Impagliazzo and Steven Rudich.
\newblock Limits on the provable consequences of one-way permutations.
\newblock In Shafi Goldwasser, editor, {\em CRYPTO'88}, volume 403 of {\em
  {LNCS}}, pages 8--26. Springer, Heidelberg, August 1990.

\bibitem[JLS18]{C:JiLiuSon18}
Zhengfeng Ji, Yi-Kai Liu, and Fang Song.
\newblock Pseudorandom quantum states.
\newblock In Hovav Shacham and Alexandra Boldyreva, editors, {\em CRYPTO~2018,
  Part~III}, volume 10993 of {\em {LNCS}}, pages 126--152. Springer,
  Heidelberg, August 2018.

\bibitem[KKNY05]{EC:KKNY05}
Akinori Kawachi, Takeshi Koshiba, Harumichi Nishimura, and Tomoyuki Yamakami.
\newblock Computational indistinguishability between quantum states and its
  cryptographic application.
\newblock In Ronald Cramer, editor, {\em EUROCRYPT~2005}, volume 3494 of {\em
  {LNCS}}, pages 268--284. Springer, Heidelberg, May 2005.

\bibitem[KQST23]{STOC:KQST23}
William Kretschmer, Luowen Qian, Makrand Sinha, and Avishay Tal.
\newblock Quantum cryptography in algorithmica.
\newblock In Barna Saha and Rocco~A. Servedio, editors, {\em Proceedings of the
  55th Annual {ACM} Symposium on Theory of Computing, {STOC} 2023, Orlando, FL,
  USA, June 20-23, 2023}, pages 1589--1602. {ACM}, 2023.

\bibitem[Kre21]{Kre21}
W.~Kretschmer.
\newblock Quantum pseudorandomness and classical complexity.
\newblock {\em TQC 2021}, 2021.

\bibitem[MW23]{NewMW23}
Giulio Malavolta and Michael Walter.
\newblock Robust quantum public-key encryption with applications to quantum key
  distribution.
\newblock 2023.

\bibitem[MY22a]{EPRINT:MorYam22c}
Tomoyuki Morimae and Takashi Yamakawa.
\newblock One-wayness in quantum cryptography.
\newblock Cryptology ePrint Archive, Report 2022/1336, 2022.
\newblock \url{https://eprint.iacr.org/2022/1336}.

\bibitem[MY22b]{C:MorYam22}
Tomoyuki Morimae and Takashi Yamakawa.
\newblock Quantum commitments and signatures without one-way functions.
\newblock In Yevgeniy Dodis and Thomas Shrimpton, editors, {\em CRYPTO~2022,
  Part~I}, volume 13507 of {\em {LNCS}}, pages 269--295. Springer, Heidelberg,
  August 2022.

\bibitem[YZ21]{EC:YamZha21}
Takashi Yamakawa and Mark Zhandry.
\newblock Classical vs quantum random oracles.
\newblock In Anne Canteaut and Fran\c{c}ois-Xavier Standaert, editors, {\em
  EUROCRYPT~2021, Part~II}, volume 12697 of {\em {LNCS}}, pages 568--597.
  Springer, Heidelberg, October 2021.

\bibitem[Zha12]{FOCS:Zhandry12}
Mark Zhandry.
\newblock How to construct quantum random functions.
\newblock In {\em 53rd FOCS}, pages 679--687. {IEEE} Computer Society Press,
  October 2012.

\end{thebibliography}

\ifnum\submission=1
\else
\appendix 
\fi

\ifnum\cameraready=0
\ifnum\pure=0
\else
\appendix
\ifnum\llncs=1
\section{More Related Work and Open Problems}\label{sec:related_open}
\subsection{Related Works} 
The possibility that QPKE can be achieved from weaker assumptions was first pointed out by Gottesman~\cite{GottesmanPKE}, though he did not give any concrete construction.
The first concrete construction of QPKE was proposed by Kawachi, Koshiba, Nishimura, and Yamakami~\cite{EC:KKNY05}.
They formally defined the notion of QPKE with quantum public keys, and provided a construction satisfying it from a distinguishing problem of two quantum states.
Recently, Morimae and Yamakawa~\cite{EPRINT:MorYam22c} pointed out that QPKE defined by~\cite{EC:KKNY05} can be achieved from any classical or quantum symmetric key encryption almost trivially.
The constructions proposed in these two works have mixed state quantum public keys.
Then, subsequent works~\cite{cryptoeprint:2023/282,TCC:BGHMSVW23} independently studied the question whether QPKE with pure state quantum public keys can be constructed from OWFs or even weaker assumptions.

The definition of QPKE studied in the above works essentially assume that a sender can obtain intact quantum public keys.
As far as we understand, this requires unsatisfactory physical setup assumptions such as secure quantum channels or tamper-proof quantum hardware, regardless of whether the quantum public keys are pure states or mixed states.
In our natural setting where an adversary can touch the quantum channel where quantum public keys are sent, the adversary can easily attack the previous constructions by simply replacing the quantum public key on the channel with the one generated by itself that the adversary knows the corresponding secret key.
We need to take such adversarial behavior into consideration, unless we assume physical setup assumptions that deliver intact quantum public keys to the sender.
Our work is the first one that proposes a QPKE scheme secure in this natural setting assuming only classical authenticated channels that is the same assumption as classical PKE and can be implemented by digital signature schemes.
It is unclear if we could solve the problem in the previous constructions by using classical authenticated channels similarly to our work.
Below, we review the constructions of QPKE from OWFs proposed in the recent works.

The construction by Morimae and Yamakawa~\cite{EPRINT:MorYam22c} is highly simple.
A (mixed state) public key of their construction is of the form $(\ct_0,\ct_1)$, where $\ct_b$ is an encryption of $b$ by a symmetric key encryption scheme.
The encryption algorithm with input message $b$ simply outputs $\ct_b$.

Coladangelo~\cite{cryptoeprint:2023/282} constructed a QPKE scheme with
quantum public keys and quantum ciphertexts from pseudorandom functions (PRFs), which are constructed from OWFs.
The public key is 
\begin{equation}
\ket{\pk}\coloneqq\sum_y(-1)^{\PRF_k(y)}\ket{y},
\end{equation}
and the secret key is $k$.
The ciphertext for the plaintext $m$ is
\begin{equation}
(Z^x\ket{\pk}=\sum_y(-1)^{x\cdot y+\PRF_k(y)}\ket{y},
r,r\cdot x\oplus m),
\end{equation}
where $r$ is chosen uniformly at random.

 Barooti, Grilo, Huguenin-Dumittan, Malavolta, Sattath, Vu, and Walter~\cite{TCC:BGHMSVW23} constructed three QPKE schemes: (1) CCA secure QPKE with
quantum public keys and classical ciphertexts from OWFs (2) CCA1\footnote{Afther the adversary received a challenge ciphertext, they cannot access the decryption oracle.} secure QPKE with quantum public keys and ciphertexts from pseudorandom function-like states generators, (3) CPA secure QPKE with quantum public keys and classical ciphertexts from pseudo-random function-like states with proof of destruction.
All constructions considers security under the encryption oracle.
We review their construction based on OWFs.

Their construction is hybrid encryption of CPA secure QPKE (the KEM part) and CCA secure classical symmetric key encryption (the DEM part).
The public key is
\begin{equation}
\ket{\pk}\coloneqq   \sum_x\ket{x}\ket{\PRF_k(x)},
\end{equation}
and the secret key is $k$.
The encryption algorithm first measures $\ket{\pk}$ in the computational basis to get
$(x,\PRF_k(x))$ and outputs $(x,\mathsf{SKE}.\Enc(\PRF_k(x),m))$
as the ciphertext for the plaintext $m$, where $\mathsf{SKE}.\Enc$
is the encryption algorithm of a symmetric key encryption scheme.

We finally compare Quantum Key Distribution (QKD)~\cite{BB84} with our notion of QPKE. QKD also enables us to establish secure communication over an untrusted quantum channel assuming that an authenticated classical channel is available similarly to our QPKE. An advantage of QKD is that it is information theoretically secure and does not need any computational assumption. 
On the other hand, it has disadvantages that it must be interactive and parties must record secret information for each session. Thus, it is incomparable to the notion of QPKE.

\subsection{Open Problems}
In our construction, public keys are quantum states. It is an open problem whether
QPKE with classical public keys are possible from OWFs.
Another interesting open problem is whether we can construct QPKE defined in this work from an even weaker assumption than OWFs such as pseudorandom states generators.


In our model of QPKE, a decryption error 
may be caused by tampering attacks on the quantum public key.
To address this issue, we introduce the security notion we call decryption error detectability that guarantees that a legitimate receiver of a ciphertext can notice if the decrypted message is different from the message intended by the sender.
We could consider even stronger variant of decryption error detectability that requires that a sender can notice if a given quantum public key does not provide decryption correctness.
It is an open problem to construct a QPKE scheme satisfying such a stronger decryption error detectability.

\fi

\section{Pure State Public Key Variant}\label{sec:pure_pk_qpke}
As discussed in \Cref{sec:discussion}, we believe that the distinction between pure state and mixed state public keys is not important from a practical point of view. Nonetheless, it is a mathematically valid question if we can construct an IND-pkT-CPA secure QPKE scheme with pure state public keys. We give such a scheme based on the existence of quantum-secure OWFs 
by extending the construction given in \Cref{sec:construction}. 
For the ease of exposition, we first show a construction based on \emph{slightly superpolynomially secure} OWFs in \Cref{sec:construction_superpolynomial}. Then, we explain how to modify the scheme to base its security on standard polynomially secure OWFs in  \Cref{sec:construction_polynomial}.

\paragraph{Preparation.}

We define a fine-grained version of strong EUF-CMA security for digital signature schemes. 
\begin{definition}[$T$-strong EUF-CMA security]
A digital signature scheme $(\Gen,\Sign,\Ver)$ is $T$-strong EUF-CMA secure if the following holds: 
For any quantum adversary $\cA$ that runs in time $T$ and makes at most $T$ classical queries to the signing oracle $\Sign(k,\cdot)$,
\begin{equation}
   \Pr[(\msg^\ast,\sigma^\ast)\notin \mathcal{Q}~\land~\top\gets\Ver(\vk,\msg^*,\sigma^*):
   (k,\vk)\gets\Gen(1^\secp),
   (\msg^\ast,\sigma^\ast)\gets\cA^{\Sign(k,\cdot)}(\vk)
   ]\le T^{-1}, 
\end{equation}
where $\mathcal{Q}$ is the set of message-signature pairs returned by the signing oracle. 
\end{definition}
\begin{remark}
Strong EUF-CMA security defined in \Cref{def:sEUF-CMA} holds if and only if $T$-strong EUF-CMA security holds for all polynomials $T$.  
\end{remark}
\begin{remark}\label{rem:sig_from_OWF_superpoly}
We can show that there exists a $T$-strong EUF-CMA secure digital signature scheme for some $T=\secp^{\omega(1)}$ if slightly superpolynomially secure OWFs exist similarly to the proof of \Cref{thm:sig_from_OWF} in \cite[Sec. 6.5.2]{Book:Goldreich04}. Here, a superpolynomially secure OWF is a function $f$ for which there exists $T=\secp^{\omega(1)}$ such that any adversary with running time $T$ can invert $f$ with probability at most $T^{-1}$.
\end{remark}

We also need the following lemma in the security proof.
\begin{lemma}\label{lem:cannot_find_both}
For a function $H:\bit^{u+1}\rightarrow \bit^v$, let $\ket{\psi_H}:=\sum_{R\in \bit^{u+1}}\ket{R}\ket{H(R)}$. For any integer $m$ and (unbounded-time) quantum algorithm $\A$, 

\begin{align}
    \Pr_{H}[
    y_0=H(0\concat r)~\land~y_1=H(1\concat r)
    :(r,y_0,y_1)\gets \A(\ket{\psi_H}^{\otimes m})]\le (2m+1)^4(2^{-u}+2^{-v})
\end{align}
where $H$ is a uniformly random function from $\bit^{u+1}$ to $\bit^v$. 
\end{lemma}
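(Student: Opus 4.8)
The plan is to reduce the many-copy state to a query-bounded adversary and then apply a recording/compressed-oracle argument. First I would observe that the input $\ket{\psi_H}^{\otimes m}$ can be prepared from scratch by a non-adaptive $m$-query quantum algorithm: applying the standard oracle unitary $\ket{R}\ket{0^v}\mapsto\ket{R}\ket{H(R)}$ to the uniform superposition $\sum_{R\in\bit^{u+1}}\ket{R}\ket{0^v}$ produces one copy of $\ket{\psi_H}$, so $m$ parallel queries produce all $m$ copies. Since $\A$ itself has no further oracle access (it only holds the copies), the probability in the lemma is exactly the success probability of some quantum algorithm that makes $m$ queries to $H$ and then, by an arbitrary (unbounded) unitary followed by a measurement, outputs $(r,y_0,y_1)$. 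Hence it suffices to bound the probability that an $m$-query algorithm outputs $(r,y_0,y_1)$ with $y_0=H(0\concat r)$ and $y_1=H(1\concat r)$ for a uniformly random $H$.

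Next I would bound this $m$-query success probability using the compressed (recording) oracle. The key point is the relational structure of the target: succeeding means naming a single suffix $r$ together with the images of the two ``twin'' points $0\concat r$ and $1\concat r$, which differ only in their first bit. Running the algorithm against the compressed oracle and inspecting the final database $D$, I would argue along standard lines that, up to the usual guessing slack, success requires $D$ to contain both $0\concat r$ and $1\concat r$ with the announced values. There are then two ways this can happen: (i) both points are genuinely recorded in $D$, which forces $D$ to contain a twin pair (two entries agreeing on their last $u$ bits but differing on the first), or (ii) at least one value is not recorded and must be guessed, which costs a factor $2^{-v}$ per unrecorded point. Because $D$ is built from only $m$ queries, the amplitude on any database containing a prescribed twin pair grows by at most a constant increment per query; tracking this increment over the $m$ queries yields an amplitude bound of order $(2m+1)^2\sqrt{2^{-u}+2^{-v}}$, and squaring to pass from amplitude to probability produces the claimed $(2m+1)^4(2^{-u}+2^{-v})$.

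The main obstacle is the careful compressed-oracle bookkeeping needed to obtain the clean polynomial factor $(2m+1)^4$ with the correct additive split into the $2^{-u}$ (twin-collision) and $2^{-v}$ (value-guessing) terms, while handling the fact that $r$ is chosen adaptively by $\A$ rather than fixed in advance. The adaptivity is in fact what makes the recording framework the natural tool: one bounds the weight on databases containing \emph{some} twin pair with matching recorded outputs, uniformly over all $r$, rather than reasoning about a single pre-committed pair. As a sanity check and a possible alternative route, one can also argue directly on the state: each normalized copy places amplitude $2^{-(u+1)/2}$ on each fixed point $R$, so the reduced state of the $m$ copies carries only $O(m\,2^{-u})$ weight about $H$ at any particular twin, and a Boneh--Zhandry-style measure-and-reduce step (as used earlier in this paper via \cite[Lemma 2.1]{C:BonZha13}) converts this into the same birthday-type bound; I would nonetheless keep the compressed-oracle proof as the primary argument since it most transparently yields the stated constant.
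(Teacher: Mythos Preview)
Your first reduction---preparing $\ket{\psi_H}^{\otimes m}$ with $m$ non-adaptive oracle queries on the uniform superposition, so that $\A$ becomes an $m$-query algorithm $\tilde{\A}^H$---is exactly what the paper does. The second step is where you diverge. The paper does not run a compressed-oracle analysis of $\tilde{\A}$; instead it applies the measure-and-reprogram lifting theorem of Yamakawa--Zhandry \cite[Theorem~4.2]{EC:YamZha21} with $q=m$ quantum queries and $k=2$ classical verification queries (the checker $\mathcal{C}$ just queries $H(0\|r)$ and $H(1\|r)$). That theorem produces an algorithm $\tilde{\B}$ making at most two \emph{classical} queries to $H$ whose success probability is at least $(2m+1)^{-4}$ times that of $\tilde{\A}$. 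Because $\tilde{\A}$'s queries are all the uniform superposition, the inputs $\tilde{\B}$ extracts by measurement are two independent uniform elements of $\bit^{u+1}$; these happen to form a twin pair $\{0\|r,1\|r\}$ with probability $2^{-u}$, and otherwise at least one of $H(0\|r),H(1\|r)$ remains uniformly random from $\tilde{\B}$'s view, giving a guessing bound of $2^{-v}$. Hence $\Pr[\tilde{\B}\text{ wins}]\le 2^{-u}+2^{-v}$, and multiplying by $(2m+1)^4$ gives the lemma verbatim.

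Your compressed-oracle route is a plausible alternative in spirit, but there is a gap: the factor $(2m+1)^4=(2m+1)^{2k}$ is precisely the loss of the Yamakawa--Zhandry reduction, not something that falls out of a recording analysis. A direct compressed-oracle bound on twin-pair formation after $m$ uniform-superposition queries would more naturally produce something on the order of $O(m^2\cdot 2^{-u})+O(2^{-v})$, and your step ``constant increment per query $\Rightarrow$ amplitude $(2m+1)^2\sqrt{2^{-u}+2^{-v}}$'' is asserted rather than derived---there is no reason the recording bookkeeping should reproduce exactly that polynomial. So either carry out the compressed-oracle analysis explicitly and accept whatever polynomial in $m$ emerges (which would still suffice for the application), or invoke the YZ lifting lemma as the paper does to obtain the stated constant.
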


We prove it using the result of \cite{EC:YamZha21}. We defer the proof to \Cref{sec:proof_lemma}.

\subsection{Construction from Slightly Superpolynomially Secure OWFs}\label{sec:construction_superpolynomial}
In this section, we construct a QPKE scheme that satisfies correctness and IND-pkT-CPA security (but not decryption error detectability) and has pure state public keys from 
$T$-strong EUF-CMA secure digital signatures for a superpolynomial $T$ and quantum-query secure PRFs. Note that they exist assuming the existence of slightly superpolynomially secure OWFs as noted in  \Cref{rem:sig_from_OWF_superpoly,rem:PRF}.  
The message space of our construction is $\bit$, but it can be extended to be arbitrarily many bits by parallel repetition.  
Let $(\Gen,\Sign,\Ver)$ be a $T$-strong EUF-CMA secure digital signature scheme with a deterministic $\Sign$ algorithm and message space $\bit^{u+v+1}$    
and $\{\PRF_K:\bit^{u+1} \rightarrow \bit^v\}_{K\in \bit^\secp}$ be a quantum-query secure PRF 
where 
$T=\secp^{\omega(1)}$, 
$u:=\lfloor (\log T)/2\rfloor$, and
$v=\omega(\log \secp)$.  

Then we construct a QPKE scheme $(\SKGen,\PKGen,\Enc,\Dec)$ as follows.
\begin{itemize}
    \item 
$\SKGen(1^\secp)\to(\sk,\vk):$ Run $(k,\vk)\gets\Gen(1^\secp)$ and sample $K\gets \bit^\secp$. Output $\sk\coloneqq (k,K)$ and $\vk$. 
    \item
    $\PKGen(\sk)\to\pk:$
    Parse $\sk=(k,K)$.
    Choose $r\gets\bit^u$.
    By running $\Sign$ and $\PRF$ coherently, generate the state 
    \begin{eqnarray}
    \ket{\psi_\sk}\coloneqq
    \sum_{r\in \bit^u}\ket{r}_{\regR}\otimes 
    \left( 
    \begin{array}{l}
\ket{0}_\regA\otimes\ket{y(0,r)}_\regB \otimes \ket{\sigma(0,r)}_\regC\\
    +\ket{1}_\regA\otimes\ket{y(1,r)}_\regB \otimes \ket{\sigma(1,r)}_\regC
      \end{array}
    \right)
    \end{eqnarray} 
    over registers $(\regR,\regA,\regB,\regC)$
    where 
    $y(b,r):=\PRF_K(b\concat r)$
    and
    $\sigma(b,r):=\Sign(k,b\concat r\concat y(b,r))$
    for $b\in \bit$ and $r\in \bit^u$. (We omit $K$ and $k$ from the notations for simplicity.)
    Output 
    \begin{align}
    \pk&\coloneqq\ket{\psi_\sk}.
    \end{align}
   \item
   $\Enc(\vk,\pk,b)\to\ct:$
   Parse
    $\pk=\rho$, where $\rho$ is a quantum state over registers $(\regR,\regA,\regB,\regC)$. 
    The $\Enc$ algorithm consists of the following three steps. 
   \begin{enumerate}
    \item
    It coherently checks the signature in $\rho$. In other words, it applies the unitary
   \begin{equation}   U_{\vk}\ket{r}_{\regR}\ket{\alpha}_\regA\ket{\beta}_{\regB}\ket{\gamma}_\regC\ket{0...0}_\regE
   =\ket{r}_{\regR}\ket{\alpha}_\regA\ket{\beta}_{\regB}\ket{\gamma}_\regC\ket{\Ver(\vk,\alpha\|r\|\beta,\gamma)}_\regE
   \end{equation} 
    on $\rho_{\regR,\regA,\regB,\regC}\otimes|0...0\rangle\langle0...0|_\regE$, and measures the register $\regE$ in the computational basis. 
    If the result is $\bot$, it outputs $\ct\coloneqq\bot$ and halts.
    If the result is $\top$, 
    it goes to the next step.
    \item
    It applies $Z^b$ on the register $\regA$. 
    \item
    It measures 
    $\regR$ in the computational basis to get $r$ and 
    all qubits in the registers $(\regA,\regB,\regC)$ 
    in the Hadamard basis to get the result $d$.
    It outputs
    \begin{equation}
    \ct\coloneqq(r,d).
    \end{equation}
    \end{enumerate}
   \item
   $\Dec(\sk,\ct)\to b':$
   Parse $\sk=(k,K)$ and
    $\ct=(r,d)$.
    Output 
    \begin{equation}
    b'\coloneqq d\cdot 
   (0\|y(0,r)\|\sigma(0,r) \oplus 1\|y(1,r)\|\sigma(1,r)).
    \end{equation}
\end{itemize}

\begin{theorem}\label{thm:IND-pkTA_QPKE_from_SIG_pure}
If 
$(\Gen,\Sign,\Ver)$ is a $T$-strong EUF-CMA secure digital signature scheme and $\{\PRF_K:\bit^{u+1} \rightarrow \bit^v\}_{K\in \bit^\secp}$ is a quantum-query secure PRF, 
then the QPKE scheme $(\SKGen,\PKGen,\Enc,\Dec)$ above is correct and satisfies IND-pkT-CPA security.
\end{theorem}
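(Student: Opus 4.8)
The plan is to mirror the three-hybrid strategy of \Cref{sec:security}, changing only the final combinatorial step to one tailored to pure-state public keys. Correctness is immediate: for an honestly generated $\pk=\ket{\psi_\sk}$ the coherent verification in $\Enc$ accepts with certainty, and after measuring $\regR$ to obtain a uniform $r$ and applying $Z^b$, the residual state on $(\regA,\regB,\regC)$ is $\ket{0}\ket{y(0,r)}\ket{\sigma(0,r)}+(-1)^b\ket{1}\ket{y(1,r)}\ket{\sigma(1,r)}$, so the Hadamard-basis outcome $d$ satisfies the linear relation that $\Dec$ recovers.

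For IND-pkT-CPA security, I would introduce analogs of \Cref{hyb0,hyb1,hyb2}: Hybrid~1 omits the Hadamard measurement in the third step of $\Enc$ and hands $\cA$ the value $r$ (from measuring $\regR$) together with the post-verification registers $(\regA,\regB,\regC)$; Hybrid~2 instead asks $\cA$ to output the pair $((y(0,r),\sigma(0,r)),(y(1,r),\sigma(1,r)))$ and declares a win exactly when both entries are correct. The reduction from Hybrid~0 to Hybrid~1 is trivial ($\cA$ performs the Hadamard measurement itself). The reduction from Hybrid~1 to Hybrid~2 is the swapping argument of \Cref{sec:B}: one first proves the pure-state analog of \Cref{why}, namely that conditioned on acceptance and on a fixed measured $r$, the state on $(\regA,\regB,\regC,\regD)$ is within inverse-polynomial trace distance (as in \Cref{why}) of $c_0\ket{0}\ket{y(0,r)}\ket{\sigma(0,r)}\ket{\Psi_0}+(-1)^b c_1\ket{1}\ket{y(1,r)}\ket{\sigma(1,r)}\ket{\Psi_1}$, and then the same unitary $W=V^\dagger(Z\otimes I)V$ extracts both branches. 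This structural step is where $T$-strong EUF-CMA security enters: the ``bad'' basis states carrying a valid signature on some $\alpha\|r\|\beta$ with $\beta\neq y(\alpha,r)$ must have negligible weight, else a forger produces a fresh valid message-signature pair. Crucially, to simulate the public keys this forger classically queries the signing oracle on all $2^{u+1}$ messages $\{b\|r\|y(b,r)\}$ (it samples $K$ itself, hence knows every $y(b,r)$), and $2^{u+1}\le T$ by the choice $u=\lfloor(\log T)/2\rfloor$; this is precisely why polynomial security does not suffice.

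It remains to bound the Hybrid~2 winning probability, and here the argument of \Cref{sec:Zhandry} breaks down, since all $m$ public keys are now the \emph{same} pure state and the Boneh--Zhandry ``measure one copy'' trick no longer isolates a single signature. Instead I would first invoke quantum-query PRF security to replace $\PRF_K$ by a uniformly random $H\colon\bit^{u+1}\to\bit^v$; the reduction makes one quantum query per public key and signs coherently with the known signing key $k$, and it checks the win predicate by querying $H(0\|r),H(1\|r)$ and recomputing the signatures. After this replacement a Hybrid~2 winner outputs, in particular, $(r,H(0\|r),H(1\|r))$. Since the register $\ket{\Sign(k,R\|H(R))}$ is a reversible function of $(R,H(R))$, any even unbounded algorithm holding $\ket{\psi_H}^{\otimes m}$ can append it with $k$ hardwired, so \Cref{lem:cannot_find_both} applies and bounds the success probability by $(2m+1)^4(2^{-u}+2^{-v})$. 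With $2^{-u}=O(T^{-1/2})$ and $2^{-v}$ both negligible and $(2m+1)^4$ polynomial, this is $\negl(\secp)$.

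The main obstacle I anticipate is this last paragraph: recognizing that identical pure-state copies defeat the elementary counting of \Cref{sec:Zhandry}, and that the remedy requires both engineering the PRF layer so that the extractable data matches the hypothesis of \Cref{lem:cannot_find_both} and arguing that the auxiliary signature register confers no additional advantage. A secondary subtlety, easy to overlook, is the bookkeeping in the Hybrid~1-to-Hybrid~2 reduction verifying that the forger's $2^{u+1}$ classical signing queries stay within the $T$-query budget, which is exactly what forces the parameter setting $u=\lfloor(\log T)/2\rfloor$.
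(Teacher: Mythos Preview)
Your proposal is correct and follows essentially the same approach as the paper's own proof sketch: the three-hybrid structure carried over from \Cref{sec:security}, the analog of \Cref{why} proved via a forger that queries signatures on all $2^{u+1}$ messages $b\|r\|y(b,r)$ (which is where $T$-strong EUF-CMA and the choice $u=\lfloor(\log T)/2\rfloor$ are used), and the bound on Hybrid~2 obtained by first swapping $\PRF_K$ for a random $H$ and then reducing to \Cref{lem:cannot_find_both} by noting that $\ket{\psi_\sk}$ can be produced from $\ket{\psi_H}$ by coherently appending the signature register. The only point the paper makes slightly more explicit is that the forger's \emph{running time}, not just its query count, must be bounded by $T$; since $2^u\le T^{1/2}$ and $T=\secp^{\omega(1)}$, one has $2^u\cdot\poly(\secp)<T$ for large $\secp$, which you should state when filling in the details.
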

\begin{proof}[Proof (sketch)]
The correctness is easily seen similarly to the proof of \Cref{thm:IND-pkTA_QPKE_from_SIG}. 

For IND-pkT-CPA security, we only explain the differences from the proof of \Cref{thm:IND-pkTA_QPKE_from_SIG} since it is very similar. We define Hybrid $0,1$, and $2$ similarly to those in the proof of \Cref{thm:IND-pkTA_QPKE_from_SIG}. For clarity, we describe them in \cref{hyb0_pure,hyb1_pure,hyb2_pure}. 

Assume that the IND-pkT-CPA security of our construction is broken by a QPT adversary
$\cA$. It means the QPT adversary $\cA$ wins Hybrid 0 with a non-negligible advantage.
Then, it is clear that there is another QPT adversary $\cA'$ that wins Hybrid 1  
with a non-negligible advantage. ($\cA'$ has only to do the Hadamard-basis measurement by itself.)

From the $\cA'$, we construct a QPT adversary $\cA''$ that wins 
Hybrid 2 with a non-negligible probability based on a similar proof to that in   \cref{sec:B}. 
Indeed, the proof is almost identical once we show that 
any QPT adversary given polynomially many copies of the public key can output a valid signature for a message that is not of the form $b\|r\|y(b,r)$ only with a negligible probability. 
To prove this, we consider a reduction algorithm that queries signatures on \emph{all} messages of the form $b\|r\|y(b,r)$. Thus, the reduction algorithm makes $2^{u+1}$ queries and runs in time $2^u\cdot \poly(\secp)$. Since we have $2^{u+1}<T$ and $2^u\cdot \poly(\secp)<T$ for sufficiently large $\secp$ by $u=\lfloor (\log T)/2 \rfloor$, which in particular implies $2^u\le T^{1/2}$, and $T=\secp^{\omega(1)}$, the reduction enables us to prove the above property assuming the $T$-strong EUF-CMA security of the digital signature scheme.\footnote{In the proof for the mixed state public key version in \Cref{sec:B}, the reduction algorithm only needs to query signatures on $b\|r$ for $r$'s used in one of the public keys given to the adversary.  On the other hand, in the pure state public key case, each public key involves all $r$'s and thus the reduction algorithm needs to query signatures on superpolynomially many messages. This is why we need superpolynomial security for the digital signature scheme. 
}

Thus, we are left to prove that no QPT adversary can win Hybrid 2 with a non-negligible probability.  
Let Hybrid 2' be a hybrid that works similarly to Hybrid 2 except that $y(b,r)$ is defined as  $y(b,r):=H(b\concat r)$ for a uniformly random function $H$ instead of $\PRF$. By the quantum-query security of $\PRF$, the winning probabilities in Hybrid 2' and Hybrid 2 are negligibly close. Thus, it suffices to prove the winning probability in Hybrid 2' is negligible. This is proven by a straightforward reduction to \Cref{lem:cannot_find_both} noting that $\ket{\psi_\sk}$ with the modification of $y(b,r)$ as above can be generated from  $\ket{\psi_H}=\sum_{R\in \bit^{u+1}}\ket{R}\ket{H(R)}$ by coherently running $\Sign$. 
This completes the proof of IND-pkT-CPA security.

\protocol{Hybrid 0}
{Hybrid 0}
{hyb0_pure}
{
\begin{enumerate}
    \item \label{item:initial_pure}
    $\cC$ runs $(k,\vk)\gets\Gen(1^\secp)$.
    $\cC$ sends $\vk$ to $\cA$.
  \item
    $\cC$ sends $\ket{\psi_{\sk}}^{\otimes m}$
    to the adversary $\cA$, where
   \begin{eqnarray}
    \ket{\psi_\sk}\coloneqq
    \sum_{r\in \bit^u}\ket{r}\otimes 
    \left( 
    \begin{array}{l}
\ket{0}\otimes\ket{y(0,r)} \otimes \ket{\sigma(0,r)}\\
    +\ket{1}\otimes\ket{y(1,r)} \otimes \ket{\sigma(1,r)}
      \end{array}
    \right)
    \end{eqnarray} 
    \item
    \label{Ar_pure}
    $\cA$ generates a quantum state over registers $(\regR,\regA,\regB,\regC,\regD)$. 
    ($(\regR,\regA,\regB,\regC)$ corresponds to the quantum part of $\pk'$, and $\regD$ corresponds to $\st$.)
    $\cA$ sends the registers $(\regR,\regA,\regB,\regC)$ to $\cC$.
    $\cA$ keeps the register $\regD$.
    \item
    \label{sign_pure}
    $\cC$ coherently checks the signature in the state sent from $\cA$.
    If the result is $\bot$, it sends $\bot$ to $\cA$ and halts.
    If the result is $\top$, it goes to the next step. 
   
    \item
    \label{bchosen_pure}
    $\cC$ chooses $b\gets \bit$.
    $\cC$ applies $Z^b$ on the register $\regA$.
 \item
    $\cC$ measures $\regR$ in the computational basis to get $r$. 
    \item
    $\cC$ measures all qubits in $(\regA,\regB,\regC)$ in the Hadamard basis to get the result $d$.
    $\cC$ sends $(r,d)$ to $\cA$.
    \item
    \label{last_pure}
    $\cA$ outputs $b'$.
    If $b'=b$, $\cA$ wins.
\end{enumerate}
}

\protocol{Hybrid 1}
{Hybrid 1}
{hyb1_pure}
{
\begin{itemize}
    \item[1.-6.]
    All the same as \cref{hyb0_pure}.
    \item[7.]
    $\cC$ does not do the Hadamard-basis measurement,
    and $\cC$ sends $r$ and registers $(\regA,\regB,\regC)$ to $\cA$.
    \item[8.]
    The same as \cref{hyb0_pure}.
\end{itemize}
}

\protocol{Hybrid 2}
{Hybrid 2}
{hyb2_pure}
{
\begin{itemize}
    \item[1.-7.]
    All the same as \cref{hyb1_pure}.
    \item[8.]
    $\cA$ outputs $(\mu_0,\mu_1)$.
    If $\mu_0=y(0,r)\|\sigma(0,r)$ and $\mu_1=y(1,r)\|\sigma(1,r)$, $\cA$ wins.
\end{itemize}
}
\end{proof}
\begin{remark}
    We can add decryption error detectability by \Cref{thm:add_decryption_error_detectability} and extend it to recyclable QPKE by the construction of \Cref{sec:recyclable}. These extensions preserve the property that public keys are pure states. 
\end{remark}

\subsection{Construction from Polynomially Secure OWFs}\label{sec:construction_polynomial}
We explain how to extend the construction in \Cref{sec:construction_superpolynomial} to base security on standard polynomial hardness of OWFs. We rely on a similar idea to the ``on-the-fly-adaptation'' technique introduced in  \cite{C:DotSch15}.  The reason why we need superpolynomial security in \Cref{sec:construction_superpolynomial} is that the reduction algorithm for the transition from Hybrid 1 to 2 has to make $2^{u+1}\approx 2T^{1/2}$ signing queries for a superpolynomial $T$.  Suppose that we set $T$ to be a polynomial. i.e., $T=\secp^c$ for some constant $c$. Then, the reduction algorithm for the transition from Hybrid 1 to 2 works under polynomial security of the digital signature scheme. The problem, however, is that we cannot show that the winning probability in Hybrid 2 is negligible: It can be only bounded by $(2m+1)^4(2^{-u}+2^{-v})$, which is not negligible since $2^{u}\approx T^{1/2}=\secp^{c/2}$. On the other hand, we can make it arbitrarily small inverse-polynomial by making $c$ larger. Based on this observation, we can show the following: 
Let $(\SKGen_c,\PKGen_c,\Enc_c,\Dec_c)$ be the QPKE scheme given in \Cref{sec:construction_superpolynomial} where $T:=\secp^c$. Then, for any polynomials $p$ and $m$, 
there exists a constant $c$ such that 
any QPT adversary given $m$ copies of the quantum public key 
has an advantage to break IND-pkT-CPA security of $(\SKGen_c,\PKGen_c,\Enc_c,\Dec_c)$ at most $1/p(\secp)$ for all sufficiently large $\secp$. 

Then, our idea is to parallelly run  $(\SKGen_c,\PKGen_c,\Enc_c,\Dec_c)$ for $c=1,2,...,\secp$ where the encryption algorithm generates a $\secp$-out-of-$\secp$ secret sharing of the message and encrypts $c$-th share under $\Enc_c$.\footnote{In fact, it suffices to parallelly run $(\SKGen_c,\PKGen_c,\Enc_c,\Dec_c)$ for $c=1,2,...,\eta(\secp)$ for any super-constant function $\eta$.} 
Suppose that this scheme is not IND-pkT-CPA secure. Then, there is a polynomial $q$ and QPT adversary $\cA$ given $m=\poly(\secp)$ copies of the quantum public key that has an advantage to break the IND-pkT-CPA security at least $1/q(\secp)$ for infinitely many $\secp$. 
For each $c$, 
it is easy to construct a QPT adversary $\cA_c$ that breaks IND-pkT-CPA security of $(\SKGen_c,\PKGen_c,\Enc_c,\Dec_c)$ with the same advantage as $\cA$'s advantage.  
On the other hand, 
by the observation explained above, 
we can take a constant $c$ (depending on $q$ and $m$) such that any QPT adversary given $m$ copies of the public key has an advantage to break IND-pkT-CPA security of $(\SKGen_c,\PKGen_c,\Enc_c,\Dec_c)$ at most $1/2q(\secp)$ for all sufficiently large $\secp$.  
This is a contradiction. Thus, the above scheme is IND-pkT-CPA secure.

\subsection{Proof of \Cref{lem:cannot_find_both}}\label{sec:proof_lemma}
For proving \Cref{lem:cannot_find_both}, we rely on the following lemma shown by \cite{EC:YamZha21}. 
\begin{lemma}[{\cite[Theorem 4.2]{EC:YamZha21}}]\label{lem:YZ}
Let $H:\calX\rightarrow \calY$ be a uniformly random function. 
Let $\A$ be an (unbounded-time) randomized algorithm that makes $q$ quantum queries to $H$ and outputs a classical string $z$. 
Let $\mathcal{C}$ be an (unbounded-time) randomized algorithm that takes $z$ as input, 
makes $k$ classical queries to $H$, and outputs $\top$ or $\bot$.
Let $\B$ be the following algorithm that makes at most $k$ classical queries to $H$: 
\begin{description}
\item[$\B^{H}()$:]~ It does the following:
\begin{enumerate}
\item Choose a function $H':\calX\rightarrow \calY$ from a family of $2q$-wise independent hash functions.
    \item 
    For each $j\in[k]$, uniformly pick $(i_j,b_j)\in ([q]\times \bit) \cup \{(\bot,\bot)\}$ under the constraint that there does not exist $j\neq j'$ such that $i_j=i_{j'}\neq \bot$.
    \item Initialize a stateful oracle $\mathcal{O}$ to be a quantumly-accessible classical oracle that computes $H'$.
    \item Run $\A^{\mathcal{O}}()$ where $\mathcal{O}$ is simulated as follows.
   When $\A$ makes its $i$-th query, the oracle is simulated as follows:
    \begin{enumerate}
      \item If $i=i_j$ for some $j\in[k]$, 
        measure $\A$'s  query register to obtain $x'_j$, query $x'_j$ to the random oracle $H$ to obtain $H(x'_j)$, and do either of the following.
        \begin{enumerate}
        \item If $b_j=0$, reprogram 
        $\mathcal{O}$ to output $H(x'_j)$ on $x'_j$ and answer $\A$'s $i_j$-th query by using the reprogrammed oracle. 
        \item If $b_j=1$, answer  $\A$'s $i_j$-th query by using the oracle before the reprogramming and then reprogram 
        $\mathcal{O}$ to output $H(x'_j)$ on $x'_j$. 
        \end{enumerate}
     \item Otherwise, answer $\A$'s $i$-th query by just using the oracle $\mathcal{O}$ without any measurement or  reprogramming.
    \end{enumerate}
    \item Output whatever $\A$ outputs.
\end{enumerate}
\end{description}
Then we have 
\begin{align}
     \Pr_H[\mathcal{C}^H(z)=\top :z\gets \B^{H}()] \geq \frac{1}{(2q+1)^{2k}}\Pr_H[\mathcal{C}^H(z)=\top :z\gets \A^{H}()]. \label{eq:lifting}
\end{align}
\end{lemma}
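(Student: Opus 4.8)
The final statement is the Yamakawa--Zhandry lifting theorem \cite{EC:YamZha21}, which converts a $q$-query quantum algorithm $\A$ feeding into a $k$-classical-query checker $\mathcal{C}$ into the purely classically-querying algorithm $\B$. The plan is to reconstruct its proof from two pillars: (i) the measure-and-reprogram technique (originating with Don--Fehr--Majenz--Schaffner), which extracts a single classical point from $\A$'s quantum queries while reprogramming the simulated oracle, at the cost of a multiplicative $(2q+1)^{-2}$ factor; and (ii) Zhandry's perfect indistinguishability of a uniformly random function from a $2q$-wise independent function against any $q$-query algorithm, which is exactly what justifies having $\B$ simulate $\A$'s oracle with the $2q$-wise independent $H'$ instead of the true $H$. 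The overall loss $(2q+1)^{-2k}$ in \Cref{eq:lifting} is then produced by applying the single-point extraction once for each of the $k$ query slots of $\mathcal{C}$.

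First I would fix notation, writing $\A$'s execution as $U_q\,\mathsf{O}\,U_{q-1}\cdots \mathsf{O}\,U_0\ket{0}$ with oracle unitaries $\mathsf{O}$ interleaved between the $U_i$, and regarding the goal event as ``$\mathcal{C}^H(z)=\top$'' with the randomness taken over $H$ and the coins of $\A$ and $\mathcal{C}$. The key observation is that $\mathcal{C}$ reads $H$ only at the $k$ adaptively chosen points it queries, so the success indicator depends on $H$ only through those $k$ values; hence it suffices to reprogram the simulated oracle $\mathcal{O}$ to agree with the real $H$ at exactly those points. I would then prove by induction on $k$ a statement of the shape: for each slot $j$, drawing $(i_j,b_j)$ uniformly from $([q]\times\bit)\cup\{(\bot,\bot)\}$ and performing the measure-and-reprogram extraction at $\A$'s query $i_j$ recovers the event that this query ``hits'' $\mathcal{C}$'s $j$-th query point and that $\mathcal{O}$ has been made $H$-consistent there, losing a factor $(2q+1)^{-2}$ per slot. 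The base case $k=0$ is immediate, and the inductive step is a single application of the measure-and-reprogram lemma to the $j$-th point, folding the already-reprogrammed $\mathcal{O}$ and the remaining checker queries into the verification predicate.

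The main obstacle, and where the bulk of the work lies, is composing the $k$ adaptive extractions cleanly. Concretely I must: (a) ensure the extraction indices $i_1,\dots,i_k$ are distinct --- precisely the constraint imposed in $\B$'s definition --- since distinct checker query points should be extracted from distinct $\A$-queries, and argue that restricting to distinct indices costs nothing beyond the per-slot factor; (b) control the interaction between reprogramming $\mathcal{O}$ and the fact that $\mathcal{O}$ computes $H'$ away from the reprogrammed points, invoking $2q$-wise independence to guarantee these reprogrammings are invisible to $\A$'s $q$ queries and therefore do not perturb the extracted amplitudes; and (c) carry each before/after bit $b_j$ through the amplitude bookkeeping so that the sum over the $(2q+1)^k$ index choices and $2^k$ bit choices telescopes to the true success probability, delivering the clean $(2q+1)^{-2k}$. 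I expect (c) --- telescoping amplitudes across adaptive queries while preserving the quadratic loss at each step --- to be the delicate part, handled by a hybrid argument over the position of the ``first disagreement'' between the reprogrammed and unreprogrammed runs, as in the single-query analysis but iterated $k$ times.
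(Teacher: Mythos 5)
You cannot be checked against an internal proof here, because the paper does not prove this lemma: it is imported verbatim from \cite[Theorem 4.2]{EC:YamZha21}, and the remark following the statement only records statement-level simplifications (no inputs to $\A$ and $\B$, non-interactive setting, and making the description of $\B$ explicit). So the only meaningful benchmark is the Yamakawa--Zhandry proof itself. Measured against that, your outline has the right skeleton: the per-slot loss of $(2q+1)^{-2}$ obtained by a telescoping decomposition over the query index at which reprogramming is inserted (with the bit $b_j$ choosing whether that query is answered before or after the patch) followed by a Cauchy--Schwarz step, iterated $k$ times with distinct extraction indices, is exactly the multi-point measure-and-reprogram machinery of Don--Fehr--Majenz (``measure-and-reprogram 2.0''), which \cite{EC:YamZha21} adapt. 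In an application paper the honest proof is a citation, which is what this paper does; if you insist on self-containment you are committing to re-deriving that entire framework, not just invoking a ``single-point lemma'' as a black box $k$ times.

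Two substantive corrections. First, you misassign the role of the $2q$-wise independent family: it is not what justifies simulating $\A$ with $H'$ instead of $H$. Soundness of simulating with an oracle \emph{independent} of $H$ is the content of the measure-and-reprogram theorem itself (as you note, $\mathcal{C}$ reads $H$ only at the $k$ patched points); a truly random function independent of $H$ would serve identically, and Zhandry's lemma merely says a $2q$-wise independent $H'$ is perfectly equivalent to such a function against $q$ quantum queries, giving $\B$ an explicitly samplable oracle (this matters for efficiency in applications, not for the probability bound). Second, the genuinely hard point, which your plan gestures at in step (c) but does not resolve, is that the $k$ points here are \emph{not} output by $\A$: they are chosen adaptively by $\mathcal{C}$, after $\A$ has finished, as a function of $z$, of $H$, and of $\mathcal{C}$'s earlier answers. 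The off-the-shelf multi-input measure-and-reprogram theorem extracts points that the adversary itself outputs, so bridging this gap is precisely the work of the lifting theorem in \cite{EC:YamZha21}. In particular your induction must correctly handle the $(i_j,b_j)=(\bot,\bot)$ branch, in which $\mathcal{O}$ is never made $H$-consistent at $\mathcal{C}$'s $j$-th query point even though $\mathcal{C}$ reads the true $H$ there; the inequality survives because that branch absorbs the component of the amplitude decomposition in which $\A$'s queries carry no weight on that point, an accounting that lives inside the global decomposition rather than in a run-by-run ``first disagreement'' hybrid, and it is not obvious that your proposed hybrid composes with the adaptive choice of points. Until that branch and the adaptivity are carried through explicitly, the sketch is a plausible road map for reconstructing the known proof, not yet a proof.
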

\begin{remark}
    There are the following  differences  from \cite[Theorem 4.2]{EC:YamZha21} in the statement of the lemma:
    \begin{enumerate}
        \item They consider inputs to $\A$ and $\B$. We omit them because this suffices for our purpose. 
        \item They consider a more general setting where $\A$ and $\B$ interact with $\mathcal{C}$. We focus on the non-interactive setting. 
        \item They do not explicitly write how $\B$ works in the statement of \cite[Theorem 4.2]{EC:YamZha21}. But this is stated at the beginning of its proof. 
    \end{enumerate}
\end{remark}

Using the above lemma, it is easy to prove \Cref{lem:cannot_find_both}.
\begin{proof}[Proof of \Cref{lem:cannot_find_both}]
For an algorithm $\A$ in \Cref{lem:cannot_find_both}, 
let $\tilde{\A}$ be an oracle-aided algorithm that generates $m$ copies of $\ket{\psi_H}$ by making $m$ quantum queries to $H$ on uniform superpositions of inputs and then runs $\A(\ket{\psi_H}^{\otimes m})$. 
Let $\mathcal{C}$ be an oracle-aided algorithm that takes $z=(r,y_0,y_1)$ as input, makes two classical queries $0\|r$ and $1\|r$ to $H$, and outputs $\top$ if and only if $y_0=H(0\concat r)$ and $y_1=H(1\concat r)$.  
By \Cref{lem:YZ}, 
we have 
\begin{align}
     &\Pr_H[ y_0=H(0\concat r)~\land~y_1=H(1\concat r) :(r,y_0,y_1)\gets \tilde{\B}^{H}()]\\ \geq 
     &\frac{1}{(2m+1)^{4}}\Pr_H[ y_0=H(0\concat r)~\land~y_1=H(1\concat r) :(r,y_0,y_1)\gets \tilde{\A}^{H}()] \label{eq:lifting_application}
\end{align}
where $\tilde{\B}$ is to $\tilde{\A}$ as $\B$ (defined in \Cref{lem:YZ}) is to $\A$. 
By the definition of $\tilde{\B}$, it just makes at most two classical queries to $H$ on independently and uniformly random inputs $R_0,R_1$. The probability that we happen to have $\{R_0,R_1\}=\{0\| r,1\|r\}$ for some $r\in \bit^u$ is $2^{-u}$. Unless the above occurs, either  $H(0\concat r)$ or $H(1\concat r)$ is uniformly random to
$\tilde{\B}$ for all $r$, and thus 
the probability that its output satisfies $ y_0=H(0\concat r)$ and $y_1=H(1\concat r)$ is at most $2^{-v}$.
Thus, we have 
\begin{align}
    \Pr_H[ y_0=H(0\concat r)~\land~y_1=H(1\concat r) :(r,y_0,y_1)\gets \tilde{\B}^{H}()]\le 2^{-u}+2^{-v}. \label{eq:upper_bound_tilde_B}
\end{align}
Combining \Cref{eq:lifting_application,eq:upper_bound_tilde_B}, we obtain \Cref{lem:cannot_find_both}. 
\end{proof}

\fi
\fi

\end{document}